\newtheorem{theorem}{Theorem}
\newtheorem{proposition}[theorem]{Proposition}
\newtheorem{lemma}[theorem]{Lemma}
\newtheorem{corollary}[theorem]{Corollary}
\newtheorem{definition}[theorem]{Definition}
\newtheorem{example}[theorem]{Example}
\def \endprf{\hfill {\vrule height6pt width6pt depth0pt}\medskip}
\newenvironment{proof}{\noindent {\bf Proof} }{\endprf\par}
\numberwithin{equation}{section}
\newcommand{\codebook}{\mathcal{C}} 
\newcommand{\conv}{\operatorname{conv}}
\newcommand{\fundpoly}{\mathcal{P}}
\newcommand{\fundcone}{\mathcal{K}}
\newcommand{\bfxpseudo}{\bfx^{(p)}} 
\newcommand{\xpseudo}{x^{(p)}} 
\newcommand{\lmbpara}{\lambda}
\newcommand{\penpara}{\mu}
\newcommand{\sgn}{\operatorname{sgn}}
\newcommand{\inst}{\operatorname{inst}}
\newcommand{\argmax}{\operatornamewithlimits{argmax}}
\newcommand{\argmin}{\operatornamewithlimits{argmin}}
\newcommand{\Ne}{\mathcal{N}}
\newcommand{\PP}{\mathbb{PP}} 
\newcommand{\Proj}{\Pi}
\newcommand{\SNR}{\mathrm{SNR}}
\newcommand{\mbf}{\boldsymbol}
\newcommand{\bfH}{\mbf{H}}
\newcommand{\bfP}{\mbf{P}}
\newcommand{\bfB}{\mbf{B}}
\newcommand{\bfzero}{\mathbf{0}}
\newcommand{\bfomega}{\mbf{\omega}}
\newcommand{\bfgamma}{\mbf{\gamma}}
\newcommand{\bflambda}{\mbf{\lambda}}
\newcommand{\eye}{\mbf{I}}
\newcommand{\bfa}{\mbf{a}}
\newcommand{\bfc}{\mbf{c}}
\newcommand{\bfn}{\mbf{n}}
\newcommand{\bfu}{\mbf{u}}
\newcommand{\bfv}{\mbf{v}}
\newcommand{\bfw}{\mbf{w}}
\newcommand{\bfx}{\mbf{x}}
\newcommand{\bfX}{\mbf{X}}
\newcommand{\bfy}{\mbf{y}}
\newcommand{\bfY}{\mbf{Y}}
\newcommand{\bfz}{\mbf{z}}
\newcommand{\mcR}{\mathsf{R}}
\newcommand{\mcZ}{\mathsf{Z}}
\newcommand{\mcS}{\mathcal{S}}
\newcommand{\mcB}{\mathcal{B}}
\newcommand{\mcL}{\mathcal{L}}
\newcommand{\mcT}{\mathcal{T}}
\renewcommand{\lmbpara}{\bflambda}
\newcommand{\degi}{|\Ne_v(i)|}
\newcommand{\w}{w}
\newcommand{\bfww}{\mbf{\w}}
\newcommand{\cI}{\mathcal{I}}
\newcommand{\cJ}{\mathcal{J}}
\newcommand{\inputsp}{\mathcal{X}}
\newcommand{\outputsp}{\Sigma}
\newcommand{\bfllr}{\bfgamma}
\newcommand{\llr}{\gamma}
\renewcommand{\min}{\operatornamewithlimits{minimize}}
\newcommand{\suchthat}{\operatorname{subject~to}}
\newcommand{\blocklength}{N}
\newcommand{\checknumber}{M}
\newcommand{\C}{\theta}
\newcommand{\F}{\mathcal{F}}
\newcommand{\pLP}{LP decoding\xspace}
\newcommand{\pPD}{\textbf{ADMM-PD}\xspace}
\newcommand{\pPLP}{\textbf{L1PD}\xspace}
\newcommand{\pRLP}{\textbf{RLPD}\xspace}
\newcommand{\pRLPKDHVB}{\textbf{RLPD-KDHVB}\xspace}
\newcommand{\isapd}{\textbf{ISA-PD}\xspace}
\newcommand{\isarf}{\textbf{ISA-R}\xspace}
\newcommand{\error}{\mathrm{error}}
\newcommand{\new}{\mathrm{new}}
\newcommand{\dB}{\mathrm{dB}}
\renewcommand{\Pr}{P}
\begin{document}
\title{The ADMM penalized decoder for LDPC codes\thanks{This material was presented in part at the 2012 Allerton Conference on Communication, Control, and Computing, Monticello, IL, Sept., 2012. 
This material was also presented in part at the 2012 Information Theory Workshop (ITW), Lausanne, Switzerland, Sept., 2012. This material was also presented in part at the 2014 International Symposium on Information Theory, Honolulu, HI, July, 2014. This work was supported by the National Science
Foundation (NSF) under Grants CCF-1217058 and by a Natural Sciences and Engineering Research Council of Canada (NSERC) Discovery Research Grant. This paper was submitted to \emph{IEEE Trans. Inf. Theory.}}}

\author{Xishuo Liu
\thanks{X.~Liu is with the Dept.~of Electrical and Computer
  Engineering, University of Wisconsin, Madison, WI 53706
  (e-mail: xliu94@wisc.edu).}  
, Stark C. Draper
\thanks{S.~C.~Draper is with the Dept.~of Electrical and Computer
  Engineering, University of Toronto, ON M5S 3G4, Canada (e-mail: stark.draper@utoronto.ca).}
}

\maketitle

\begin{abstract}
Linear programming (LP) decoding for low-density parity-check (LDPC)
codes proposed by Feldman \textit{et al.} is shown to have theoretical guarantees in
several regimes and empirically is not observed to suffer from an error floor. 
However at low signal-to-noise ratios (SNRs), LP decoding is observed to have worse error
performance than belief propagation (BP) decoding. In this paper, we seek to improve 
LP decoding at low SNRs while still achieving good high SNR performance. We first
present a new decoding framework obtained by trying to solve a
non-convex optimization problem using the alternating direction method
of multipliers (ADMM). This non-convex problem is constructed by
adding a penalty term to the LP decoding objective. The goal
of the penalty term is to make ``pseudocodewords'', which are the
non-integer vertices of the LP relaxation to which the LP decoder
fails, more costly.  We name this decoder class the ``ADMM penalized decoder''. 
In our simulation results, the ADMM penalized decoder with $\ell_1$ and $\ell_2$ penalties
outperforms both BP and LP decoding at all SNRs. For high SNR regimes where it is
infeasible to simulate, we use an instanton analysis and show that the ADMM penalized decoder has better high SNR
performance than BP decoding. We also develop a reweighted LP decoder using linear approximations to
the objective with an $\ell_1$ penalty. We show that this decoder has an improved theoretical recovery threshold compared to LP decoding. In addition, we show that the empirical gain of the reweighted LP decoder is significant at low SNRs.
\end{abstract}

\section{Introduction}
\label{sec.introduction}
Low-density parity-check (LDPC) codes, when paired with message
passing decoding algorithms, can perform close to the Shannon limit.
However, message passing is not the only choice for decoding LDPC
codes. The linear programming (LP) decoder proposed by Feldman
\textit{et al}.~\cite{feldman2005using} brings LDPC decoding to the
realm of convex optimization. It has been shown that LP decoding is
better than message passing algorithms in many aspects. First, LP
decoding has an ``ML certificate'' property~\cite{feldman2005using}. 
That is, if the solution
from an LP decoder is an integral solution then it is the maximum
likelihood (ML) solution. Second, LP decoding provides better
theoretical guarantees. Message passing algorithms are not guaranteed
to converge and are observed to suffer from error floors. LP
decoding, on the other hand, is guaranteed to correct some fraction of
errors in bit flipping channels (see e.g.~\cite{feldman2007lp,
  daskalakis2008probabilistic, arora2012message}) and, empirically,
has not been observed to suffer from an error floor 
for some practical long block length codes (cf.~\cite{barman2013decomposition}).

Despite having these advantages, LP decoding has much higher
computational complexity when implemented with generic LP
solvers. Some efficient LP decoding algorithms have been proposed,
such as~\cite{vontobel2006towards, taghavi2008adaptive,
  burshtein2009iterative, taghavi2011efficient, barman2013decomposition, zhang2013efficient, zhang2013large}. Among these
works,~\cite{ barman2013decomposition}
developed a distributed, scalable approach based on the alternating
direction method of multipliers
(ADMM). In~\cite{barman2013decomposition},
the authors exploit the geometry of the LP decoding constraints and
develop a decoding algorithm that has a message passing
interpretation. The simulation results
from~\cite{barman2013decomposition} reveal a drawback for LP decoding:
for all codes simulated in~\cite{barman2013decomposition}, LP decoding performs worse than BP
decoding at low SNRs in terms of word-error-rate (WER) performance. This suggests that an ideal decoder should
perform like BP at low SNRs and have no error floor effects at high
SNRs.

In this paper, we focus on improving LP decoding by changing the LP decoding objective. 
We first propose a class of decoders that we term the \emph{ADMM penalized decoders}. We then
derive another class of decoders based on the ADMM penalized decoder called the 
\emph{reweighted LP decoder}. 
All these new decoders are able to overcome the low SNR disadvantages
of LP decoding while keeping an important aspect of LP decoding: 
both decoder classes are not observed to suffer from an error floor. Furthermore, although not as efficient 
as the ADMM penalized decoder, the reweighted LP decoder
has a theoretical guarantee on error correction capability.

The key idea of the ADMM penalized decoder is to introduce a penalty term to the objective of LP
decoding. This penalty term penalizes non-integral
solutions by having the property of costing less at $0$ and $1$.
It is worth mentioning that the ADMM algorithm can be applied to generic constraint 
satisfaction problems~\cite{derbinsky2013improved}. Connecting to the generic framework
in~\cite{derbinsky2013improved} the ADMM penalized
decoder can be considered as having an objective 
that is the sum of channel evidences and a ``soft constraint'', i.e. the penalty term, enforcing variables to be in the set $\{0,1\}$, thereby
avoiding fractional solutions.
It is also well known that sum-product BP decoding is related to minimizing the
\textit{Bethe free energy} at temperature equals to one~\cite{vontobel2013counting}. The Bethe free energy is the sum of the LP
decoding objective and the \textit{Bethe entropy function}. Our
method can be considered as replacing the Bethe entropy function with a controllable penalty
term to retain the high SNR performance of LP decoding. 
The choice of this penalty is not unique. In this work,
we focus on using the $\ell_1$ norm and the $\ell_2$ norm as penalties. 
We then apply ADMM to this optimization problem and obtain a low complexity algorithm similar to~\cite{barman2013decomposition}. When equipped with efficient projection onto parity polytope sub-routine already 
developed in~\cite{barman2013decomposition} (and subsequently improved in~\cite{zhang2013efficient, zhang2013large}), 
we demonstrate that the decoding speed of the ADMM penalized decoder is much faster than our baseline 
sum-product BP decoder.

Although low SNR error performance can be verified through simulation, it is infeasible to do run simulations at high SNRs.
This means that we are uncertain about the high SNR behavior of the decoder, which becomes a crucial problem when applying this decoder to ultra-low error rate applications such as fiber-optic channels and data storage. To study the performance of the decoder at high SNRs, we adopt a numerical approach following the methodology in~\cite{stepanov2006instanton} where the concept of instanton is used. Instantons are the most probable noise vectors causing the decoder to fail and by searching for these noise vectors for a decoder/code pair, 
one can predict the high SNR behavior of that decoder/code pair (see also~\cite{chilappagari2009instanton}). 
We propose an instanton search algorithm for the ADMM penalized decoder and then apply it to the $[155,64]$ Tanner code~\cite{tanner2001class} and a $[1057, 813]$ LDPC code~\cite{Database}. We show that the instanton information we obtained provides good predictions for WER curve at high SNRs. In addition, our results suggest that the ADMM penalized decoder can suffer from trapping sets, which have been widely studied for message passing decoders (e.g.~\cite{mackayPostol:03}). This means that performance of the ADMM penalized decoder can be improved by code design techniques that remove trapping sets (e.g.~\cite{wang2013hierarchical}).

Unlike LP decoding, it is hard to prove guarantees for the ADMM penalized decoder
 because the latter is trying to solve a non-convex objective. 
Motivated by this problem, we develop a reweighted LP approximation of the penalized objective. 
By taking the linear term from the Taylor expansion of the $\ell_1$ penalized objective, we develop a new
reweighted LP decoder. Compared to the ADMM penalized decoder, the reweighted LP decoder has
worse empirical WER performance and higher computational complexity. However we are able to show 
that this decoder has an improved theoretical recovery threshold
compared to LP decoding for bit flipping channels. In addition, reweighted LP decoding still improves LP decoding significantly in terms of SNR performance. 

We note that some other authors have sought to improve the error performance
of LP decoding. One main focus in the literature has been tightening the polytope
under consideration (e.g.~\cite{tanatmis2010aseparation,dimakis2009guessing,yufit2011efficient, zhang2012adaptive}).
It is also possible to improve LP decoding by branch-and-bound type of algorithms such as those proposed in~\cite{yufit2011efficient, draper2007ml}. We refer interested readers to~\cite{helmling2012mathematical} for 
a more comprehensive review on methods to improve LP decoding. In an inspiring paper~\cite{khajehnejad2012reweighted}, Khajehnejad \emph{et al.} 
propose solving a second, reweighted LP, if LP decoding fails. This reweighted LP is constructed using the 
fractional output of the LP decoder, i.e., the pseudocodeword solution. Khajehnejad \emph{et al.} not only proved some useful theoretical tools for LP decoding but also showed via empirical results that their reweighted LP decoding improves LP decoding significantly.
Compared to the algorithm by Khajehnejad \emph{et al.} in~\cite{khajehnejad2012reweighted}, our reweighted LP decoding has two advantages: First, although the theorems we prove are for bit flipping channels, our reweighted LP decoding algorithm can also be applied to the AWGN channel. Second, our algorithm allows multiple rounds of reweighting in order to achieve better error performance, albeit without the guarantees of the first reweighted step. 

We summarize our major contributions in this paper:
\begin{itemize}
\item We introduce a new class of decoding objectives constructed by penalizing the LP decoding objective. (Section~\ref{subsec.generalpenalty})
\item We derive ADMM algorithms for decoders that either use the $\ell_1$ norm or the $\ell_2$ norm as a penalty function. (Section~\ref{subsection.penaltyfunctions}) 
\item We show that ADMM penalized decoding (i) improves LP decoding significantly in terms of both WER and decoding complexity and (ii) has high SNR behavior similar to LP decoding at least for the codes and SNRs we simulated. The proposed decoders achieve or outperform BP decoding at all SNRs in terms of both WER and decoding complexity. (Section~\ref{sec.numerical_penalty}) 
\item We develop an instanton search algorithm for the ADMM penalized decoder for the AWGN channel. (Section~\ref{subsection.instanton_search_algorithm})
\item  We show that instanton information can be used to predict the slope of the WER curve at high SNRs. Further, we observe that failures of the ADMM penalized decoder are related to trapping sets. (Section~\ref{sec.sim.instanton})
\item We develop a reweighted LP decoder that has a theoretical guarantee for bit flipping channels. (Section~\ref{sec.reweightedLP} and~\ref{subsec.rlp_thm}) 
\item We show by simulation results that our reweighted LP decoding outperforms LP decoding significantly in terms of SNR performance. (Section~\ref{sec.numerical.reweightedlp})
\end{itemize}

\section{Preliminaries}
\label{sec.LPprelim}
\subsection{LP decoding of LDPC codes}
Through out this paper, we consider LDPC codes of length
$\blocklength$ each specified by an $\checknumber\times\blocklength$
parity check matrix $\bfH$. Let $\cI:=\{1,2,\dots,\blocklength\}$
denote the index set of codeword symbols and
$\cJ:=\{1,2,\dots,\checknumber\}$ denote the index set of checks.  Let
$d_j,j\in\cJ$ be the degree of check node $j$. Let $\Ne_v(i)$ and 
$\Ne_c(j)$ be the sets of neighbors of variable node $i$ and check node $j$ respectively. 
We use $|\cdot|$ to denote the cardinality of a set and hence $d_j = |\Ne_c(j)|$.
Consider memoryless binary-input
symmetric-output channels, let $\inputsp = \{0,1\}$ be the
input space and $\outputsp$ the output space. Let $W:\inputsp \times
\outputsp \mapsto \mathbb{R}_{\geq0}$ be the channel law defined by
conditional probability $W(y|x)$, where $y\in \outputsp$,
$x\in\inputsp$. Since the channel is memoryless,
$\Pr_{\bfY|\bfX}(\bfy|\bfx) = \prod_{i = 1}^\blocklength W(y_i|x_i)$ for
$\bfx\in\inputsp^\blocklength, \bfy \in\outputsp^\blocklength $. Using these notations, ML decoding can be
stated as
  \begin{equation*}
 	 \begin{split}
	\min \quad \bfllr^T\bfx\quad
	\suchthat \quad  \bfx \in \codebook,
	\end{split}
  \end{equation*}
where $\bfgamma$ is the negative log-likelihood ratio, defined by
$\llr_i = \log\left(\frac{W(y_i|0)}{W(y_i|1)}\right)$ and $\codebook$
is the set of possible codewords. Feldman \textit{et al}. relax this
integer program to the following linear
program~\cite{feldman2005using}
  \begin{equation*}
 	 \begin{split}
	\min &\quad  \bfllr^T\bfx \quad\\
\suchthat &\quad  \bfx \in \fundpoly := \bigcap_{j\in\cJ}\conv(\codebook_j),
	\end{split}
  \end{equation*}
where $\conv(\cdot)$ is the operation of taking convex hull and
$\conv(\codebook_j)$ is the convex hull of codewords
 defined by the $j$-th row of the parity check
matrix $\bfH$. $\fundpoly$ is called the \textit{fundamental
  polytope}. An equivalent compact expression is
 \vspace{0.2cm}\\
\framebox[0.99\columnwidth]{
 \addtolength{\linewidth}{-4\fboxsep}%
 \addtolength{\linewidth}{-4\fboxrule}%
 \begin{minipage}{\linewidth}
  \begin{equation}
  \label{eq.lpfeldman}
 	 \begin{split}
	\mbox{\pLP:}\qquad\min &\quad  \bfllr^T\bfx \quad\\
\suchthat &\quad  \bfP_j \bfx \in \PP_{d_j},\forall j\in\cJ.
	\end{split}
  \end{equation}
 \end{minipage}
}\vspace{0.2cm} 

Here $\bfP_j$ is the operation of selecting the
sub-vector of $\bfx$ that participates in the $j$-th check. $\PP_{d_j}$ is
the parity polytope of dimension $d_j$. It is define as the convex hull
of all even-parity binary vectors of length $d_j$; in other words, the
convex hull of all codewords of the length-$d_j$ single parity check
code.

In the BSC, error performance of LP decoding can be improved when
a second (reweighted) LP is used whenever \pLP fails~\cite{khajehnejad2012reweighted}.
We denote this reweighted LP as \pRLPKDHVB(Reweighted LP decoding by Khajehnejad-Dimakis-Hassibi-Vigoda-Bradley).
\vspace{0.2cm}\\
\framebox[0.99\columnwidth]{
 \addtolength{\linewidth}{-4\fboxsep}%
 \addtolength{\linewidth}{-4\fboxrule}%
 \begin{minipage}{\linewidth}
\begin{equation}
\label{eq.rlp_kdhb}
\begin{split}
\pRLPKDHVB :& \\
\min & \quad c_1\Vert (\bfx - \bfy)_\mcT\Vert_1 + c_2 \Vert (\bfx - \bfy)_{\mcT^c}\Vert_1\\
 \suchthat &\quad  \bfP_j \bfx \in \PP_{d_j},\forall j\in\cJ.
 \end{split}
\end{equation}
where $c_1 < 0$ and $c_2>0$ are constants, $\mcT \subset \{1,2,\dots,\blocklength\}$ is a fixed subset and $\mcT^c$ is the compliment of $\mcT$.\footnote{The set $\mcT$ is chosen using the both the pseudocodeword and the channel output. See details in~\cite{khajehnejad2012reweighted}.}
\end{minipage}
}\vspace{0.2cm}

Note that this is still a linear program since for a given fixed $\bfy \in \{0,1\}^{\blocklength}$, the objective is affine linear with respect to $\bfx$. 

\subsection{ADMM formulation for general optimization decoding problem}
We now turn to review an efficient algorithm for solving the problems mentioned
above, namely, the ADMM formulation presented in~\cite{barman2013decomposition}. 
We generalize this formulation to 
optimization problems over the fundamental polytope $\fundpoly$, which then 
enables us to reformulate \eqref{eq.rlp_kdhb} easily. We
refer the readers to~\cite{boyd2010distributed} for a comprehensive
description of ADMM. When using ADMM, a crucial subroutine is the projection onto $\PP_{d_j}$.
Barman \textit{et al.} characterized the geometry of $\PP_{d_j}$ 
in~\cite{barman2013decomposition} and develop
an efficient algorithm for projecting onto this object. Some recent works
further improved the projection algorithm. Zhang and Siegel proposed an algorithm
based on cuts in~\cite{zhang2013efficient}. Zhang \textit{et al.} further
leveraged the ``two-slice'' property developed in~\cite{barman2013decomposition}
and presented a linear time algorithm in~\cite{zhang2013large}.

To cast the LP decoding problem into the
framework of ADMM, the key is to introduce \textit{replica} vectors
$\bfz_j$, for all $j\in\cJ$ and write \eqref{eq.lpfeldman} in the following,
equivalent, form:
\begin{equation*}
\begin{split}
\min\quad & \bfgamma^T\bfx \\
\suchthat \quad & \forall j\in\cJ, \bfP_j \bfx = \bfz_j,\\
& \bfz_j \in \PP_{d_j}.
\end{split}
\end{equation*}
We now generalize the objective function above as
\begin{equation*}
\begin{split}
\min\quad & f(\bfx) \\
\suchthat\quad & \forall j\in\cJ, \bfP_j \bfx = \bfz_j,\\
& \bfz_j \in \PP_{d_j}.
\end{split}
\end{equation*}
Note that $f(\cdot)$ is not necessarily convex. The augmented Lagrangian~\cite{boyd2010distributed} for the above
problem is
\begin{equation*}
\mcL_{\penpara}(\bfx,\bfz,\lmbpara) = f(\bfx) + \sum_j \lmbpara_j(\bfP_j\bfx-\bfz_j) + \frac{\penpara}{2}\sum_j\Vert \bfP_j\bfx-\bfz_j\Vert_2^2,
\end{equation*}
where $\penpara$ is some constant. The ADMM update rules are
\begin{align}
\bfx\text{-update: }\bfx^{k+1} &= \argmin_{\bfx} \mcL_{\penpara}(\bfx,\bfz^{k},\lmbpara^{k}), \label{eq.admm_x_update}\\
\bfz\text{-update: }\bfz^{k+1} &= \argmin_{\bfz} \mcL_{\penpara}(\bfx^{k+1},\bfz,\lmbpara^{k}), \label{eq.admm_z_update}\\
\lmbpara\text{-update: }\lmbpara^{k+1} &=\lmbpara_j^k + \penpara\left(\bfP_j\bfx^{k+1}-\bfz_j^{k+1}\right). \label{eq.admm_lambda_update}
\end{align}

Note that \eqref{eq.admm_z_update} is minimizing over $\bfz_j$ and
therefore $f(\bfx)$ can be considered as a constant that does not
affect the solution to the problem. Hence the $\bfz$-update is identical to
that in~\cite{barman2013decomposition}. 
\eqref{eq.admm_lambda_update} is also not related to $f(\cdot)$. Therefore when dealing with
arbitrary objective function $f(\cdot)$, we need only modify the
$\bfx$-update,
vis-\`{a}-vis~\cite{barman2013decomposition}.
At this point, we should be very careful that $\mcL_{\penpara}(\bfx,\bfz,\lmbpara)$ should have a well
defined minimizer with respect to $\bfx$, provided that $\bfz$ and 
$\bflambda$ are fixed. This will allow us to determine $\bfx$-update uniquely.

Rephrasing \eqref{eq.admm_x_update} as a coordinate-wise update rule
for $\bfx$, we get:
\begin{equation}
\label{eq.admm_x_elementwise_update}
x_i = \Proj_{[0,1]}\left[\frac{1}{\degi}\left(\sum_j \left(\bfz_j^{(i)} - \frac{\lmbpara_j^{(i)}}{\penpara}\right) - \frac{1}{\penpara}\left(\nabla_{\bfx} f\right)_i\right)\right],
\end{equation}
where the superscript ``$i$'' denotes the entries in $\bfz_j,\bflambda_j$ that are associated with variable $i$ (cf.~\cite{barman2013decomposition}). $\Proj_{[0,1]}$ is the operation of projecting the results onto $[0,1]$.
For any objective function $f(\cdot)$, we can derive $\nabla_{\bfx} f$ to determine the update rule. 
\begin{example}
(Reweighted LP decoding in~\cite{khajehnejad2012reweighted})
Let $f(\bfx)$ be the objective in~\eqref{eq.rlp_kdhb}. Since $x_i \in [0,1]$, if $y_i = 0$ then $|x_i - y_i| = x_i$. If $y_i = 1$, then $|x_i - y_i| = - x_i$. Therefore $\left(\nabla_{\bfx}f\right)_i = c_v$ if $y_i = 0$ where $v = 1$ if $i\in \mcT$ and $v = 2$ if $i\in \mcT^c$. Similarly, $\left(\nabla_{\bfx}f\right)_i = -c_v$ if $y_i = 1$. Combining the two cases, we obtain the $\bfx$-update rule in~\eqref{eq.amin_reweightedlp_x_update}.
\begin{equation}
\label{eq.amin_reweightedlp_x_update}
x_i = 
\begin{cases} \Proj_{[0,1]}\left[\frac{1}{\degi}\left(\sum_j \left(\bfz_j^{(i)} - \frac{\lmbpara_j^{(i)}}{\penpara}\right) - \frac{c_v}{\penpara}\right)\right] & \text{if $y_i = 0$,}
\\
\Proj_{[0,1]}\left[\frac{1}{\degi}\left(\sum_j \left(\bfz_j^{(i)} - \frac{\lmbpara_j^{(i)}}{\penpara}\right) + \frac{c_v}{\penpara}\right)\right] &\text{if $y_i = 1$.}
\end{cases}
\end{equation}
\end{example}

Since ADMM is not restricted to linear programs, we can use ADMM algorithm 
for any objective function on $\fundpoly$. 
This means that we have the freedom to choose a good decoding objective. 
We leverage this observation in the next section and develop decoding
methods that incorporate a penalty term into the original linear objective. 
\section{Using penalty functions for decoding}
\label{sec.penaltyfunction}
LP decoding fails
whenever it outputs a fractional solution. This solution is called a
\textit{pseudocodeword}. Since pseudocodewords are directly
related to the fractional vertices of the fundamental polytope
$\fundpoly$, it is not surprising that error performance can be
improved by tightening the relaxation of $\fundpoly$. ``Cutting
plane'' and ``branch-and-bound'' are two types of techniques that are
used to improve error performance of LP
decoding~\cite{helmling2012mathematical}.

In this work, we take a different angle to improve LP decoding. We
are partially inspired by~\cite{khajehnejad2012reweighted} because the
authors improve LP decoding not by tightening the constraints, but by
properly changing the objective. Following this observation, we
propose a method that penalizes pseudocodewords using functions that
 make integral solutions more favorable than
fractional solutions. 

Our approach is also similar to the well known LASSO
algorithm~\cite{tibshirani1996regression}: the penalty in our
approach can be considered as a ``regularizer''. However there is one
critical difference: the objective in our approach is non-convex. We
show in Section~\ref{sec.reweightedLP} that we can approximate an $\ell_1$
penalized objective using reweighted LPs which have theoretical
guarantees under certain regimes. 

\subsection{Family of penalty functions for decoding}
\label{subsec.generalpenalty}
We now formally state the ADMM penalized decoding problem. We propose applying the ADMM algorithm to the following optimization problem:
\vspace{0.2cm}\\
\framebox[\columnwidth]{
 \addtolength{\linewidth}{-4\fboxsep}%
 \addtolength{\linewidth}{-4\fboxrule}%
 \begin{minipage}{\linewidth}
  \begin{equation}
\label{eq.generalpenalty}
\begin{split}
\mbox{\pPD:} \qquad \min \quad & \bfgamma^T \bfx  + \sum_i g(x_i) \\
\suchthat\quad  & \bfP_j \bfx \in \PP_{d_j}
\end{split}
  \end{equation}
  where $g:[0,1]\mapsto \mathbb{R}\bigcup\{\pm \infty\}$ is a \textit{penalty function}. 
 \end{minipage}
}\vspace{0.2cm}

This decoder is denoted as \pPD (ADMM penalized decoding). For reasons to be discussed, we focus on penalty functions with the following properties:
\vspace{0.2cm}\\
\fbox{
 \addtolength{\linewidth}{-4\fboxsep}%
 \addtolength{\linewidth}{-4\fboxrule}%
 \begin{minipage}{\linewidth}
 \pPD penalty properties:
\begin{enumerate}
\item[(i)] $g$ is an increasing function on $(0,1/2)$;
\item[(ii)] $g(x) = g(1- x)$ for $x\in[0,1]$;
\item[(iii)] $g$ is differentiable on $(0,1/2)$.
\item[(iv)] $g$ is such that the $\bfx$-update of ADMM is well defined.
\end{enumerate}
\end{minipage}
}\vspace{0.2cm}

Property (i) penalizes fractional solutions. Property (ii) ensures
that the decoding error is independent with respect to the transmitted
codeword (see Theorem~\ref{thm.cw_independent}). We impose property
(iii) and (iv) so that we can determine the ADMM updates
with explicit form~(cf. $\nabla_{\bfx}f$
in~\eqref{eq.admm_x_elementwise_update}).
\begin{example}
\label{example.penaltyfunctions}
We present several examples satisfying these properties. We further discuss these penalties in Section~\ref{subsection.penaltyfunctions}: $g_1(x) = -\alpha_1|x-0.5|$,
$g_2(x) = -\alpha_2(x-0.5)^2$ and $g_3(x) = -\alpha_3\log(|x-0.5|)$, where $\alpha_i,i=1,2,3$ are designer-chosen constants. We plot these functions in Fig.~\ref{fig.penaltyfunction}. It is easy 
to observe the desired properties in these functions.
\begin{figure}
\psfrag{&A}{\scriptsize{penalty, $g(x)$}}
\psfrag{&B}{\scriptsize{$x$}}
\psfrag{&C}{\scriptsize{$-\log(|x-0.5|)$}}
\psfrag{&D}{\scriptsize{$-(x-0.5)^2$}}
\psfrag{&E}{\scriptsize{$-|x-0.5|$}}
\psfrag{0}{\scriptsize{$0$}}
\psfrag{0.5}{\scriptsize{$0.5$}}
\psfrag{1}{\scriptsize{$1$}}
    \begin{center}
    \includegraphics[width=3.5in]{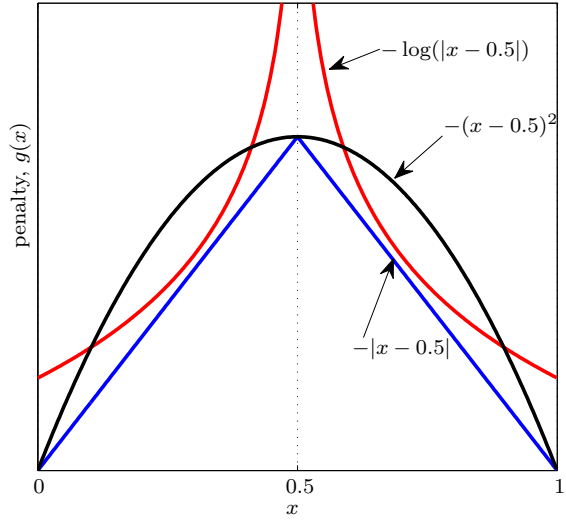}
    \end{center}
    \caption{Three penalty functions. Each functions is shifted and scaled for better view.}
    \label{fig.penaltyfunction}
\end{figure}
\end{example}
 
Note that introducing the
penalty term turns the LP into a nonlinear and non-convex
program. Thus one should not expect to find the global
solution. However we can still perform ADMM updates to obtain a local
minimal. More discussions on solving non-convex problems using ADMM can be
found in \cite[Sec.~9]{boyd2010distributed}.

In the ADMM $\bfx$-updates we solve for local solutions by solving for the stationary points. Note that $g$ is an increasing function on $(0,1/2)$. Thus, if there are multiple stationary points, we always choose the one that is farthest from $0.5$. The ADMM update steps are summarized in Algorithm~\ref{Algorithm:ADMM}.

\begin{algorithm}
\caption{\pPD. \textbf{Input}: Received vector $\bfy \in \Sigma^N$.
\textbf{Output}: Decoded vector $\bfx$.}
\label{Algorithm:ADMM}
\begin{algorithmic}[1]

\STATE Construct the $d_j \times N $ selection matrix $\bfP_j$ for all $j \in
\mathcal{J}$ based on the parity check matrix $\bfH$.

\STATE Construct the log-likelihood ratio $\bfgamma$ based on received vector $\bfy$. 

\STATE Initialize $\bflambda_j$ as the all zeros vector and $\bfz_j$ as the all $0.5$ vector for
all $j \in \mathcal{J}$. Initialize iterate $k = 0$. For simple notation, we drop iterate $k$ except for determining stopping criteria.

\REPEAT

\FORALL{ $ i \in \mathcal{I} $}
\STATE \label{algstep.xupdate1} Update $x_i$ by choosing the farthest root from 0.5 for equation 

$x_i \leftarrow \frac{1}{\degi}\left(\sum_j \left(\bfz_j^{(i)} - \frac{\lmbpara_j^{(i)}}{\penpara}\right) - \frac{1}{\penpara}(\gamma_i + g^\prime(x))\right)$. 
\STATE \label{algstep.xupdate2} $x_i \leftarrow \Proj_{[0,1]}(x_i)$ (project $x_i$ onto $[0,1]$).
\ENDFOR
\FORALL{ $ j \in \mathcal{J} $}

\STATE Set $ \bfv_j  \leftarrow \bfP_j \bfx + \bflambda_j /\mu $.

\STATE Update $ \bfz_j \leftarrow \Proj_{\PP_{d_j}} (\bfv_j) $ where
$\Proj_{\PP_{d_j}} (\cdot)$ means project onto the parity polytope.

\STATE Update $\bflambda_j \leftarrow \bflambda_j + \mu \left( \bfP_j \bfx - \bfz_j\right) $.
\ENDFOR
\STATE $ k \! \leftarrow \! k+1$.
\UNTIL{ $ \sum_j { \| \bfP_j \bfx^k - \bfz^k_j \|^2_{2} } < \epsilon^2 Md_j $ \\and  $\sum_j { \| \bfz^{k}_j - \bfz^{k-1}_j \|^2_{2} } < \epsilon^2 Md_j$}\\
{\bf return} $\bfx$.
\end{algorithmic}
\end{algorithm}

\begin{proposition}
If the output from Algorithm~\ref{Algorithm:ADMM} is an integral solution, then it is an valid codeword. 
\end{proposition}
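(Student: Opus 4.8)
The plan is to argue purely about \emph{feasibility}, so the penalty term $\sum_i g(x_i)$ will play no role: it enters only the $\bfx$-update and never the way the parity constraints are enforced. The crucial structural fact is that the $\bfz_j$-update in Algorithm~\ref{Algorithm:ADMM} sets $\bfz_j \leftarrow \Proj_{\PP_{d_j}}(\bfv_j)$, so that $\bfz_j \in \PP_{d_j}$ at every iteration, in particular at termination. Hence it suffices to show that if the returned $\bfx$ is integral then, for each check $j \in \cJ$, the sub-vector $\bfP_j\bfx$ is an even-weight binary vector; this is precisely the statement that every parity check is satisfied, i.e.\ that $\bfx \in \codebook$.

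The second ingredient is a geometric separation property of the parity polytope. Recall that $\PP_{d_j}$ is cut out by the box constraints together with the odd-set inequalities: $\bfw \in \PP_{d_j}$ iff $\bfw \in [0,1]^{d_j}$ and $\sum_{i\in S} w_i - \sum_{i\notin S} w_i \le |S|-1$ for every odd-cardinality $S$. I would use this description to check two things. First, since $\PP_{d_j}$ is a $0/1$-polytope whose vertices are by definition the even-weight binary vectors, its integral points are \emph{exactly} the even-weight binary vectors. Second, any odd-weight binary vector $\bfw$ violates the inequality indexed by $S=\mathrm{supp}(\bfw)$ by exactly $1$; as the corresponding normal vector has Euclidean norm $\sqrt{d_j}$, this places $\bfw$ at distance at least $1/\sqrt{d_j}$ from the separating hyperplane, and hence at distance at least $1/\sqrt{d_j}$ from $\PP_{d_j}$.

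Finally I would combine the two. Fix a check $j$ and suppose $\bfP_j\bfx \in \{0,1\}^{d_j}$. Since $\bfz_j \in \PP_{d_j}$, we have $\|\bfP_j\bfx - \bfz_j\|_2 \ge \mathrm{dist}(\bfP_j\bfx,\PP_{d_j})$, which is at least $1/\sqrt{d_j}$ whenever $\bfP_j\bfx$ has odd weight. The termination test drives the primal residual $\|\bfP_j\bfx - \bfz_j\|_2$ below the tolerance, so once $\epsilon$ is smaller than the separation $1/\sqrt{d_j}$ the odd-weight case is impossible and $\bfP_j\bfx$ must be even-weight; ranging over all $j$ then gives $\bfx \in \codebook$. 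The one place needing care---and the main obstacle---is exactly the gap between the exact fixed-point identity $\bfP_j\bfx = \bfz_j$ and the only approximate primal feasibility guaranteed by the $\epsilon$-stopping rule. At a true fixed point the claim is immediate, because $\bfP_j\bfx = \bfz_j \in \PP_{d_j}$ is integral and therefore even-weight; under early termination one must invoke the separation bound above to rule out an integral-but-infeasible output. I would therefore state the result for the exact-feasibility interpretation and remark that the $1/\sqrt{d_j}$ separation makes it robust to any sufficiently small tolerance.
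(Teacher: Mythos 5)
Your proposal is correct, but it takes a genuinely different and more careful route than the paper, whose entire proof is one sentence: the feasible integral points of the fundamental polytope $\fundpoly$ are exactly the codewords, so an integral output is a codeword. The paper thus tacitly reads ``output'' as a feasible point of $\fundpoly$ and never engages with the fact that Algorithm~\ref{Algorithm:ADMM} terminates under an $\epsilon$-tolerance, so the returned $\bfx$ satisfies the parity-polytope constraints only approximately. You isolate exactly this gap and close it with a separation bound: an odd-weight binary vector violates its own odd-set inequality by exactly $1$, hence lies at distance at least $1/\sqrt{d_j}$ from $\PP_{d_j}$, while $\bfz_j\in\PP_{d_j}$ holds at every iteration by construction of the $\bfz$-update; a small primal residual therefore rules out integral-but-infeasible outputs. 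One quantitative nit: the stopping rule bounds the \emph{aggregate} residual, $\sum_j\Vert\bfP_j\bfx-\bfz_j\Vert_2^2<\epsilon^2 M d_j$, so an individual residual can be as large as $\epsilon\sqrt{M d_j}$, and the threshold you actually need is $\epsilon<1/\left(d_j\sqrt{M}\right)$ (for a $d_j$-regular code), not merely $\epsilon<1/\sqrt{d_j}$; your closing caveat about ``any sufficiently small tolerance'' absorbs this, but the constant should be stated with the $M$-dependence. In exchange for the extra work, your version proves a statement that is true of the algorithm as implemented, whereas the paper's one-liner is really about the idealized fixed point $\bfP_j\bfx=\bfz_j\in\PP_{d_j}$; under that idealized reading your first paragraph collapses to the paper's argument, and the separation lemma is what you pay to make the claim robust to early termination.
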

\begin{proof}
This comes directly from the fact that the feasible integral solutions in the fundamental polytope are valid codewords.
\end{proof}

In the next proposition, we show that the penalty term does not prefer a non-ML codeword over the ML codeword. Namely, the ML codeword is the most competitive integral solution. This indicates that our choice of penalty function is reasonable.
\begin{proposition}(Ideal ML test)
\label{prop.idealmltest}
If the output from Algorithm~\ref{Algorithm:ADMM} is the global minimizer of problem~\eqref{eq.generalpenalty} and it is also an integral solution, then it is the ML solution.
\end{proposition}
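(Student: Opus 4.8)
The plan is to exploit the symmetry property~(ii) of the penalty, which forces the penalty term $\sum_i g(x_i)$ to take the \emph{same constant value} at every integral point. Once this is established, the penalized objective of~\eqref{eq.generalpenalty} and the linear ML objective $\bfgamma^T\bfx$ differ only by an additive constant over the set of codewords, so any global minimizer of the former that happens to be integral must simultaneously minimize the latter, i.e.\ it is the ML solution.

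First I would fix an integral global minimizer, call it $\bfx^\ast$, of problem~\eqref{eq.generalpenalty}. Since $\bfx^\ast$ is feasible and integral, the preceding proposition (feasible integral points of $\fundpoly$ are valid codewords) gives $\bfx^\ast\in\codebook$. Next I would observe that \emph{every} codeword is itself a feasible point of~\eqref{eq.generalpenalty}: for any $\bfc\in\codebook$ and any check $j\in\cJ$, the sub-vector $\bfP_j\bfc$ is an even-parity binary word and hence a vertex of $\PP_{d_j}$, so $\bfP_j\bfc\in\PP_{d_j}$ for all $j$. Thus $\bfx^\ast$ competes against the full codebook inside the feasible region.

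The crux is the penalty computation. Property~(ii) evaluated at $x=0$ gives $g(0)=g(1)$, so for any binary vector $\bfx\in\{0,1\}^\blocklength$ each term $g(x_i)$ equals $g(0)$, whence $\sum_i g(x_i)=\blocklength\,g(0)$, a value independent of which binary vector we chose. Applying this to $\bfx^\ast$ and to an arbitrary codeword $\bfc$, the global minimality of $\bfx^\ast$ yields
\[
\bfgamma^T\bfx^\ast + \blocklength\,g(0)\;\le\;\bfgamma^T\bfc + \blocklength\,g(0),
\]
and cancelling the common constant gives $\bfgamma^T\bfx^\ast\le\bfgamma^T\bfc$ for every $\bfc\in\codebook$. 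Hence $\bfx^\ast$ minimizes $\bfgamma^T\bfx$ over $\codebook$, which is exactly the ML decoding problem, so $\bfx^\ast$ is the ML solution.

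I do not anticipate a genuine obstacle; the only point requiring care is that $g(0)$ be finite, so that subtracting the constant $\blocklength\,g(0)$ from both sides is legitimate. This holds for all penalties of interest, since property~(i) makes $g$ largest near $1/2$ and smallest at the integer endpoints (for the examples of Example~\ref{example.penaltyfunctions}, $g(0)$ is finite while any divergence occurs at $1/2$). The entire argument hinges on the single observation that symmetry property~(ii) neutralizes the penalty on the integer lattice, reducing the penalized problem to ordinary ML decoding whenever the optimizer is integral.
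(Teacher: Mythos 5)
Your proof is correct and rests on exactly the same observation as the paper's: symmetry property (ii) forces $g(0)=g(1)$, so the penalty contributes the identical constant $\blocklength\,g(0)$ at every integral point, reducing the comparison among codewords to the linear ML objective. The paper phrases this as a proof by contradiction against a putative better ML codeword, while you argue directly that the minimizer beats every codeword, but the substance is identical (and your remark on the finiteness of $g(0)$ is a reasonable extra care point).
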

\begin{proof}
Prove by contradiction: Let $f(\bfx) =  \bfgamma^T\bfx  + \sum_i g(x_i)$. Suppose the decoder outputs an integral solution $\bfx^*$ that is not equal to the ML solution $\tilde{\bfx}$. By property (i) and (ii) of $g$,
\begin{equation*}
g(0) = g(1) < g(x),\qquad \forall x\in(0,1)
\end{equation*}
Then 
\begin{align*}
f(\bfx^*) = \bfgamma^T\bfx^*  + \sum_i g(x_i^*) &= \bfgamma^T\bfx^* + ng(0),\\
f(\tilde{\bfx}) = \bfgamma^T\tilde{\bfx}  + \sum_i g(\tilde{x}_i) &= \bfgamma^T\tilde{\bfx} + ng(0).
\end{align*}
Since $\tilde{\bfx}$ is the ML solution, $\bfgamma^T\tilde{\bfx}<\bfgamma^T\bfx^*$, therefore $f(\tilde{\bfx}) < f(\bfx^*)$, contradicting with the assumption that $\bfx^*$ is the global minimizer.
\end{proof}

In general, we cannot determine whether or not a feasible point is the global minimizer of a
non-convex optimization problem. But even if we cannot tell whether or not
the output from the decoder is the global minimizer, we can use the ML
certificate property of LP decoding to test whether or not the solution
obtained is the ML solution. Namely, if we find an integral solution
we can run ADMM on the original LP problem~\eqref{eq.lpfeldman}, using
this integral solution as the initial point, thereby testing whether or
not this integral solution is the ML solution. Formally,
\begin{proposition}(Weak ML test)
If the output from Algorithm~\ref{Algorithm:ADMM} is an integral solution, we do one more iteration of the algorithm without the penalty function. If the new solution obtained is the same as the previous one then this integral solution is the ML solution.
\end{proposition}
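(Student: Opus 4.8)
The plan is to recognize that the ``one more iteration without the penalty'' is exactly a test for whether the converged state $(\bfx^*,\bfz^*,\bflambda^*)$ of the penalized decoder is a fixed point of the ADMM map for the plain LP \eqref{eq.lpfeldman}, and then to invoke two facts: that a fixed point of ADMM on the convex program \eqref{eq.lpfeldman} is a global LP optimum, and that an integral LP optimum is the ML codeword (the ML-certificate property of \cite{feldman2005using}). So I would split the argument into a ``state-propagation'' step and an ``optimality'' step.

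First I would record the structure of the converged penalized state. Because Algorithm~\ref{Algorithm:ADMM} terminates with $\bfx^*$ integral and the primal residual $\sum_j\|\bfP_j\bfx^*-\bfz_j^*\|_2^2$ driven below threshold, at convergence $\bfz_j^*=\bfP_j\bfx^*$ for every $j$, so each $\bfz_j^*$ is an integral even-parity vertex of $\PP_{d_j}$, and the $\bflambda$-update \eqref{eq.admm_lambda_update} reproduces $\bflambda_j^*$ since $\bfP_j\bfx^*-\bfz_j^*=\bfzero$. The key observation, already noted in the text after \eqref{eq.admm_lambda_update}, is that neither the $\bfz$-update \eqref{eq.admm_z_update} nor the $\bflambda$-update \eqref{eq.admm_lambda_update} involves the objective $f$; hence these two updates are literally identical in the penalized run and in the penalty-free run, and the \emph{only} difference is in the $\bfx$-update through the term $g^\prime(x_i)$.

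Next I would run one penalty-free ($g\equiv 0$) iteration from $(\bfx^*,\bfz^*,\bflambda^*)$ and let $\bfx^{\new}$ be the output of the LP $\bfx$-update. The proposition's hypothesis is precisely $\bfx^{\new}=\bfx^*$. I would then chase this through the remaining two updates: the LP $\bfz$-update returns $\Proj_{\PP_{d_j}}(\bfP_j\bfx^{\new}+\bflambda_j^*/\mu)=\Proj_{\PP_{d_j}}(\bfP_j\bfx^*+\bflambda_j^*/\mu)=\bfz_j^*$, which is the very projection already satisfied at the penalized convergence, and the $\bflambda$-update returns $\bflambda_j^*+\mu(\bfP_j\bfx^*-\bfz_j^*)=\bflambda_j^*$. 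Thus the entire triple is unchanged, i.e. $(\bfx^*,\bfz^*,\bflambda^*)$ is a fixed point of the LP-decoding ADMM map. Since the fixed-point relations for the convex program \eqref{eq.lpfeldman} are exactly its KKT conditions (primal feasibility $\bfP_j\bfx^*=\bfz_j^*\in\PP_{d_j}$ together with the stationarity/dual conditions encoded by the $\bfx$- and $\bfz$-updates), convexity makes $\bfx^*$ a global LP optimum; this is the standard ADMM optimality statement of \cite{boyd2010distributed,barman2013decomposition}. Being integral and feasible, $\bfx^*$ is a codeword by the first proposition, and an integral LP optimum is the ML codeword, completing the proof.

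The hard part will be the fixed-point-to-optimality step: I must check that the idempotence of the $\bfx$-update at $\bfx^*$ genuinely supplies stationarity on the clipped coordinates $x_i^*\in\{0,1\}$, where $\Proj_{[0,1]}$ is active, so that the ADMM fixed-point relations capture the \emph{full} KKT system rather than only interior stationarity. A secondary, bookkeeping-level concern is the stopping tolerance $\epsilon$: the clean argument uses $\bfP_j\bfx^*=\bfz_j^*$ exactly, and I would remark that for integral $\bfx^*$ the primal residual is either zero or bounded away from zero, so the $\epsilon$-threshold in Algorithm~\ref{Algorithm:ADMM} certifies exact feasibility and the equalities used above hold as stated.
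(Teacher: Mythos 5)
Your proposal is correct and is essentially the paper's own argument: the paper's two-line proof (dropping the penalty recovers ADMM LP decoding, so idempotence under one penalty-free iteration plus the ML-certificate property of \cite{feldman2005using} yields ML optimality) implicitly relies on exactly the fixed-point-of-convex-ADMM-implies-LP-optimum reasoning that you spell out in your state-propagation and optimality steps. The only shaky spot is your side remark that integrality of $\bfx^*$ forces the primal residual to be zero or bounded away from zero---since $\bfz_j^*=\Proj_{\PP_{d_j}}\bigl(\bfP_j\bfx^*+\bflambda_j^*/\penpara\bigr)$ varies continuously with $\bflambda_j^*$, the residual can be arbitrarily small yet nonzero---but the paper likewise argues at the level of exact convergence, so this does not affect the comparison.
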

\begin{proof}
By dropping the penalty function, we obtain the original LP decoding problem. By the ML certificate of LP decoding, the solution is the ML solution.
\end{proof}

\begin{theorem}
\label{thm.cw_independent}
(All-zero assumption) If the channel is symmetric then the probability that Algorithm~\ref{Algorithm:ADMM} fails
is independent of the codeword that was transmitted.
\end{theorem}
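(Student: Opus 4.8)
The plan is to use the standard coset-symmetry argument, adapted to the iterative structure of Algorithm~\ref{Algorithm:ADMM}. Fix an arbitrary codeword $\bfc\in\codebook$ and define the coordinate-wise involution $T_\bfc$ on $[0,1]^\blocklength$ by $(T_\bfc\bfx)_i = x_i$ when $c_i=0$ and $(T_\bfc\bfx)_i = 1-x_i$ when $c_i=1$. Since the channel is output-symmetric, I would first record the well-known fact that transmitting $\bfc$ and then flipping the sign of the log-likelihood ratios on the support of $\bfc$, i.e. replacing $\llr_i$ by $(1-2c_i)\llr_i$, produces a vector with the same distribution as the log-likelihood ratios obtained when the all-zero codeword is transmitted. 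Thus it suffices to exhibit a deterministic, measure-preserving correspondence between the two decoding runs and to show that decoding success is preserved by it.

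The core of the argument is to show that the entire ADMM trajectory is equivariant under $T_\bfc$, together with the induced actions $z^{(i)}_j\mapsto 1-z^{(i)}_j$ and $\lambda^{(i)}_j\mapsto -\lambda^{(i)}_j$ on the coordinates $i$ with $c_i=1$. Running the algorithm on $\bfgamma$ and on $\bfgamma' := ((1-2c_i)\llr_i)_i$ from the prescribed initialization (all-$0.5$ for $\bfz$, all-zero for $\bflambda$, both fixed by these maps since $1-0.5=0.5$ and $-0=0$), I would argue by induction on the iteration index $k$ that the two iterates stay related by these transformations. Each of the three updates must be checked. The $\bflambda$-update and the formation of $\bfv_j = \bfP_j\bfx+\bflambda_j/\penpara$ commute with the maps by a direct substitution, since on a flipped coordinate the residual $\bfP_j\bfx-\bfz_j$ merely changes sign. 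The $\bfz$-update requires that projection onto $\PP_{d_j}$ commute with flipping the coordinates indexed by $\bfc$; because $\bfc$ is a codeword its restriction to each check has even weight, the flip $w\mapsto 1-w$ on an even-size coordinate set is a Euclidean isometry that permutes the even-weight vertices of $\PP_{d_j}$ and hence fixes the polytope setwise, so Euclidean projection is equivariant. The $\bfx$-update is where penalty property (ii) is essential: differentiating $g(1-x)=g(x)$ gives $g'(1-x)=-g'(x)$, and a short substitution (using $\sum_j 1 = \degi$ over the checks incident to $i$) shows that $1-x_i$ solves the stationarity equation for $\bfgamma'$ exactly when $x_i$ solves it for $\bfgamma$; moreover the tie-breaking rule ``farthest root from $0.5$'' and the clipping $\Proj_{[0,1]}$ are both invariant under $x\mapsto 1-x$, so the selected roots correspond.

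Having established the equivariance, I would close the argument in two steps. First, because each coordinate flip is an isometry on the relevant block, the quantities $\sum_j\|\bfP_j\bfx^k-\bfz^k_j\|_2^2$ and $\sum_j\|\bfz^k_j-\bfz^{k-1}_j\|_2^2$ are identical for the two runs, so the stopping criterion triggers at the same iteration and the outputs are exactly related by $\bfx'=T_\bfc\bfx$. Second, since $T_\bfc\bfc=\bfzero$, the output equals $\bfc$ in the first run if and only if the output equals $\bfzero$ in the second run, so the failure events coincide. Combining this deterministic correspondence with the channel-symmetry distributional identity recorded at the outset yields $\Pr[\text{fail}\mid\bfc]=\Pr[\text{fail}\mid\bfzero]$ for every $\bfc$, which is the claim.

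I expect the main obstacle to be the $\bfz$-update: one must justify carefully that Euclidean projection onto the parity polytope commutes with the flip symmetry, i.e. that the even-coordinate flip is a genuine symmetry of $\PP_{d_j}$ (not merely of its vertex set) and that the projection map inherits this equivariance. The $\bfx$-update is the second delicate point, since it hinges on the identity $g'(1-x)=-g'(x)$ supplied by property (ii) and on the fact that the nonconvex root-selection rule respects reflection about $0.5$.
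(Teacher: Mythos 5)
Your proposal is correct and follows essentially the same route as the paper's proof: the coordinate involution $T_\bfc$ is exactly the paper's ``relative vector'' map $\mcR_{\bfc}$, and your induction showing equivariance of the $\bfx$-, $\bfz$-, and $\bflambda$-updates (using $g'(1-x)=-g'(x)$, the farthest-from-$0.5$ root selection, and simultaneous triggering of the stopping criterion) mirrors the paper's Lemmas~\ref{lemma.mapping}--\ref{lemma.equaldecoding}. The only cosmetic differences are that you prove the parity-polytope flip symmetry and projection equivariance directly from the even-weight isometry argument, where the paper cites Lemma~17 of~\cite{feldman2005using} and argues by contradiction on projection distances, and that you phrase channel symmetry via the LLR sign flip $\gamma_i\mapsto(1-2c_i)\gamma_i$ rather than the received-symbol map $y_i\mapsto y_i'$.
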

\begin{proof}
See Appendix~\ref{appendix.cwindep_proof}.
\end{proof}
In general, problem~(\ref{eq.lpl1penalty}) is non-convex
and therefore the output from the decoder should be different if
it is initialized at different points. We show the all-zero assumption
property under two different initializations in
Corollary~\ref{crly.initialpoints}.
\begin{corollary}
\label{crly.initialpoints}
The probability of error is independent of the codeword that was transmitted if Algorithm~\ref{Algorithm:ADMM} is initialized in the following ways:
\begin{enumerate}
\item For all $j\in\cJ $, $\bflambda_j = 0$ and each entry in $\bfz_j$ is i.i.d. following uniform distribution on $[0,1]$.
\item For all $j\in\cJ $, $\bflambda_j = 0$ and $\bfz_j$ is obtained from the solution of LP
decoding~\eqref{eq.lpfeldman} (i.e., at the pseudocodeword solution for \pLP).
\end{enumerate}
\end{corollary}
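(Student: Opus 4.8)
The plan is to reduce the corollary to the symmetry argument already established in the proof of Theorem~\ref{thm.cw_independent}, and then to verify that each of the two initializations is compatible with that symmetry. Recall the structure of that argument. For a transmitted codeword $\bfc\in\codebook$ one introduces the coordinate-flip involution $T_{\bfc}$ defined by $(T_{\bfc}\bfx)_i = x_i$ when $c_i=0$ and $(T_{\bfc}\bfx)_i = 1-x_i$ when $c_i=1$, together with its lift $\Phi_{\bfc}$ to the full ADMM state that acts as $T_{\bfc}$ on $\bfx$, flips the corresponding local coordinates of each replica $\bfz_j$, and sends $(\bflambda_j)^{(i)}\mapsto(-1)^{c_i}(\bflambda_j)^{(i)}$. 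Because $\bfc$ satisfies every check, $T_{\bfc}$ preserves each $\PP_{d_j}$ and hence $\fundpoly$; because the penalty satisfies property~(ii), i.e.\ $g(x)=g(1-x)$, applying $T_{\bfc}$ carries the objective $\bfgamma^T\bfx+\sum_i g(x_i)$ into $\tilde{\bfgamma}^T\bfx+\sum_i g(x_i)$ up to an additive constant, where $\tilde{\gamma}_i=(-1)^{c_i}\gamma_i$; and by channel symmetry the law of $\tilde{\bfgamma}$ under transmission of $\bfc$ equals the law of $\bfgamma$ under transmission of $\mathbf 0$. The key consequence, which is what the appendix verifies by inspecting \eqref{eq.admm_x_elementwise_update}, the $\bfz$-update, and the $\bflambda$-update, is that $\Phi_{\bfc}$ is an \emph{equivariance} of the recursion: applying $\Phi_{\bfc}$ to the trajectory generated from $\bfgamma$ yields exactly the trajectory generated from $\tilde{\bfgamma}$, provided the two runs start from $\Phi_{\bfc}$-corresponding initial states. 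The default start $\bflambda_j=\mathbf 0$, $\bfz_j=\tfrac12\mathbf 1$ is a fixed point of $\Phi_{\bfc}$ (since $-0=0$ and $1-\tfrac12=\tfrac12$), which is precisely why Theorem~\ref{thm.cw_independent} holds as stated. The corollary therefore amounts to checking that the two alternative initializations are also $\Phi_{\bfc}$-compatible.

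For the first initialization I would argue at the level of distributions rather than pointwise. Here $\bflambda_j=\mathbf 0$ is again fixed by $\Phi_{\bfc}$, while the action of $\Phi_{\bfc}$ on the random replicas amounts to replacing $z\mapsto 1-z$ in the coordinates $i$ with $c_i=1$. Since a uniform variable on $[0,1]$ has the same law as $1-z$ and the entries of $\bfz_j$ are i.i.d., the joint distribution of the initial state is invariant under $\Phi_{\bfc}$. I would then condition on the (channel-independent) random initialization: for each fixed realization the equivariance matches the $\bfgamma$-trajectory with a $\tilde{\bfgamma}$-trajectory started from the $\Phi_{\bfc}$-image of that realization, and the failure event is preserved because $\Phi_{\bfc}$ carries $\bfc$ to $\mathbf 0$. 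Integrating over the initialization using its $\Phi_{\bfc}$-invariance, and over the channel using $\tilde{\bfgamma}\,|\,\bfc \stackrel{d}{=}\bfgamma\,|\,\mathbf 0$, shows the two failure probabilities coincide.

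For the second initialization the replicas are seeded from the output of \pLP, itself obtained by running the same ADMM with the $g\equiv 0$ objective from the $\Phi_{\bfc}$-fixed default start. Hence that preliminary run is governed by the same equivariance, so its returned iterate satisfies $T_{\bfc}\,\bfx^{\mathrm{LP}}(\bfgamma)=\bfx^{\mathrm{LP}}(\tilde{\bfgamma})$, and the induced seed $\bfz_j=\bfP_j\bfx^{\mathrm{LP}}$, $\bflambda_j=\mathbf 0$ for $\bfgamma$ is mapped by $\Phi_{\bfc}$ exactly onto the corresponding seed for $\tilde{\bfgamma}$. Equivariance of the penalized recursion then matches the two trajectories termwise and preserves the failure event, giving codeword independence as before.

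The main obstacle, and the only place the argument needs real care, is the potential \emph{non-uniqueness} of the LP optimum in the second case: one cannot merely invoke ``the LP optimal set is $T_{\bfc}$-symmetric,'' because the seed depends on which optimal point is produced. The resolution is to phrase everything in terms of the deterministic \emph{algorithmic} output of ADMM-for-LP rather than the optimal set; since that output is a deterministic function of $(\bfgamma,\text{initial state})$ and the initial state is $\Phi_{\bfc}$-fixed, equivariance pins down $\bfx^{\mathrm{LP}}(\tilde{\bfgamma})=T_{\bfc}\,\bfx^{\mathrm{LP}}(\bfgamma)$ regardless of uniqueness. For the first initialization no such issue arises; the only subtlety is the bookkeeping of two independent randomness sources, which the conditioning argument above dispatches routinely.
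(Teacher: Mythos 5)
Your proof is correct, and for initialization (1) it matches the paper's argument exactly: both reduce to the relative-vector equivariance of the ADMM iteration (the paper's Lemma~\ref{lemma.quiviter} and Lemma~\ref{lemma.equaldecoding}, which are stated for arbitrary initial states satisfying $\bfz_j = \mcR_{\bfc}(\bfz_j^0)$, $\bflambda_j = -\bflambda_j^0$) and then use invariance of the i.i.d.\ uniform law under $z \mapsto 1-z$ to integrate out the initialization. For initialization (2), however, you take a genuinely different route. The paper proves a standalone symmetry lemma for \pLP (Lemma~\ref{lemma.lprelative}) by an optimality argument: invoking Lemmas~17 and~18 of~\cite{feldman2005using}, it shows that if $\hat{\bfc}^0 \neq \mcR_{\bfc}(\hat{\bfc})$ then $\mcR_{\bfc}(\hat{\bfc}^0)$ would be a feasible point strictly cheaper than $\hat{\bfc}$, a contradiction. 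That lemma is solver-independent --- it applies to any LP decoder that returns a minimizer --- but its strict inequality tacitly assumes the LP optimum is unique (with ties, ``different output'' need not mean ``strictly smaller cost''), which is precisely the non-uniqueness issue you flag. Your resolution --- treating the seed as the deterministic output of the same ADMM recursion with $g\equiv 0$ started from the $\Phi_{\bfc}$-fixed default state, so that the equivariance pins down $T_{\bfc}\,\bfx^{\mathrm{LP}}(\bfgamma)=\bfx^{\mathrm{LP}}(\tilde{\bfgamma})$ regardless of uniqueness --- dispatches that gap cleanly, at the price of tying the corollary to the specific ADMM-LP implementation rather than to an arbitrary LP solver; in the context of this paper's two-stage decoder that restriction is harmless. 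One further small point in your favor: your lift $\Phi_{\bfc}$ negates $(\bflambda_j)^{(i)}$ only at coordinates with $c_i=1$, which is the invariant actually preserved by the $\bfx$-update at coordinates with $c_i=0$; the paper writes $\bflambda_j=-\bflambda_j^0$ globally but verifies the updates only at the flipped coordinates, so your bookkeeping is the more careful one (the two conventions coincide at the all-zeros initialization but diverge after the first iteration).
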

\begin{proof}
See Appendix~\ref{appendix.cwindep_proof_corollary}.
\end{proof}
 Corollary~\ref{crly.initialpoints} implies that if we construct a two-stage decoder by concatenating an LP decoder followed by \pPD, then the probability of error for this two-stage decoder is independent of the codeword that was transmitted.

\subsection{Penalty functions using the $\ell_1$ and $\ell_2$ norms}
\label{subsection.penaltyfunctions}
We have already presented some examples of penalty functions in Example~\ref{example.penaltyfunctions}. We first discuss the following penalty functions in details:
\begin{align*}
g_1(x) &= -\alpha_1|x-0.5|,\\
g_2(x) &= -\alpha_2(x-0.5)^2.
\end{align*}
where $\alpha_1, \alpha_2 > 0$ are constants and are called the \emph{penalty coefficients}. We then briefly discuss the $\log$ function and other possibilities.

The respective decoding algorithms for these two functions
are identical to Algorithm~\ref{Algorithm:ADMM} except that
Steps~6 and~7 are replaced by
specific update rules
(cf.~\eqref{eq.alg_ell1penalty} and~\eqref{eq.alg_ell2penalty} 
respectively). Note that if $\alpha_i =
0$  ($i \in\{1,2\}$), the problems will become LP
decoding~\eqref{eq.lpfeldman}. We show that $g_1$ could yield
2 stationary points during $\bfx$-update in ADMM. We choose the
stationary point farthest from $0.5$ using simple thresholding.

\subsubsection{$\ell_1$ penalty function}
The decoding algorithm using $g_1(x) = -\alpha_1|x-0.5|$ can be considered as running ADMM on the following problem:
\begin{equation}
\label{eq.lpl1penalty}
\begin{split}
\min \quad & \bfgamma^T \bfx  - \alpha_1 \Vert \bfx-0.5\Vert_1 \\
\suchthat\quad  & \bfP_j \bfx \in \PP_{d_j}
\end{split}.
\end{equation}
We apply~\eqref{eq.lpl1penalty} to Step~\ref{algstep.xupdate1} in Algorithm~\ref{Algorithm:ADMM}: The objective function becomes
$f(\bfx) = \bfgamma^T\bfx  - \alpha_1\Vert \bfx-0.5\Vert_1$. Therefore $\left(\nabla_{\bfx}f\right)_i = \bfgamma_i - \alpha_1\sgn(x_i - 0.5)$. Letting 
\begin{equation}
\label{eq.t_i}
t_i := \sum_j \left(\bfz_j^{(i)} - \frac{\lmbpara_j^{(i)}}{\penpara}\right) - \bfgamma_i/\mu,
\end{equation}
we get the following equation:
\begin{equation}
\label{eq.ellone_x_update}
x_i = \frac{1}{\degi}\left(t_i + \frac{\alpha_1}{\mu}\sgn(x_i - 0.5)\right).
\end{equation}
Now consider the relationship between $x_i$ and $t_i$, which is shown in Fig.~\ref{fig.updaterel}. The solid lines in the figure represent legitimate solution pairs $(x_i,t_i)$ for~\eqref{eq.ellone_x_update}. We can see from the Fig.~\ref{fig.updaterel} that there are cases where for a given $t_i$, there are two solutions for $x_i$. We update $x_i$ with the one that is farthest from $0.5$. So the threshold is $t_i = \frac{\degi}{2}$. The $\bfx$-update rule is summarized in~\eqref{eq.alg_ell1penalty}:
\begin{equation}
\label{eq.alg_ell1penalty}
x_i= 
\begin{cases} 
\Proj_{[0,1]}\left( \frac{1}{\degi}  \left( t_i + \frac{\alpha}{\mu}\right) \right) & \text{if $t_i \geq \degi/2$,}\\
\Proj_{[0,1]}\left( \frac{1}{\degi}  \left( t_i - \frac{\alpha}{\mu}\right) \right) & \text{if $t_i < \degi/2$.}
\end{cases}
\end{equation}

\begin{figure}[htbp]
\psfrag{&A}{{$x_i$}}
\psfrag{&B}{{$t_i$}}
\psfrag{&C}{{$\frac{1}{2}$}}
\psfrag{&D}{{$\frac{\degi}{2}$}}
\psfrag{&E}{{slope:$\frac{1}{\degi}$}}
    \begin{center}
    \includegraphics[width=3in]{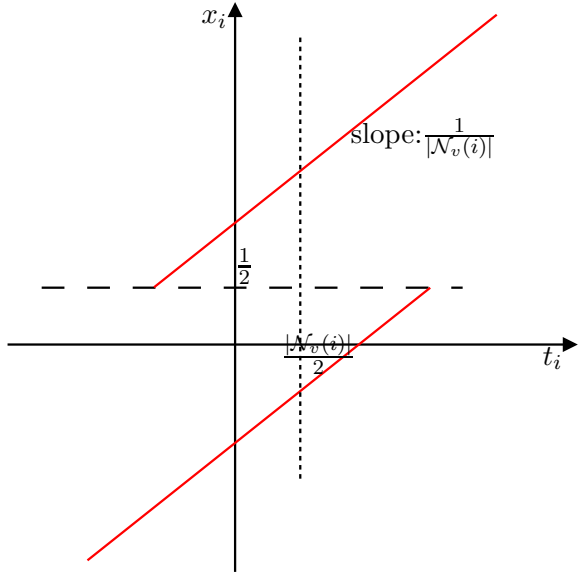}
    \end{center}
    \caption{Relationship between $x_i$ and $t_i$ in~\eqref{eq.ellone_x_update}}
    \label{fig.updaterel}
\end{figure}

\subsubsection{$\ell_2$ penalty function}
The decoding algorithm that uses $g_2(x) = -\alpha_2(x-0.5)^2$ runs ADMM on the following problem:
\begin{equation}
\label{eq.lpl2penalty}
\begin{split}
\min \quad & \bfgamma^T \bfx  - \alpha_2 \Vert \bfx-0.5\Vert_2^2 \\
\suchthat\quad  & \bfP_j \bfx \in \PP_{d_j} 
\end{split}.
\end{equation}
Note that in order to make the $\bfx$-update well defined, $\mcL_{\penpara}(\bfx,\bfz,\lmbpara)$ should be convex with respect to $\bfx$. Therefore we need to set $\alpha_2 \leq \frac{\degi \mu}{2}$ for all $i \in \cI$. We write the ADMM $\bfx$-update as:
$\left(\nabla_{\bfx}f\right)_i = \bfgamma_i - 2\alpha_2(x_i - 0.5)$. In this case, there is only one possible stationary point. The update rule is summarized in~\eqref{eq.alg_ell2penalty}:
\begin{equation}
\label{eq.alg_ell2penalty}
x_i =\Proj_{[0,1]}\left( \frac{1}{\degi - \frac{2\alpha_2}{\mu}} \left( t_i - \frac{\alpha_2}{\mu}\right) \right),
\end{equation}
where $t_i$ is defined in~\eqref{eq.t_i}.

\subsection{Other possible penalty functions}
We briefly discuss three other penalty functions. 
\begin{align*}
g_3(x) &= -\alpha_3\log(|x-0.5|),\\
g_4(x) &= 
\begin{cases} 0 & \text{if $x = 0$ or $1$,}\\
+\infty &\text{otherwise.}
\end{cases} \\
g_5(x) &=\alpha_5 H_B(x),
\end{align*}
where $H_B(\cdot)$ is the binary entropy function. 

The ADMM $\bfx$-update rule for $g_3(x)$ is similar to that for $g_1(x)$ where there are two stationary points and one needs to choose the one farthest from $0.5$. We observe that the empirical performance of $g_3(x)$ is  worse than $g_1(x)$ and $g_2(x)$. $g_4(x)$ forbids all non-integral solutions of $\bfx$ and yields projection on $\{0,1\}$ in $\bfx$-update. We simulated $g_4$ and ADMM does not converge. For $g_5(x)$, there are two critical problems. First, the ADMM $\bfx$-update for this penalty does not have an explicit expression. One can use the Newton's method (cf. e.g.~\cite{ypma1995historical}) to solve for a stationary point but this increases the complexity for the $\bfx$-update. Second, the derivative of binary entropy function goes to infinity at $x= 0$ and $x=1$, this creates a numerical problem for implementation. 

We verified via simulations that these functions are not better than $g_1(x) = -\alpha_1 |x- 0.5|$ and $g_2(x) = -\alpha_2 (x- 0.5)^2$. Thus we do not pursue these methods any further. Interested readers are referred to our previous work~\cite{liu2012suppressing} for simulation results on the decoder using the $\log$ penalty. (We also would like to encourage interested readers to explore other possibilities.)
\subsection{Numerical results of \pPD}
\label{sec.numerical_penalty}
We now examine the performance of \pPD empirically. We will study the decoder in depth via an instanton analysis in Section~\ref{sec.instanton}. 
In this section, we present numerical results for \pPD. 
In Section~\ref{sec.sim.wer}, we present WER comparisons
among \pPD with both $\ell_1$ and $\ell_2$ penalties and other decoders (BP and LP). 
In Section~\ref{sec.sim.parameter}, we study the effect of
different parameters. We show that even though the parameter search
space is large, it is easy to obtain a high performing parameter set.

\subsubsection{WER performances}
\label{sec.sim.wer}
We simulate the $[2640,1320]$ ``Margulis''
code~\cite{ryanlin2009channel} for the AWGN channel. We use the following parameter 
settings for \pPD: $\mu = 3$, $\epsilon = 10^{-5}$ and maximum number of iterations 
$T_{\max} = 1000$. For the $\ell_1$ and $\ell_2$ penalties, we pick $\alpha_1 = 0.6$
and $\alpha_2 = 0.8$ respectively (cf.~\eqref{eq.lpl1penalty}
and \eqref{eq.lpl2penalty}). We defer discussions on parameter choice to
Section~\ref{sec.sim.parameter} and Section~\ref{sec.sim.instanton}. 
\begin{figure}[h]
\psfrag{&ADMMLP}{\scalebox{.6}{ADMM LP decoding} }
\psfrag{&BPSaturation}{\scalebox{.6}{BP decoding (Ryan and Lin)}}
\psfrag{&BPNonSat}{\scalebox{.6}{Non-saturating BP decoding}}
\psfrag{&ADMM-L1PenaliedDecoderpa}{\scalebox{.6}{\pPD, $\ell_1$ penalty }}
\psfrag{&ADMM-L2PenaliedDecoder}{\scalebox{.6}{\pPD, $\ell_2$ penalty }}
\psfrag{&SNR}{\scalebox{.6}{\hspace{-1mm}$E_b/N_0(\dB)$}}
\psfrag{&WER}{\scalebox{.6}{WER}}

\psfrag{1}{\scriptsize{$1$}}
\psfrag{1.5}{\scriptsize{$1.5$}}
\psfrag{2}{\scriptsize{$2$}}
\psfrag{2.5}{\scriptsize{$2.5$}}
\psfrag{3}{\scriptsize{$3$}}

\psfrag{10}{\scriptsize{\hspace{-1mm}$10$}}
\psfrag{-2}{\tiny{\hspace{-1mm}$-2$}}
\psfrag{-4}{\tiny{\hspace{-1mm}$-4$}}
\psfrag{-6}{\tiny{\hspace{-1mm}$-6$}}
\psfrag{-8}{\tiny{\hspace{-1mm}$-8$}}

\psfrag{e0}{\tiny{\hspace{-1mm}$0$}}

    \begin{center}
    \includegraphics[width=3.5in]{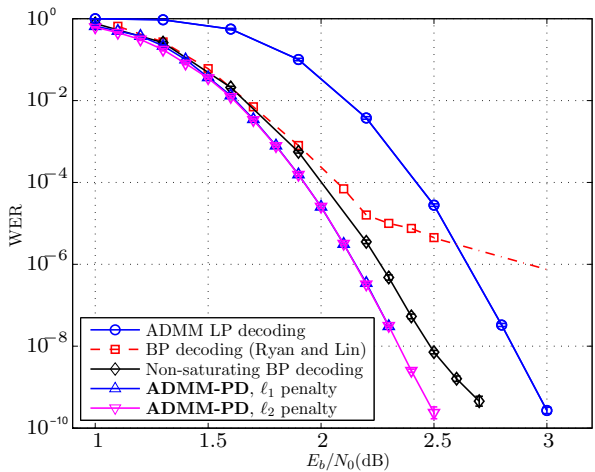}
    \end{center}
    \caption{Word-error-rate (WER) of the
      $[2640,1320]$ ``Margulis'' code plotted as a function of $E_b / N_0$ for the AWGN channel.}
    \label{fig.sim_margulis}
\end{figure}
In Fig.~\ref{fig.sim_margulis}, we
plot the WER as a function of SNR. For each data point, we also plot
the error bar. Data for BP and LP decoding are drawn from~\cite{barman2013decomposition}.
For $E_b/N_0 = 2.4\dB, 2.5\dB$, we collect respectively $121$ and $9$ word errors for 
\pPD with an $\ell_2$ penalty. Also note that we do not simulate \pPD with an $\ell_1$ penalty at $E_b/N_0 = 2.4\dB$ and $2.5\dB$ due to limited computational resources. All other data points (excluding those for BP and LP decoding) are based on more than $200$ word errors.

The ``non-saturating BP'' is a robust BP implementation that avoids saturating numerical issue of BP decoding (see~\cite{Butler2012Error} for details) and thus improves the decoding performance at high SNRs. We note that LP decoding shows worse performance than BP decoding at low SNRs, however BP decoding has an error floor at high SNRs. Even with non-saturating BP implementation, the slope of the WER curve is approximately constant at $E_b/N_0 = 2.3,2.4,2.5\dB$ and eventually decreases at $E_b/N_0 = 2.6,2.7\dB$. However the slope of the WER for LP decoding keeps steepening. The WER of \pPD combines the good characteristics of both LP and
BP decoders: \pPD using $\ell_1$ and $\ell_2$ penalties both perform as well as BP decoder at low SNRs. In addition, similar to LP decoder,
no error floor is observed for WERs above $10^{-10}$.

When comparing the $\ell_1$ penalty and the $\ell_2$ penalty, we do not observe significant 
differences. The $\ell_2$ penalty achieves slightly better error rate at $E_b/N_0 = 1.2,1.3$ and $1.4\dB$ in term of WER however both decoders achieve the same WER at other SNRs. We show in the next example that the $\ell_2$
penalty outperforms the $\ell_1$ penalty in a longer code.

It is worth mentioning that the slope of the WER curve for \pPD with an $\ell_2$ penalty does not increase at $E_b/N_0= 2.5\dB$. This high SNR behavior is further 
studied and discussed in Section~\ref{sec.sim.instanton}, where we 
show that the optimal parameter $\alpha_2$ in terms of WER performance should be small for high SNRs. 
The same argument also applies to the $\ell_1$ penalty. Due to 
limited computational resources, we did not simulate other values of $\alpha_2$ at high SNRs. 

\begin{figure}[h]
\psfrag{&ADMMLP}{\scalebox{.6}{ADMM LP decoding}}
\psfrag{&BPNonSat}{\scalebox{.6}{Non-saturating BP decoding}}
\psfrag{&ADMM-L1PenaliedDecoderP}{\scalebox{.6}{\pPD, $\ell_1$ penalty }}
\psfrag{&ADMM-L2PenaliedDecoder}{\scalebox{.6}{\pPD, $\ell_2$ penalty }}
\psfrag{&SNR}{\scalebox{.6}{\hspace{-1mm}$E_b/N_0(\dB)$}}
\psfrag{&WER}{\scalebox{.6}{WER}}

\psfrag{1}{\scriptsize{$1$}}
\psfrag{1.2}{\scriptsize{$1.2$}}
\psfrag{1.4}{\scriptsize{$1.4$}}
\psfrag{1.6}{\scriptsize{$1.6$}}
\psfrag{1.8}{\scriptsize{$1.8$}}
\psfrag{2}{\scriptsize{$2$}}
\psfrag{2.2}{\scriptsize{$2.2$}}
\psfrag{2.4}{\scriptsize{$2.4$}}
\psfrag{2.6}{\scriptsize{$2.6$}}

\psfrag{10}{\scriptsize{\hspace{-1mm}$10$}}
\psfrag{-2}{\tiny{\hspace{-1mm}$-2$}}
\psfrag{-4}{\tiny{\hspace{-1mm}$-4$}}
\psfrag{-6}{\tiny{\hspace{-1mm}$-6$}}
\psfrag{-8}{\tiny{\hspace{-1mm}$-8$}}

\psfrag{e0}{\tiny{\hspace{-1mm}$0$}}

    \begin{center}
    \includegraphics[width=3.5in]{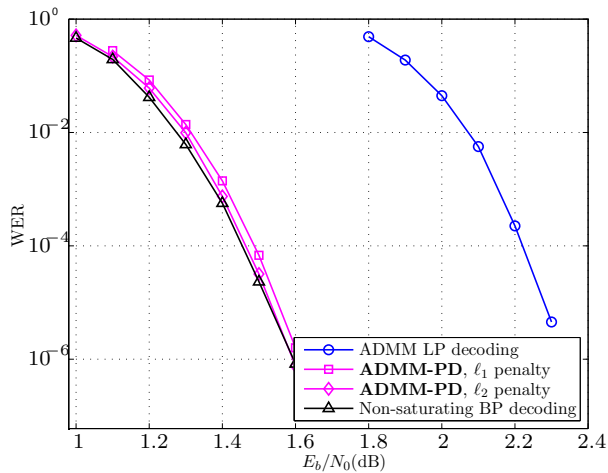}
    \end{center}
    \caption{Word-error-rate (WER) of the
      Mackay $[13298,3296]$ code plotted as a function of $E_b / N_0$ for the AWGN channel.}
    \label{fig.sim_mackay}
\end{figure}

In Fig.~\ref{fig.sim_mackay}, we plot the WER performance of the Mackay $[13298,3296]$ LDPC code obtained 
from~\cite{Database}. All data points are based on more than $200$ decoding errors.
Compared to the $[2640, 1320]$ Margulis code, this code is significantly longer. 
 The decoder's parameter settings are the same as the previous 
experiment. We observe from Fig.~\ref{fig.sim_mackay}
that there is a $0.8\dB$ SNR gap between LP decoding and BP decoding.
\pPD closes the SNR gap between BP and LP decoding. Further,
\pPD retains the good high SNR behaviors of LP decoding. The WER slope of \pPD
at $E_b/N_0 = 1.5$, $1.6\dB$ is steeper than that of BP decoding. Coming back to
the comparisons between the $\ell_1$ penalty and the $\ell_2$ penalty, we observe
that decoding using the $\ell_2$ penalty has a slightly better error performance and time efficiency. 

To summarize, \pPD displays best error performance for the codes and SNRs we simulate.
What is more, \pPD is also the {fastest} among the decoders we simulated\footnote{Measured by our implementations executed on the same CPU.}. These make \pPD a very strong competitor among the decoders studied in this work.

\subsubsection{Choice of parameters}
\label{sec.sim.parameter}
There are many parameters that need to be chosen for \pPD. As a reminder, these parameters include:
the penalty coefficient $\alpha$, the ADMM augmented Lagrangian weight $\mu$, 
ADMM ending tolerance $\epsilon$, the ADMM maximum number of iterations $T_{\max}$,
and the ADMM over-relaxation parameter $\rho$ (cf.~\cite{barman2013decomposition}). The optimal parameter settings
depend on both the code and the SNR it operates at. Exhaustively 
search for all possible combinations of parameters for all SNRs is infeasible and
is not the focus of this paper. The parameter choice principles learned in this paper 
are already satisfactory in order to obtain good performances as demonstrated in Section~\ref{sec.sim.wer}.
In this section, we fix some parameters
and then search for the rest.
In particular, we search for optimal parameters at low SNRs. We show in Section~\ref{sec.sim.instanton} that the optimal value for $\alpha$ should be small for high SNRs.

We first study WER as a function of $\alpha_i, i \in\{1,2\}$ for
the respective updates steps of~\eqref{eq.lpl1penalty}
and \eqref{eq.lpl2penalty} in
Fig.~\ref{fig.sim_margulis_parameteralpha_snr_wer} 
and
Fig.~\ref{fig.sim_margulis_parameteralpha_other_wer}. 
 We fix the ADMM parameters $\mu = 3$ and $\epsilon = 10^{-5}$.
  For each data point, we collect more than
$200$ errors. 
In Fig.~\ref{fig.sim_margulis_parameteralpha_snr_wer}, we fix the maximum number
of iterations $T_{\max} = 1000$. In Fig.~\ref{fig.sim_margulis_parameteralpha_other_wer}, we only plot data for 
\pPD with an $\ell_2$ penalty for conciseness. The respective behaviors for \pPD with an $\ell_1$ penalty is 
similar to Fig.~\ref{fig.sim_margulis_parameteralpha_other_wer}.

\begin{figure}
\psfrag{&&&&&L1PenaliedDecoderSNR16}{\scalebox{.6}{$\ell_1$ penalty, $E_b/N_0 = 1.6\dB$}}
\psfrag{&&&&L2PenaliedDecoderSNR16}{\scalebox{.6}{$\ell_2$ penalty, $E_b/N_0 = 1.6\dB$}}
\psfrag{&&&&L1PenaliedDecoderSNR19}{\scalebox{.6}{$\ell_1$ penalty, $E_b/N_0 = 1.9\dB$}}
\psfrag{&&&&L2PenaliedDecoderSNR19}{\scalebox{.6}{$\ell_2$ penalty, $E_b/N_0 = 1.9\dB$}}
\psfrag{&Alpha}{\scalebox{.6}{$\alpha_i$; $i = \{1,2\}$}}
\psfrag{&WER}{\scalebox{.6}{WER}}

\psfrag{0}{\scriptsize{$0$}}
\psfrag{1}{\scriptsize{$1$}}
\psfrag{2}{\scriptsize{$2$}}
\psfrag{3}{\scriptsize{$3$}}
\psfrag{4}{\scriptsize{$4$}}
\psfrag{5}{\scriptsize{$5$}}
\psfrag{6}{\scriptsize{$6$}}
\psfrag{10}{\scriptsize{\hspace{-1mm}$10$}}
\psfrag{-1}{\tiny{\hspace{-1mm}$-1$}}
\psfrag{e0}{\tiny{\hspace{-1mm}$0$}}
    \begin{center}
    \includegraphics[width=3.5in]{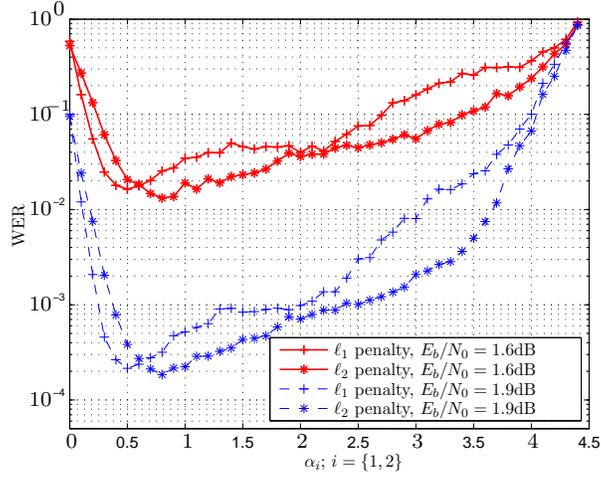}
    \end{center}
    \caption{Word-error-rate (WER) of the $[2640,1320]$ ``Margulis'' code for the AWGN channel plotted as a function of the choice of $\alpha_i$; $E_b/N_0 = 1.6\dB$ and $1.9\dB$.}
    \label{fig.sim_margulis_parameteralpha_snr_wer}
\end{figure}

\begin{figure}
\psfrag{&&&&Tmax200Overrel1}{\scalebox{.6}{$T_{\max} = 200$, $\rho = 1$}}
\psfrag{&&&&Tmax200Overrel19}{\scalebox{.6}{$T_{\max} = 200$, $\rho = 1.9$}}
\psfrag{&&&&Tmax1000Overrel1}{\scalebox{.6}{$T_{\max} = 1000$, $\rho = 1$}}
\psfrag{&&&&&Tmax1000Overrel19}{\scalebox{.6}{$T_{\max} = 1000$, $\rho = 1.9$}}
\psfrag{&Alpha}{\scalebox{.6}{$\alpha_i$; $i = \{1,2\}$}}
\psfrag{&WER}{\scalebox{.6}{WER}}

    \begin{center}
    \includegraphics[width=3.5in]{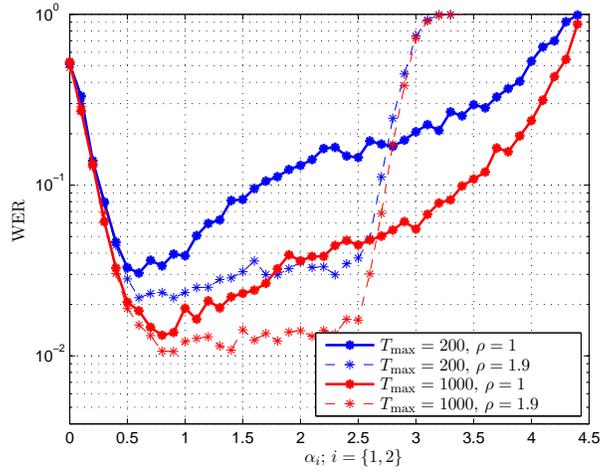}
    \end{center}
    \caption{Word-error-rate (WER) of the $[2640,1320]$ ``Margulis'' code for the AWGN channel plotted as a function of the choice of $\alpha_i$; $E_b/N_0 = 1.6\dB$ with different maximum number of iterations $T_{\max}$ and over-relaxation parameter $\rho$. \pPD with an $\ell_2$ penalty.}
    \label{fig.sim_margulis_parameteralpha_other_wer}
\end{figure}

We make the following observations from Fig.~\ref{fig.sim_margulis_parameteralpha_snr_wer} and Fig.~\ref{fig.sim_margulis_parameteralpha_other_wer}: First, the WER drops quickly as we start to penalize
fractional solutions (increasing the $\alpha_i$ above
zero). But the WER begins to increase again
when we penalize too heavily, e.g., when we set $\alpha_1>0.5$ or $\alpha_2>0.8$. 
The WERs then slowly increase up to unity. The
globally optimal parameter choices are $\alpha_1^* = 0.5, \alpha_2^* =
0.8$.\footnote{When $\rho = 1.9$, $\alpha_1^*$ becomes $0.6$. The data is not shown here but it is similar to the case in Fig.~\ref{fig.sim_margulis_parameteralpha_other_wer}} 
Second, the optimal parameters are almost the same for $E_b/N_0 = 1.6$ and $1.9\dB$.
Recall that in Fig.~\ref{fig.sim_margulis}, we used the same $\alpha$'s for all SNRs and 
received consistent gains throughout all SNRs. Although this simplifies practical implementation,
one needs to be careful at high SNRs. We defer the detailed discussions to Section~\ref{sec.sim.instanton}.
Third, although different values for
$T_{\max}$ give almost the same performance for LP decoding (see
points at $\alpha_i = 0$), $T_{\max}$ does affect the performance when
a penalty term is added. We see from
Fig.~\ref{fig.sim_margulis_parameteralpha_other_wer} that when more iterations
are allowed, the WER decreases substantially. This also applies to the
over-relaxation parameter, i.e., it affects the WER for \pPD but does not affect the WER for LP decoding. 

In Fig.~\ref{fig.sim_margulis_parametermu_other_wer}, we focus on the augmented Lagrangian weight parameter $\mu$ and study
the effects of its choice on WER. We observe that WER has a weak dependency on $\mu$. The optimal $\mu$s for
$T_{\max} = 200$ and $T_{\max} = 1000$ are $\mu = 3$ and $\mu = 4$, respectively. In fact, even
if the decoder is not operating with the optimal $\mu$, the performance degradation is small. We
also note that the decoding time changes little for $\mu \in [2,6]$ (data not
shown here). Thus we conclude that parameter $\mu$ has a smaller impact on decoder performance as compared to the penalty
parameter $\alpha$. This property provides rich choices for $\mu$ in practice.

\begin{figure}
\psfrag{&&&&&&L1Tmax200Overrel1}{\scalebox{.6}{$T_{\max} = 200$, $\ell_1$ penalty}}
\psfrag{&&&&&&L2Tmax200Overrel1}{\scalebox{.6}{$T_{\max} = 200$, $\ell_2$ penalty}}
\psfrag{&&&&&&L1Tmax1000Overrel1}{\scalebox{.6}{$T_{\max} = 1000$, $\ell_1$ penalty}}
\psfrag{&&&&&&L2Tmax1000Overrel1}{\scalebox{.6}{$T_{\max} = 1000$, $\ell_2$ penalty}}
\psfrag{&Mu}{\scalebox{.6}{$\mu$}}
\psfrag{&WER}{\scalebox{.6}{WER}}

    \begin{center}
    \includegraphics[width=3.5in]{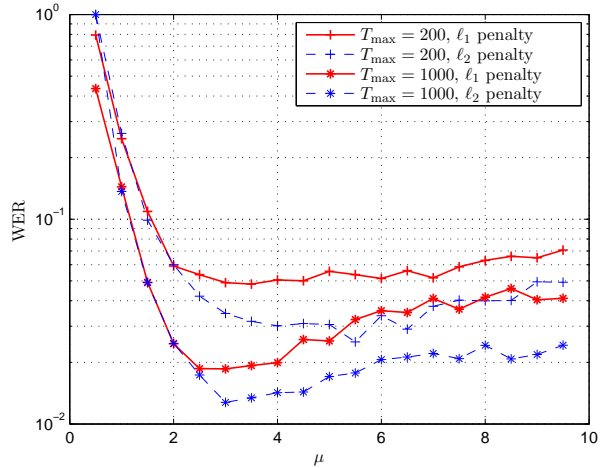}
    \end{center}
    \caption{Word-error-rate (WER) of the $[2640,1320]$ ``Margulis'' code for the AWGN channel plotted as a function of the choice of $\mu$; $E_b/N_0 = 1.6\dB$ with different maximum number of iterations $T_{\max}$.}
    \label{fig.sim_margulis_parametermu_other_wer}
\end{figure}

In Fig.~\ref{fig.sim_margulis_parameteralpha_snr_iter}, we study effects of the choice of penalty parameter $\alpha$
on decoder complexity. We make three remarks: First, the best parameter choice in terms of
WER agrees with the best parameter choice in terms of complexity. 
Second, the number of iterations for \pPD is much smaller than that for ADMM LP decoding (points where $\alpha_i = 0$ for $i = 1,2$).
Results in~\cite{barman2013decomposition} show that the decoding time for ADMM LP decoding 
is comparable to BP decoding. In recent works~\cite{zhang2013large, zhang2013efficient}, the authors proposed complexity improvements to the projection
onto parity polytope sub-routine, which can also be applied to \pPD directly. Thus we expect \pPD to be a strong competitor against BP decoding in terms of decoding speed. 
\begin{figure}
\psfrag{&&ADML1IterationsCorrectSNR16}{\scalebox{.6}{$\ell_1$ penalty, $E_b/N_0 = 1.6\dB$}}
\psfrag{&ADML2IterationsCorrectSNR16}{\scalebox{.6}{$\ell_2$ penalty, $E_b/N_0 = 1.6\dB$}}
\psfrag{&ADML1IterationsCorrectSNR19}{\scalebox{.6}{$\ell_1$ penalty, $E_b/N_0 = 1.9\dB$}}
\psfrag{&ADML2IterationsCorrectSNR19}{\scalebox{.6}{$\ell_2$ penalty, $E_b/N_0 = 1.9\dB$}}
\psfrag{&Alpha}{\scalebox{.6}{$\alpha_i$; $i = \{1,2\}$}}
\psfrag{&Iteration}{\scalebox{.6}{\# Iteration}}

    \begin{center}
    \includegraphics[width=3.5in]{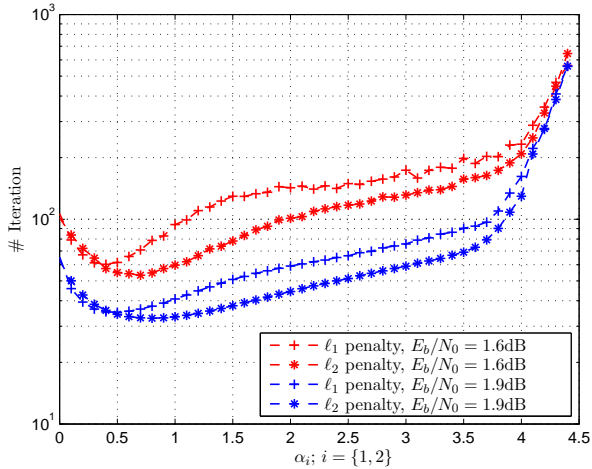}
    \end{center}
    \caption{Number of iterations of \textbf{correct decoding events} of the $[2640,1320]$ ``Margulis'' code for the AWGN channel plotted as a function of the choice of $\alpha_i$; $E_b/N_0 = 1.6\dB$ and $1.9\dB$.}
    \label{fig.sim_margulis_parameteralpha_snr_iter}
\end{figure}

We summarize the parameter choice methods learned in this section: 
\begin{itemize}
\item $\mu \in [3,5]$ are all good choices for both the $\ell_1$ and $\ell_2$ penalties;
\item Increasing $T_{\max}$ improves the WER performance. However it also decreases decoding speed. $T_{\max} = 200$ is large enough for good WER performances for any code studied in this paper;
\item  $\rho = 1.9$ is optimal for both WER and decoding speed;
\item We do not present data for parameter $\epsilon$, but we note that $\epsilon < 10^{-5}$ is sufficiently small for good WER performances;
\item Given the parameters above, $\alpha_1 = 0.6$ and $\alpha_2 = 0.8$ are good choices for $\ell_1$ and $\ell_2$ penalties respectively.
\end{itemize}

Although \pPD using non-optimal parameters can still outperform LP decoding, \pPD is more sensitive to parameter settings when compared with ADMM LP decoding (cf.~\cite{barman2013decomposition}). Therefore one should be careful in optimizing parameters when applying this decoder in practice.
\section{An instanton analysis for \pPD}
\label{sec.instanton}
We observe that \pPD achieves strikingly good SNR performance in the previous section. However the results we can obtain via simulations are limited. One important test \pPD needs to pass is the high SNRs error performance test. In this regime BP decoding suffers from error floors. Unlike LP decoding where convex optimization analyses can be applied (e.g. those presented in~\cite{feldman2007lp,daskalakis2008probabilistic,arora2012message,halabi2011lp}), \pPD is based on non-convex objective functions. Therefore it is infeasible to apply convex analyses directly. An instanton analysis (cf.~\cite{chilappagari2009instanton}), on the other hand, is generic and can be applied to any decoder. We first briefly review the instanton analysis framework and then propose an instanton search algorithm for \pPD.

\subsection{Review of the instanton concept}
In~\cite{stepanov2005diagnosis}, the authors introduced the instanton analysis for error correction codes. Instantons are the most probably noise configurations that cause decoding failures. In this analysis, the instanton information of a decoder, when decoding a particular code, is used to predict the slope of the WER curve at high SNRs (see also e.g.~\cite{stepanov2006instanton, chilappagari2009instanton}). 
The generality of an instanton analysis
makes it a good candidate for analyzing high SNR performance of \pPD. One necessary part of this analysis is an algorithm that identifies instantons. The genetic ``amoeba'' approach is used in~\cite{stepanov2005diagnosis}, for continuous channels. However specific algorithms that exploit the characteristics of a decoder can perform better. An instanton search algorithm for BP decoding in the AWGN channel is introduced in~\cite{stepanov2006instanton}. For LP decoding, there is a tight relationship between 
instantons and pseudocodewords~\cite{chertkov2008efficient}. Based on this connection,~\cite{chertkov2008efficient,chertkov2011polytope} proposed iterative algorithms to search for pseudocodewords for the AWGN channel. 

We first review the definition of instantons and then rephrase instanton search as a non-convex optimization problem.
\begin{definition}(cf. e.g.~\cite{chilappagari2009instanton})
Instantons are the most probable configurations of the channel noise that result in decoding failures.
\end{definition}

At high SNRs, the WER is dominated by the most probable instantons~\cite{stepanov2005diagnosis,stepanov2006instanton}. Searching for this instanton is equivalent to the following optimization problem: Let $\bfn^{\inst}$ denote the most probable instanton for a code and a decoder, then
\begin{equation*}
\bfn^{\inst} = \argmax_{\bfn:\text{ decoding fails with noise $\bfn$}} \Pr(\bfn).
\end{equation*}
The constraint set is non-convex in general. In the AWGN channel, this can be translated to finding the minimum norm noise configuration such that the decoder fails. That is,
\begin{equation}
\label{eq.instanton_search_objective}
\min_{\bfn} \; \|\bfn\|_2^2 
+ (C - \|\bfn\|_2^2 )  \mathbb{I}_{\{\text{decoding succeeds with noise $\bfn$}\}}
\end{equation}
where $\mathbb{I}_{\{\text{\textit{statement}\}}}$ is the indicator function that equals to $1$ if \textit{statement} is true and $0$ otherwise; $C > 0$ is a large enough constant so that all decoding failures attain less cost in this objective than any decoding success. 

This optimization perspective is illustrated by a cartoon in Fig.~\ref{fig.inst_cartoon}. We aim to search for the instantons that are closest to the origin. However we also build a barrier (the indicator function in~\eqref{eq.instanton_search_objective}) such that we only consider the vectors that cause decoding failures. In addition, the cartoon shows two important points: First, the feasible region is non-convex. Second, there will be multiple instantons. Therefore multiplicity information is also important. We discuss this in Section~\ref{subsec.trapping_set}. We also note that the region for which decoding fails may not be continuous as shown in the cartoon.
\begin{figure}
	\begin{center}
    \includegraphics[width=3.5in]{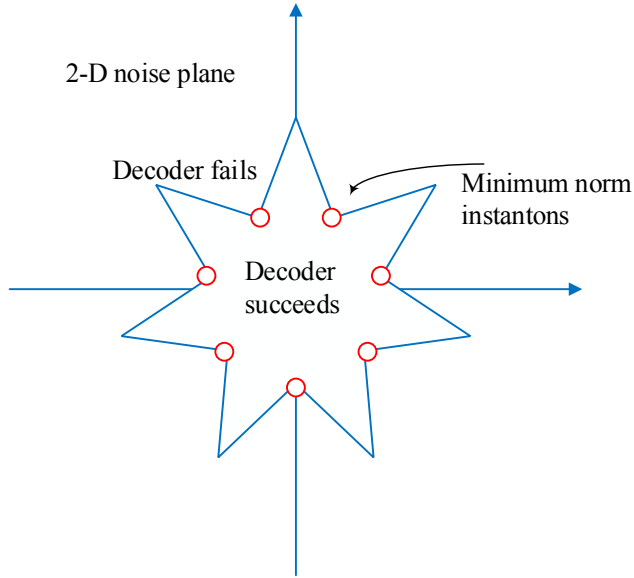}
    \end{center}
    \caption{A cartoon showing the idea of instanton.}
    \label{fig.inst_cartoon}
\end{figure}

\subsection{Instanton search algorithm for \pPD}
\label{subsection.instanton_search_algorithm}
We first introduce a specific algorithm for the ADMM penalized decoder denoted by \isapd (instanton search algorithm for the penalized decoder). Then we describe a generic algorithm, denoted by \isarf (instanton search algorithm, refining), which is used to refine the output from \isapd. We note that \isapd is already able to produce good instanton search results. However \isarf can still refine the results obtained by \isapd. This is demonstrated in Section~\ref{sec.sim.instanton}.
\subsubsection{Specific algorithm for \pPD}
In~\cite{chertkov2008efficient}, the authors propose an iterative algorithm to find the minimum weight pseudocodeword in the AWGN channel. For each iteration, the algorithm use an LP decoder to decode a noise vector that is calculated in the previous iteration to guarantee that LP decoding would fail. The output from the LP decoder is a pseudocodeword, which is used to calculate the minimum norm noise vector that can produce this failure. This noise vector is then used in the next iteration to be decoded by the LP decoder. The algorithm terminates when it converges or exceeds some maximum number of allowable iterations. This algorithm is executed repeatedly with random initializations in order to get statistics on instantons.

For \pPD, there is no analytical connection between decoder failures and instantons due to the fact that \pPD is trying to solve a non-convex problem. Whenever \pPD fails, the output is not necessarily a vertex of the fundamental polytope. However since both LP decoding and \pPD are minimizing an objective over the same constraint set, we can develop an iterative algorithm similar to~\cite{chertkov2008efficient}. Note that for
\pPD, we need a different approach to determine
a noise vector for the next iteration. Assuming that the all-zero codeword is transmitted and the decoder's output is $\bfomega$ for the current iteration, we solve the following optimization problem:
\begin{equation}
\begin{aligned}
\label{eq.instanton_opt}
\min \;  &\|\bfn\|_2^2 \\ 
\suchthat \;  &\bfgamma(\bfn)^T \bfomega  + \sum_i g(\omega_i) \leq  N g(0)
\end{aligned}
\end{equation}
where $\bfgamma(\bfn)$ is the LLR for noise $\bfn$ and $N$ is the block length.
This problem is solving for the smallest norm noise vector that confuses $\bfzero$ and $\bfomega$.
Using standard Lagrange multiplier technique, we can obtain the solution
\begin{equation}
\label{eq.instanton_search_vector}
\bfn^* = \bfomega \left(\frac{1}{\| \bfomega \|_2^2} \left[\| \bfomega \|_1 + \frac{ \sigma^2}{2}\left(N g(0) -\sum_i g(\omega_i)  \right)\right]\right),
\end{equation}
where $g(\cdot)$ is the same function in~\eqref{eq.generalpenalty} and can be replaced with the specific penalty used. Note that $\bfx^*$ is an approximation to the minimum noise vector causing \pPD. \pPD can still fail on a noise vector with smaller norm. One needs to search for the smallest norm noise vector that has the same direction as $\bfx^*$. This is captured in Algorithm~\ref{Algorithm.instanton}. Note that this algorithm may not converge. And, similar to~\cite{chertkov2008efficient}, different initializations produce different instantons. One should expect to run the algorithm multiple times to obtain statistics of instantons for \pPD.
\begin{algorithm}
\caption{Instanton Search Algorithm for \pPD (\isapd)}
\label{Algorithm.instanton}
\begin{algorithmic}[1]
\STATE Initialize a noise vector $\bfn^{0}$ sufficiently large that \pPD fails and outputs a vector $\bfomega^0$. 
\STATE Choose the maximum number of iterations allowed $T$. Choose a tolerance $\epsilon > 0$.
\FORALL{ $ k = 1,\dots, T $}
\STATE Let $\bfw$ be the solution of problem~\eqref{eq.instanton_opt} given $\bfomega^{k-1}$.
\STATE  Search for the smallest scaling factor $a>0$ such that decoder fails with noise $a\bfw$.
\STATE  Assign $\bfn^k \leftarrow a \bfw$.
\STATE Let $\bfomega^k$ be the output of decoder when decoding the noise vector $\bfn^k$.
\STATE Store $\bfn^k$ if it is the minimum norm noise vector from all previous trials.
\IF {$\| \bfn^k - \bfn^{k-1} \|_2 \leq \epsilon$}
    \STATE  Stop loop.
\ENDIF
\ENDFOR
\end{algorithmic}
\end{algorithm} 

\subsubsection{Using a generic instanton search algorithm as a refining step}
\label{subsec.nesterov}
 Although the ``amoeba'' minimization algorithm can be applied to any decoder, the cost of using this algorithm is huge. The algorithm not only converges slowly, but also requires memory that scales quadratically with respect to block length\footnote{The algorithm needs to save $n+1$ vectors of length $n$, which results in $O(n^2)$ spatial complexity.}. Thus it is infeasible to apply this algorithm to large block length codes. 

In this paper, we base our instanton search algorithm, \isarf, on Nesterov's random gradient free algorithm (cf.~\cite{nesterov2011random}). We denote by $f(\cdot)$ the objective function to minimize. In short, the algorithm in~\cite{nesterov2011random} starts at a random initial point $\bfx_0$. Fix a parameter $\eta > 0$ and pick a sequence of positive steps $\{h_k\}_{k\geq 0}$. For each iteration $k>0$, a vector $\bfu$ is generate following Gaussian distribution with correlation operator $\bfB^{-1}$. After that, a \emph{random gradient-free oracle} is computed:
\begin{equation*}
\phi_{\eta}(\bfx_k) = \frac{f(\bfx_k + \eta \bfu) - f(\bfx_k)}{\eta} \cdot \bfB \bfu.
\end{equation*}
Then the next step is determined by 
\begin{equation*}
\bfx_{k+1} = \bfx_k - h_k \bfB^{-1} \phi_{\eta}(\bfx_k)
\end{equation*}
This algorithm only requires the two most recent vectors thus consumes much less memory compared to the ``amoeba'' minimization. 

Instead of directly applying this method to the objective function in~\eqref{eq.instanton_search_objective}, we make three important tweaks: 
First, we make the objective function have a non-zero gradient when the decoder succeeds:
\begin{equation}
\label{eq.instanton_search_objective_with_cap}
\begin{split}
\min_{\bfn} \; (1-\mathbb{I}_{\{\text{decoding succeeds with noise $\bfn$}\}}) \|\bfn\|_2^2 \\
+ C \mathbb{I}_{\{\text{decoding succeeds with noise $\bfn$}\}} \left( 1 - \|\bfn\|_2^2\right).
\end{split}
\end{equation}
In other words, for each iteration, we run \pPD using the current noise vector and evaluate~\eqref{eq.instanton_search_objective_with_cap}.

Second, we leverage the knowledge learned from \isapd. Similar to the case of ``amoeba'' minimization~\cite{chilappagari2009instanton}, initialization of \isarf is crucial in order to obtain small norm instantons. In this paper, we use outputs from Algorithm~\ref{Algorithm.instanton} as the initialization. As we will show in Section~\ref{sec.sim.instanton}, the instanton search results from \isapd tend to have only a few non-zero entries and the rest of the entries are zero. We use this knowledge in our algorithm and only select the random Gaussian search direction in Nesterov's method to have support on the coordinates for which the entries are non-zero entries. 

Third, the indicator function in~\eqref{eq.instanton_search_objective_with_cap} may create a huge step size in \isarf depending on the constant $C$. Thus we restrict the search step size for each iteration by linearly shrinking the vector $h_k \bfB^{-1} \phi_{\eta}(\bfx_k)$ if its $\ell_2$-norm exceeds some threshold. 

\subsection{Numerical results of the instanton analysis}
\label{sec.sim.instanton}
We now apply the proposed algorithm to the $[155,64]$ Tanner code~\cite{tanner2001class} and a $[1057,813]$ LDPC code from~\cite{Database}. For both codes, we obtain instanton information for different penalty coefficients $\alpha_1$ and $\alpha_2$. We fix parameters (cf. Section~\ref{sec.numerical_penalty}) as follows:
$\mu = 3$, $T_{\max} = 100$, $\epsilon = 10^{-5}$, $\rho = 1.9$.\footnote{We also fixed the channel standard deviation parameter to be $0.5$ for the $[155,64]$ Tanner code and $0.45$ for the $[1057,813]$ code. This parameter is used to calcluate the log-likelihood ratios $\bfgamma$ in \pPD.} For each data point, we randomly generate $K = 1000$ initial noise vectors with power large enough to cause decoding failures and apply \isapd followed by \isarf. For \isarf, we use $C = 20000$ (cf.~\eqref{eq.instanton_search_objective}). For \isarf, we let $\bfu_k$ be an Gaussian vector with the identity covariance matrix ($\bfB = \eye$), $\eta = 10^{-10}$ and $h_k = \frac{1}{40000}$. Moreover, we restrict each search step size to have $\ell_2$ norm less than or equal to 1.

In Fig.~\ref{fig.inst_norm}, we plot the minimum instanton norm as a function of penalty coefficient $\alpha_2$. For each $\alpha_2$, we sort the $K$ instantons we find, denoted by $\bfn_i$, $i = 1,...,K$, such that $\|\bfn_{1}\|_2^2 \leq \|\bfn_{2}\|_2^2 \leq \cdots \leq \|\bfn_{K}\|_2^2$. We then plot the norm $\|\bfn_{\phi}\|_2^2$ where $\phi = K/100=10$ and label it ``$1\%$ minimum instanton'' in Fig.~\ref{fig.inst_norm}. In order to compare with BP and LP decoding, we plot the effective distance for both decoders. Note that the LP pseudo-distance is actually the norm of the most probable noise that leads to decoding failure (see also~\cite[Sec.~6]{vontobel2005graph}). Results for BP decoding are obtained from~\cite{stepanov2011instantons}.

We make the following observations: First, there is a clear trend that as the penalty increases, the minimum instanton norm decreases. This implies that the high-SNR performance of \pPD decreases as the penalty coefficient increases. Second, we observe that the fluctuation of the minimum norm curves increases as $\alpha_2$ increases. These fluctuations are partially compensated by \isarf. We observe that \isarf makes some refinements at points at $\alpha = 0.5,0.8$. It makes a significant refinement at $\alpha = 1.8$. This is important because it suggests that the results obtained by \isapd for large $\alpha_2$ are not accurate especially for $\alpha_2 \geq 1.6$. Therefore, a curve-fitted worst case instanton norm (shown in in Fig.~\ref{fig.inst_sim}) should predict WER more accurately for large $\alpha_2$'s compared to raw data. Thus we suggest using the curve-fitted results. E.g. for $\alpha = 2$, results suggest that the minimum instanton norm is $13.8$. We show in Fig.~\ref{fig.inst_sim} that this result predicts the WER slope well. 

In Fig.~\ref{fig.inst_sim}, we plot the WER asymptotics predicted by the minimum norm instanton as well as simulation results. We use $\alpha_2 = 2$ in our simulation because this value is \emph{optimal} in terms of WER for $\SNR = 1.3$ and $1.65$ for this code. All data points are based on more than 100 decoding errors. We make three observations: First, the asymptotics predicted by the minimum instanton norms and the simulated WER curves are almost parallel for each decoder at high SNRs. 
Second, there are two crossover points in Fig.~\ref{fig.inst_sim}. The first one occurs near $\SNR = 1.75$ where LP decoding outperforms BP decoding. The second one occurs near $\SNR = 2.3 $ where LP decoding outperforms \pPD with an $\ell_2$ penalty. This matches our expectation from the instanton results in Fig.~\ref{fig.inst_norm}. Third, the asymptotic curves approximate the slope of the WER. If we were to approximate the actual WER, we would need to compute the multiplicity and the curvature factor of the instantons (cf.~\cite{chilappagari2009instanton}). We discuss the multiplicity in Section~\ref{subsec.trapping_set}. 

In Fig.~\ref{fig.inst_norm_1057}, we plot the instanton profile for the $[1057,813]$ code for \pPD with both $\ell_1$ and $\ell_2$ penalties. We first use the algorithm in~\cite{chertkov2008efficient} to obtain statistics for LP decoding. Based on all instantons we observed, the minimum norm instanton for LP decoding is an actual codeword. This implies that LP decoding is as good as ML decoding at high SNRs for this code, which matches the simulation results obtained in~\cite{barman2013decomposition}. We make three comments on Fig.~\ref{fig.inst_norm_1057}. First, the algorithm in~\cite{chertkov2008efficient} is not guaranteed to find the minimum weight pseudocodeword. If the weight were actually $7$ for LP decoding then our results would suggest that we can push the penalty coefficient up to $1$ without losing much performance at high SNRs. Second, we observe significant fluctuations in terms of instanton norms. However the one percent instanton norm curves for both penalties decrease monotonically as the penalty coefficients $\alpha_1$ and $\alpha_2$ increase. The fluctuation is thus due to limited simulation time. Third, \pPD with an $\ell_1$ penalty does worse than \pPD with an $\ell_2$ penalty for a given parameter $\alpha_i$. However, the results obtained in~\cite{liu2012suppressing} suggest that the optimal $\alpha_1$ is less than the optimal $\alpha_2$. This implies that both penalties are comparable in terms of instanton norms when each decoder uses its optimal parameter setting. 

These results are important in practice. First, although \pPD is observed to outperform LP decoding in~\cite{liu2012suppressing}, it is not as good as LP decoding asymptotically in SNR for nonzero penalty coefficients. Second, in practice, we need to adjust the penalty coefficient as a function of the SNR the code operates at. In particular, we can expect that small penalty coefficients produce better error rate performances at high SNRs, as is shown in Fig.~\ref{fig.inst_sim}.

\begin{figure}
\psfrag{&A}{\scalebox{.6}{\hspace{-1cm} penalty coefficient $\alpha_2$}}
\psfrag{&B}{\scalebox{.6}{\hspace{-1cm} Instanton norm}}
\psfrag{&LP}{\scalebox{.6}{LP instanton (pseudo-distance)~\cite{chertkov2008efficient}}}
\psfrag{&OnePercentCurveL2padpadpadpadpadpad}{\scalebox{.6}{$1\%$ minimum instanton}}
\psfrag{&MinimumCurveL2}{\scalebox{.6}{Minimum instanton norm from \isapd}}
\psfrag{&MinimumRefineL2}{\scalebox{.6}{Refined results from \isarf}}
\psfrag{&BP}{\scalebox{.6}{BP decoding, 100 iterations~\cite{stepanov2011instantons}}}
\psfrag{&curve-fitted}{\scalebox{.6}{Curve-fitted instanton norm}}
	\begin{center}
    \includegraphics[width=3.5in]{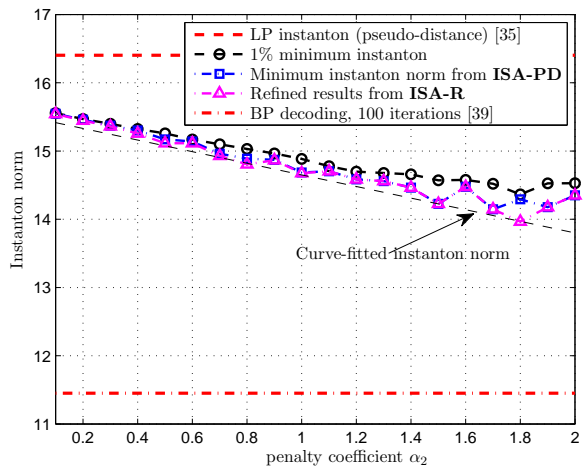}
    \end{center}
    \caption{Instanton norm as a function of penalty coefficient for the
      $[155,64]$ code when decoded by \pPD with $\ell_2$ penalty.}
    \label{fig.inst_norm}
\end{figure}

\begin{figure}
\psfrag{&ADMMLP}{\scalebox{.6}{ADMM LP decoding}}
\psfrag{&BP}{\scalebox{.6}{Sum-product BP decoding, 100 iterations}}
\psfrag{&ADMML2}{\scalebox{.6}{\pPD with $\ell_2$ penalty, 100 iterations}}
\psfrag{&InstantonPredictionForBP}{\scalebox{.6}{Asymptote, BP}}
\psfrag{&InstantonPredictionForL2-padpadpadpadpadpa}{\scalebox{.6}{Asymptote, \pPD with $\alpha_2 = 2$}}
\psfrag{&InstantonPredictionForLP}{\scalebox{.6}{Asymptote, LP}}
\psfrag{&SNR}{\scalebox{.6}{\hspace{-1mm}SNR}}
\psfrag{&WER}{\scalebox{.6}{WER}}
\begin{center}
\includegraphics[width=3.5in]{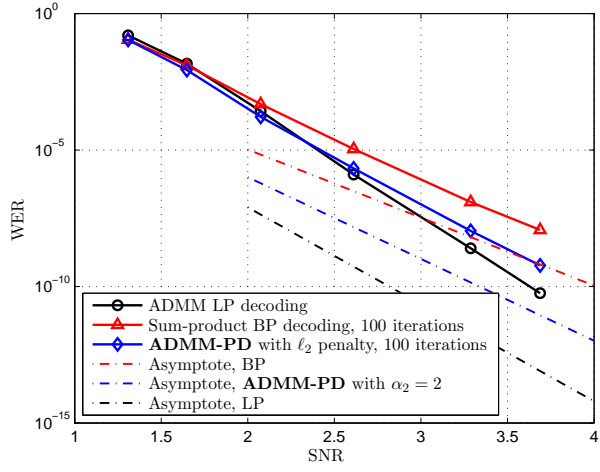}
\end{center}
\caption{Comparisons of simulation results and asymptotic slopes predicted by instantons/pseudocodewords for the $[155,64]$ code. WER is plotted as a function of SNR (not in (dB)). Each asymptote is $\sim\exp(-\|\bfn^{\inst}\|_2^2s^2/2)$, where $s^2$ is SNR. The value of $\|\bfn^{\inst}\|_2^2$ for BP, \pPD and LP are $11.48$, $13.8$ and $16.35$, respectively.}
\label{fig.inst_sim}
\end{figure}

\begin{figure}
\psfrag{&A}{\scalebox{.6}{\hspace{-2cm} penalty coefficient $\alpha_i$, $i = 1,2$}}
\psfrag{&B}{\scalebox{.6}{\hspace{-1cm} Instanton norm}}
\psfrag{&LP}{\scalebox{.6}{LP pseudo-distance}}
\psfrag{&OnePercentCurveL1}{\scalebox{.6}{$1\%$ minimum instanton, $\ell_1$ penalty}}
\psfrag{&MinimumCurveL1}{\scalebox{.6}{Minimum instanton norm, $\ell_1$ penalty}}
\psfrag{&OnePercentCurveL2padpadpadpadpad}{\scalebox{.6}{$1\%$ minimum instanton, $\ell_2$ penalty}}
\psfrag{&MinimumCurveL2}{\scalebox{.6}{Minimum instanton norm, $\ell_2$ penalty}}
	\begin{center}
    \includegraphics[width=3.5in]{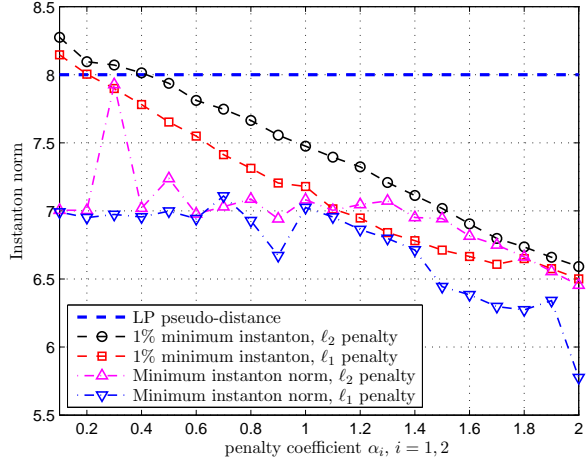}
    \end{center}
    \caption{Instanton norm as a function of penalty coefficient for the
      $[1057,813]$ code when decoded by \pPD.}
    \label{fig.inst_norm_1057}
\end{figure}

\subsection{Weaknesses of \pPD}
\label{subsec.trapping_set}
It is noticeable from Fig.~\ref{fig.inst_norm_1057} that the minimum instanton norm (not the $1\%$) stays almost constant for some range of $\alpha$. This suggests that there might be some common structures underlying these instantons, which motivates us to study the  patterns of these noise vectors. 

We observe that for all minimum norm instantons, there are only a few nonzero entries. Moreover, every instanton we studied has support on trapping set structures widely studied for BP decoders. For the $[155,64]$ code, we observe instantons with non-zeros entries on $(5,3)$ (plotted in Fig.~\ref{fig.trapping_set}) and $(6,4)$ trapping sets. For the $[1057,814]$ code, we observe instantons with non-zeros entries on $(5,1)$ and $(7,1)$ trapping sets. And the $(7,1)$ trapping set contributes to most of the norm $7$ instantons in Fig.~\ref{fig.inst_norm_1057}.

We note two aspects where these results are useful. First, the results imply that we can leverage trapping set knowledge in the literature to compute the multiplicity of a particular instanton. Second, one can expect that code designs that alleviate trapping set effect should also benefit \pPD. 

\begin{figure}
\begin{center}
\includegraphics[width=6.1cm]{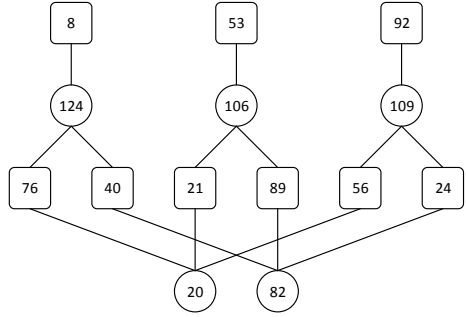}
\end{center}
\caption{The support of an instanton for the $[155,64]$ code showing a $(5,3)$ trapping set, each node indexed by the variable or check number.}
\label{fig.trapping_set}
\end{figure}

\subsection{Remarks}
We now briefly discuss our intuition gained from these results. We note that the penalized decoder aims to penalize fractional solutions by adding penalties to the objective on a symbol-by-symbol basis. As a result, the decoder more easily gets stuck at some local configurations such as trapping sets. Meanwhile, increasing the penalty coefficient can decrease the significance of channel evidence during the decoding process. Therefore a small noise can create a bias at the first iteration of decoding that is hard to reverse. So the instanton norm tends to be small for large penalty coefficients.

Next, we note that in \isapd, the search direction per iteration is the same as the algorithm proposed in~\cite{chertkov2008efficient}. However for \isapd, we need to identify the correct step size along this direction. Nevertheless, we observe from our simulation that \eqref{eq.instanton_search_vector} is a good approximation.

In addition, we note that we would need the curvature information in order to approximate the exact WER for the AWGN channel (cf.~\cite{chilappagari2009instanton}). This question has not been studied in the literature and is beyond the scope of this paper. Thus we leave it as a future research direction.

Finally, it is desirable to obtain instanton information for the BSC. However this is not trivial and is beyond the scope of this paper. The reason is that searching for instantons in the BSC can be formulated as a non-convex integer program with feasible set $\{0,1\}^{\blocklength}$ and hence is in general intractable unless some knowledge on error patterns are pre-assumed. With our findings that the failures of \pPD is related to trapping set, a natural next step is to study the dynamics of \pPD in trapping sets with the assumption that the underlying channel is the BSC. We leave instanton searching for \pPD in the BSC an open problem.
\section{Reweighted LP decoding}
In previous sections, we show that \pPD is able to improve the low SNR performance of LP decoding. In addition, \pPD also has good high SNR behaviors. However, unlike LP decoding, it is hard to prove error correction guarantees for \pPD. In this section, we introduce a reweighted LP decoding algorithm derived from \pPD with an $\ell_1$ penalty and prove results on error correction guarantees. We use \pPLP ($\ell_1$ penalized decoding) to denote \pPD with an $\ell_1$ penalty and \pRLP (reweighted LP decoding) to denote our reweighted LP algorithm. Note that this reweighted LP is different from that in~\cite{khajehnejad2012reweighted} (cf. \eqref{eq.rlp_kdhb}). As a reminder, a comparison is listed in Table~\ref{tab.rlp_comparison}.
\begin{table}
\centering
  \begin{tabular}{| c | c | c |}
  \hline
   & \pRLP & \pRLPKDHVB\\
    \hline\hline
    \# rounds of reweighting & $\geq 2$ & 2 \\ \hline
    recovery guarantees & Yes\footnotemark & Yes \\ \hline
    BSC & Yes & Yes \\ \hline
    AWGN \& general MBISO & Yes & No\\
    \hline
  \end{tabular}
  \caption{Comparisons between \pRLP and \pRLPKDHVB.}
\label{tab.rlp_comparison}
\end{table}
\footnotetext{This guarantee is for the two-round \pRLP in the BSC, which has the same assumptions as~\cite{khajehnejad2012reweighted}.}
\subsection{A reweighted LP approximation for \pPLP}
\label{sec.reweightedLP}
The general reweighted LP scheme is well known in the compressed
sensing community~\cite{candes2008enhancing}. We are interested
in reweighted LPs because it approximates a nonlinear program using
a sequence of linear programs thereby bringing a non-convex problem to a sequence of convex
problems. We show that the two-round reweighted LP for problem
(\ref{eq.lpl1penalty}) has theoretical guarantees. That is, this
reweighted LP decoding yields improved recovery threshold than LP decoding
for bit flipping channels (cf.~\cite{feldman2007lp}).

We briefly summarize the reweighted LP idea
in~\cite{candes2008enhancing}. Consider optimization problem of the
following form:
\begin{equation*}
\min_{\bfv}\quad \phi(\bfv) \qquad \suchthat\quad \bfv \in \mathbb{C},
\end{equation*}
where $\mathbb{C}$ is a convex set. The reweighted LP scheme start
from a feasible solution $\bfv^{(0)}$ and {iteratively} solves
\begin{equation}
\label{eq.general_rlp}
\begin{aligned}
\bfv^{(k+1)} = &\argmin_{\bfv}\quad \phi(\bfv^{(k)})+ \nabla \phi(\bfv^{(k)})\left(\bfv - \bfv^{(k)}\right)\\
 &\suchthat\quad \bfv \in \mathbb{C},
 \end{aligned}
\end{equation}
where $k$ denotes the iterate. 

Applying~\eqref{eq.general_rlp} to \pPLP (cf.~\eqref{eq.lpl1penalty}), we get the following reweighted LP algorithm \pRLP. Note that we drop the subscript $\alpha_1$ in \eqref{eq.lpl1penalty} and use only $\alpha$ for simplicity. 
 \vspace{0.2cm}\\
\fbox{
 \addtolength{\linewidth}{-4\fboxsep}%
 \addtolength{\linewidth}{-4\fboxrule}%
 \begin{minipage}{\linewidth}
\begin{equation}
\label{eq.general_rlp_decoding}
\begin{aligned}
\pRLP \qquad \min & \quad\bfgamma^{(k)}(\alpha)^{T} \bfx\\
 \suchthat &\quad  \bfP_j \bfx \in \PP_{d_j},\forall j\in\cJ.
 \end{aligned}
\end{equation}
where $\bfx^{(0)}$ is some initialization and
\begin{equation*}
\bfgamma^{(k)}(\alpha)_i = \bfgamma_i  - \alpha \sgn(x^{(k-1)}_i - 0.5),\forall i
\end{equation*}
\end{minipage}
}\vspace{0.2cm}

We consider a simple case for our decoding problem: we consider just
the two-round version of \pRLP. That is, we first solve \pLP and
initialize $\bfx^{(0)}$ to be the pseudocodeword $\bfxpseudo$, then we solve
for $\bfx^{(1)}$. This scheme is summarized in
Algorithm~\ref{Algorithm:reweightedLP}. From a theoretical point of
view, we show that this two-round \pRLP has a better recovery threshold
compared to \pLP. From a practical point of view, the decoding
algorithm should be fast in order to perform real time decoding. The
cost in terms of efficiency for using multiple-round reweighting could
be unacceptable. 
\begin{algorithm}
\caption{Two-round \pRLP}
\label{Algorithm:reweightedLP}
\begin{algorithmic}[1]
\STATE Solve \pLP.
\IF {The solution $\bfx^{*}$ is an integral solution}
    \STATE  Stop. {\bf return} $\bfx^{*}$.
\ELSE
    \STATE  Save pseudocodeword $\bfxpseudo = \bfx^{*}$ and proceed.
\ENDIF
\FORALL{ $ i \in \mathcal{I} $}
\STATE Update 
$\gamma^\prime_i \leftarrow \gamma_i - \alpha \sgn(x^{(p)}_i - 0.5)$
\ENDFOR
\STATE Solve
\begin{equation*}
\min  \quad\bfgamma^{\prime T} \bfx \qquad
\suchthat  \quad\bfP_j \bfx \in \PP_{d_j}. 
\end{equation*}
\end{algorithmic}
\end{algorithm}

\begin{theorem}
\label{thm.reweightedlpcwindependence}
Decoding failures for Algorithm~\ref{Algorithm:reweightedLP} are independent of the codeword that was transmitted.
\end{theorem}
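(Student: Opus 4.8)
The plan is to reduce decoding of an arbitrary transmitted codeword to decoding of the all-zero codeword, using the same two symmetries that underlie the proof of Theorem~\ref{thm.cw_independent}: output-symmetry of the channel and invariance of the fundamental polytope under codeword ``flips.'' Fix $\bfc\in\codebook$ and define the coordinate-flip map $T_{\bfc}:[0,1]^\blocklength\to[0,1]^\blocklength$ by $(T_{\bfc}(\bfx))_i=x_i$ when $c_i=0$ and $(T_{\bfc}(\bfx))_i=1-x_i$ when $c_i=1$. Because $\bfP_j\bfc$ is an even-weight vector for every check $j$, $T_{\bfc}$ flips an even number of coordinates inside each check and therefore preserves every parity polytope $\PP_{d_j}$; hence $T_{\bfc}$ is an involution mapping $\fundpoly$ onto itself with $T_{\bfc}(\bfc)=\bfzero$. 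On the channel side, output symmetry provides a probability-preserving coupling between the LLR vector $\bfgamma$ observed when $\bfc$ is sent and the vector $\tilde\bfgamma$ observed when $\bfzero$ is sent, under which $\tilde\gamma_i=(-1)^{c_i}\gamma_i$ for all $i$. It therefore suffices to show that, under this coupling, Algorithm~\ref{Algorithm:reweightedLP} fails for $\bfc$ exactly when it fails for $\bfzero$.

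First I would record that a single LP over $\fundpoly$ respects $T_{\bfc}$. Substituting $\bfx=T_{\bfc}(\bfx')$ into $\bfgamma^T\bfx$ and using $\tilde\gamma_i=(-1)^{c_i}\gamma_i$ gives $\bfgamma^T\bfx=\tilde\bfgamma^T\bfx'+(\text{const})$; since $T_{\bfc}$ preserves the feasible set, the minimizer for weights $\bfgamma$ is the $T_{\bfc}$-image of the minimizer for weights $\tilde\bfgamma$. Applying this to the first-round \pLP in Algorithm~\ref{Algorithm:reweightedLP} yields $\bfxpseudo(\bfgamma)=T_{\bfc}(\bfxpseudo(\tilde\bfgamma))$, so the two pseudocodewords are flips of one another.

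The crux is the reweighting step, where I must check that the updated cost vector inherits the same sign-flip symmetry. From $\bfxpseudo(\bfgamma)=T_{\bfc}(\bfxpseudo(\tilde\bfgamma))$ one obtains $\sgn(\xpseudo_i-0.5)=(-1)^{c_i}\sgn(\tilde x^{(p)}_i-0.5)$, since reflecting a coordinate through $0.5$ reverses the sign of its distance to $0.5$ precisely on the positions where $c_i=1$. Combining this with $\gamma_i=(-1)^{c_i}\tilde\gamma_i$ in the update $\gamma'_i=\gamma_i-\alpha\,\sgn(\xpseudo_i-0.5)$ gives $\gamma'_i=(-1)^{c_i}\tilde\gamma'_i$; that is, the second-round weights obey the very same relation as the original LLRs. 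This is exactly where property~(ii) of the penalty (oddness of $g'$ about $0.5$, here realized by $\sgn(\cdot-0.5)$) enters, and I expect this to be the one step needing genuine care.

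Finally I would re-invoke the single-LP symmetry of the second paragraph, now for the second-round program \eqref{eq.general_rlp_decoding} with the weight pair $(\bfgamma',\tilde\bfgamma')$, concluding $\bfx^{(1)}(\bfgamma')=T_{\bfc}(\bfx^{(1)}(\tilde\bfgamma'))$. Since $T_{\bfc}$ is an involution with $T_{\bfc}(\bfc)=\bfzero$, the $\bfc$-decoder returns $\bfc$ iff the $\bfzero$-decoder returns $\bfzero$; the failure events thus correspond under the probability-preserving coupling, so the two failure probabilities agree, and as $\bfc$ is arbitrary the claim follows. The only loose end is tie-breaking: if some LP has multiple optima the pseudocodeword is not unique and reweighting becomes ambiguous, so I would either restrict to the full-measure set of $\bfgamma$ on which both LPs have unique minimizers, or assume the solver breaks ties $T_{\bfc}$-consistently (as the symmetric ADMM initializations of Corollary~\ref{crly.initialpoints} do).
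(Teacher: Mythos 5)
Your proof is correct and takes essentially the same route as the paper's: reduce to the all-zero codeword via the flip map $\mcR_{\bfc}$ (your $T_{\bfc}$), show the reweighted costs inherit the sign-flip relation $\gamma'_i = (-1)^{c_i}\tilde\gamma'_i$ from the symmetry $g'(x) = -g'(1-x)$ (here realized by $\sgn(\cdot - 0.5)$), and invoke the LP flip-symmetry for both rounds (the paper's Lemma~\ref{lemma.lprelative}, which you re-derive directly from the affine objective substitution rather than citing Feldman's lemmas). Your explicit treatment of tie-breaking among multiple LP optima is a point of care the paper's proof leaves implicit.
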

\begin{proof}
See Appendix~\ref{appendix.cwindep_proof_reweightedlp}.
\end{proof}
\subsection{Theoretical guarantees of two-round \pRLP}
\label{subsec.rlp_thm}
The second LP problem in Algorithm~\ref{Algorithm:reweightedLP} is
\begin{equation}
\label{eq.reweightedlp}
\begin{split}
\min & \quad\bfgamma^{\prime T} \bfx \\
\suchthat & \quad\bfP_j \bfx \in \PP_{d_j}, 
\end{split}
\end{equation}
where $\gamma^\prime_i = \gamma_i - \alpha \sgn(x^{(p)}_i - 0.5)$ and
$\alpha>0$ is a constant. Using lemmas introduced
in~\cite{khajehnejad2012reweighted}, we show that this scheme corrects
more errors than \pLP.

\begin{lemma}(Lemma 26 in \cite{vontobel2005graph})
The fundamental cone $\fundcone := \fundcone(\bfH)$ of parity check matrix $\bfH$ is the set of all vectors $\bfw \in \mathbb{R}^n$ that satisfy
\begin{align*}
&\w_i \geq 0,  &\forall i\in\cI,\\
&\w_i \leq \sum_{i^\prime\in\Ne_c(j) \setminus i }\w_{i^\prime}, &\forall j\in\cJ,\forall i \in \cI.
\end{align*}
\end{lemma}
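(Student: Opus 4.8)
The plan is to work from the definition of the fundamental cone $\fundcone$ as the conic hull of the fundamental polytope $\fundpoly = \bigcap_{j\in\cJ}\conv(\codebook_j)$, and to prove the two inclusions between $\fundcone$ and the polyhedral cone $\fundcone' := \{\bfw : \w_i \ge 0\ \forall i,\ \w_i \le \sum_{i'\in\Ne_c(j)\setminus i}\w_{i'}\ \forall j,\ \forall i\in\Ne_c(j)\}$ described in the statement. Because the defining inequalities are local to each check, I would reduce everything to a per-check statement about the parity polytope $\PP_{\degj}$ and then stitch the checks together, using that $\fundpoly$ is the intersection of the single-check polytopes $\conv(\codebook_j) = \{\bfx : \bfP_j\bfx\in\PP_{\degj}\}$.

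For the inclusion $\fundcone \subseteq \fundcone'$, I would first verify that every codeword satisfies the listed inequalities: restricted to a check $j$ a codeword is an even-weight binary vector, so if a participating coordinate equals $1$ then even parity forces at least one further $1$ among $\Ne_c(j)\setminus i$, giving $\w_i = 1 \le \sum_{i'\in\Ne_c(j)\setminus i}\w_{i'}$, while the case $\w_i = 0$ and the nonnegativity constraints are immediate. Since $\fundcone'$ is a convex cone it then contains $\conv(\codebook_j)$ for every $j$, hence $\fundpoly$; and being defined by homogeneous inequalities it is closed under nonnegative scaling, so it also contains the conic hull $\fundcone$.

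The reverse inclusion $\fundcone' \subseteq \fundcone$ is the substantive direction. Given $\bfw\in\fundcone'$ I want a single scalar $t\ge 0$ with $\bfw/t \in \fundpoly$, i.e. $\bfP_j(\bfw/t)\in\PP_{\degj}$ for all $j$ at once. The key per-check claim is that the conic hull of $\PP_{\degj}$ is exactly the local cone $\{\bfv\ge\bfzero : v_i \le \sum_{i'\ne i} v_{i'}\}$. For this I would exhibit $\bfP_j\bfw$ as a nonnegative combination of the weight-two even vectors $\bfe_i+\bfe_{i'}$, which is precisely a degree-realizability problem: assign nonnegative weights to the edges of the complete graph on $\Ne_c(j)$ so that the weighted degree at each vertex $i$ equals $\w_i$. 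Such an assignment exists exactly when no prescribed degree exceeds the sum of the others, which is the hypothesis $\w_i\le\sum_{i'\ne i}\w_{i'}$; the sum of the edge weights is then $t_j=\tfrac12\sum_{i\in\Ne_c(j)}\w_i$, and dividing by it makes $\bfP_j\bfw/t_j$ a convex combination of vertices of $\PP_{\degj}$, hence a member of $\PP_{\degj}$. To merge the per-check scales into one global $t$ I would use that $\PP_{\degj}$ is star-shaped about the origin (it is convex and contains $\bfzero$), so $\bfP_j\bfw/t\in\PP_{\degj}$ persists for every $t\ge t_j$; taking $t=\max_j t_j$ (enlarged if needed so that $\bfw/t\in[0,1]^n$) places $\bfw/t$ in every $\conv(\codebook_j)$ simultaneously, i.e. in $\fundpoly$, whence $\bfw=t(\bfw/t)\in\fundcone$.

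The main obstacle I anticipate is the single-check realizability claim, namely that the conic hull of the parity polytope is cut out by the ``no coordinate exceeds the sum of the rest'' inequalities. The clean route is the degree-sequence construction above, for which the only constraint on a nonnegative real degree vector is $2\max_i \w_i \le \sum_i \w_i$, together with the observation that the weight-two generators already suffice: the higher-weight even vertices obey the same inequalities and therefore enlarge neither side. Once this local statement is secured, the global assembly is routine, the only subtlety being the appeal to star-shapedness about the origin to reconcile the differing per-check scaling factors $t_j$ into a single $t$.
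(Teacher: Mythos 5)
Your proposal cannot be compared against an internal argument: the paper offers no proof of this lemma, importing it by citation from \cite{vontobel2005graph}, so you are supplying a self-contained proof of an outsourced fact. Your route is correct and is essentially the canonical one: take $\fundcone$ to be the conic hull of $\fundpoly$; get the easy inclusion by verifying the homogeneous inequalities on the even-parity binary vectors and extending by convexity and nonnegative scaling; and reduce the substantive inclusion to the per-check claim that $\operatorname{cone}(\PP_{\degj}) = \{\bfv \geq \bfzero : v_i \leq \sum_{i' \neq i} v_{i'} \ \forall i\}$, proved by realizing $\bfv$ as a fractional degree sequence on the complete graph with the weight-two even vertices as generators (correctly noting the higher-weight vertices enlarge nothing), then merging the per-check scales $t_j$ into a single $t$ via star-shapedness of $\PP_{\degj}$ about $\bfzero$. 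Your extra enlargement of $t$ so that $\bfw/t \in [0,1]^n$ is genuinely needed, since $\conv(\codebook_j)$ constrains the off-check coordinates to $[0,1]$, and you correctly included it. Notably, you also silently repaired the statement as transcribed in the paper: with the quantifier $\forall i \in \cI$ the lemma is false (for $\bfH = (1\ 1\ 0)$ the vector $\bfw = (0,0,1)$ lies in the fundamental cone yet violates $w_3 \leq w_1 + w_2$); the correct form, which your $\fundcone'$ uses and which matches \cite{vontobel2005graph}, restricts to $i \in \Ne_c(j)$.

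Two repairs are needed for a complete write-up, neither fatal. First, the sentence ``$\fundcone'$ contains $\conv(\codebook_j)$ for every $j$'' is false as written, because $\fundcone'$ also imposes the inequalities of the other checks, which members of $\codebook_j$ need not satisfy; the correct assembly is that each inequality indexed by a pair $(j,i)$ with $i \in \Ne_c(j)$ is valid on $\conv(\codebook_j)$ (checked on vertices, extended by linearity), hence every defining inequality of $\fundcone'$ is valid on the intersection $\fundpoly$. Second, the degree-realizability fact is asserted rather than proved; since it is the crux, it deserves an argument. A one-line duality closure works: the cone generated by $\{\bfe_i + \bfe_{i'}\}$ is finitely generated, hence closed, and its dual consists of all $\bfa$ with $a_i + a_{i'} \geq 0$ for every pair, so $\bfa$ has at most one negative entry, say $a_1 = -b \leq 0$ with $a_i \geq b$ for $i \neq 1$, giving $\sum_i a_i v_i \geq b\left(\sum_{i \neq 1} v_i - v_1\right) \geq 0$ whenever $\bfv \geq \bfzero$ satisfies your inequalities; Farkas then yields membership. (An induction pairing the two largest coordinates works equally well.) With these cosmetic fixes your proof is complete.
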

\begin{definition}(Fundamental Cone Property~\cite{khajehnejad2012reweighted})
Let $\F\subset \{1,2,\dots,n\}$ and $\C\geq 1$ be fixed. A code $\codebook$ with parity check matrix $\bfH$ is said to have the fundamental cone property $FCP(\F,\C)$, if for every nonzero vector $\w \in \fundcone$ the following holds:
\begin{equation*}
\C\Vert\w_{\F}\Vert_1 < \Vert\w_{\F^c} \Vert.
\end{equation*}
If for every index set $\F$ of size $k$, $\codebook$ has the $FCP(\F,\C)$, then we say that $\codebook$ has $FCP(k,\C)$.
\end{definition}

The key insight of $FCP$ is that when $\C>1$, even if \pLP fails, the pseudocodewords still have the nice property described by Lemma~\ref{lemma.FCPl1bound} below. This turns out to be crucial for reweighted LPs.
\begin{lemma}(Lemma 4.1 in \cite{khajehnejad2012reweighted})
\label{lemma.FCPl1bound}
Let $\codebook$ be a code that has the $FCP(\F,\C)$ for some index set $\F$ and some $\C\geq1$. Suppose that a codeword $\bfx$ from $\codebook$ is transmitted through a bit flipping channel, and the received codeword is $\bfy$. If the pseudocodeword $\bfxpseudo$ is the output of the LP decoder for the received codeword $\bfy$, then the following holds:
\begin{equation*}
\Vert\bfxpseudo - \bfx\Vert_1 < 2 \frac{\C+1}{\C-1}\Vert(\bfy - \bfx)_{\F^c}\Vert_1.
\end{equation*}
\end{lemma}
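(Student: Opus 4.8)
The plan is to compress the whole argument into one vector living in the fundamental cone, and then play LP optimality off against the $FCP$ inequality. First I would exploit the codeword symmetry of $\fundpoly$ (the same symmetry that underlies the all-zero assumption in Theorem~\ref{thm.cw_independent}): for the transmitted codeword $\bfx$, reflecting coordinates $z_i \mapsto 1-z_i$ on $\{i:x_i=1\}$ maps $\fundpoly$ onto itself, since every check sees an even number of flipped coordinates. Hence the image of the LP output $\bfxpseudo$ under this reflection lies in $\fundpoly\subseteq\fundcone$, and coordinatewise it equals $\bfw$ with $\w_i=|\xpseudo_i-x_i|$. This $\bfw$ is the object I would work with: it satisfies $\bfw\in\fundcone$, $0\le \w_i\le 1$, and $\|\bfw\|_1=\|\bfxpseudo-\bfx\|_1$. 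Applying $FCP(\F,\C)$ to $\bfw$ gives $\C\|\bfw_{\F}\|_1<\|\bfw_{\F^c}\|_1$, so that $\|\bfw\|_1=\|\bfw_{\F}\|_1+\|\bfw_{\F^c}\|_1<\tfrac{\C+1}{\C}\|\bfw_{\F^c}\|_1$. The task then reduces to bounding $\|\bfw_{\F^c}\|_1$ by $\|(\bfy-\bfx)_{\F^c}\|_1$.

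Next I would inject optimality of \pLP. Write $E=\{i:y_i\ne x_i\}$ for the set of flipped positions, so that $\|(\bfy-\bfx)_{\F^c}\|_1=|E\cap\F^c|$. Taking the BSC log-likelihoods $\gamma_i=1-2y_i$ (legitimate since \pLP is invariant to positive scaling of $\bfgamma$), feasibility of $\bfx$ together with optimality of $\bfxpseudo$ gives $\bfgamma^T(\bfxpseudo-\bfx)\le 0$. A one-line coordinatewise check shows $\gamma_i(\xpseudo_i-x_i)=\w_i$ for $i\notin E$ and $\gamma_i(\xpseudo_i-x_i)=-\w_i$ for $i\in E$, so this inequality is exactly $\|\bfw_{E^c}\|_1\le\|\bfw_{E}\|_1$.

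Finally I would combine the two facts. Splitting $\F^c$ according to $E$ and using $\w_i\le1$ on the flipped positions, $\|\bfw_{\F^c}\|_1\le |E\cap\F^c|+\|\bfw_{E^c}\|_1\le |E\cap\F^c|+\|\bfw_{E}\|_1$; splitting $E$ according to $\F$ and again using $\w_i\le1$ gives $\|\bfw_{E}\|_1\le\|\bfw_{\F}\|_1+|E\cap\F^c|$. Therefore $\|\bfw_{\F^c}\|_1\le 2|E\cap\F^c|+\|\bfw_{\F}\|_1<2|E\cap\F^c|+\tfrac{1}{\C}\|\bfw_{\F^c}\|_1$, and rearranging (using $\C>1$; the case $\C=1$ makes the claimed bound vacuous) yields $\|\bfw_{\F^c}\|_1<\tfrac{2\C}{\C-1}|E\cap\F^c|$. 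Substituting into $\|\bfw\|_1<\tfrac{\C+1}{\C}\|\bfw_{\F^c}\|_1$ gives $\|\bfxpseudo-\bfx\|_1<\tfrac{2(\C+1)}{\C-1}\|(\bfy-\bfx)_{\F^c}\|_1$, which is the claim.

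The hard part is really the very first step: verifying that $\bfw=|\bfxpseudo-\bfx|$ lands in $\fundcone$, because that is exactly what unlocks the $FCP$ inequality, and it relies on the codeword symmetry of $\fundpoly$ rather than on anything about the error pattern. Everything after that is bookkeeping, but the one spot that needs genuine care is the use of $0\le\w_i\le1$ to convert the $\ell_1$-mass of $\bfw$ on error coordinates into a \emph{count} $|E\cap\F^c|$; this pointwise bound is the bridge between the purely geometric $FCP$ statement about the cone and the combinatorial quantity $\|(\bfy-\bfx)_{\F^c}\|_1$ that appears in the conclusion. I would also note that no containment relation between $\F$ and $E$ is needed, so the lemma holds for the arbitrary index set $\F$ as stated.
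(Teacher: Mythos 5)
Your proof is correct and, since the paper itself states this lemma with a citation (Lemma 4.1 of \cite{khajehnejad2012reweighted}) rather than reproducing a proof, the right comparison is with the original argument there---which yours essentially reconstructs: the codeword-symmetry (relative-vector) step placing $\bfw$ with $\w_i = |\xpseudo_i - x_i|$ in $\fundpoly \subseteq \fundcone$, the LP-optimality inequality $\Vert\bfw_{E^c}\Vert_1 \le \Vert\bfw_{E}\Vert_1$ under the BSC likelihoods $\gamma_i = 1-2y_i$, and the $FCP$ inequality combined through the pointwise bound $\w_i \le 1$ are exactly the ingredients of the cited proof, assembled in the same order. The only point worth making explicit is that invoking $FCP(\F,\C)$ requires $\bfw \neq 0$, i.e., $\bfxpseudo \neq \bfx$; this is implicit in the hypothesis that the LP output is a pseudocodeword, which in this paper's terminology means a fractional (hence non-codeword) solution.
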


Using Lemma.~\ref{lemma.FCPl1bound}, we show our main theorem for two-round \pRLP.
\begin{theorem}
\label{thm.reweightguarantee}
Consider bit flipping channels. If a code $\codebook$ satisfies $FCP(p_{sd}n,\C)$ for $\C>1$ and $p_{sd} > 0$ then for all $0<\epsilon  < \frac{\C-1}{4(\C+1)}$, every error set of size $(1+\epsilon)p_{sd}n$ can be corrected by Algorithm~\ref{Algorithm:reweightedLP}.
\end{theorem}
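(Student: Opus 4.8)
The plan is to reduce correctness of the second LP to a fundamental-cone (dual-witness) condition and then verify that condition by combining the $\ell_1$ control on the first-round pseudocodeword (Lemma~\ref{lemma.FCPl1bound}) with the $FCP$ inequality. By Theorem~\ref{thm.reweightedlpcwindependence} I may assume the all-zeros codeword was transmitted, so $\bfx=\bfzero$. Normalizing the BSC log-likelihood ratios, I take $\gamma_i=+1$ on the set of correctly received positions $\mathcal{E}^c$ and $\gamma_i=-1$ on the error set $\mathcal{E}$, where $|\mathcal{E}|=(1+\epsilon)p_{sd}n$. The goal is to show that $\bfzero$ is the unique optimum of the reweighted program~\eqref{eq.reweightedlp}; by the standard LP-decoding optimality criterion together with the fundamental-cone characterization of~\cite{vontobel2005graph,feldman2007lp}, this is equivalent to $\sum_i \gamma'_i \w_i > 0$ for every nonzero $\bfw \in \fundcone$, recalling that $\w_i \geq 0$ on the cone.

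First I would control the first-round pseudocodeword $\bfxpseudo$. Since $\codebook$ has $FCP(p_{sd}n,\C)$ it has $FCP(\F',\C)$ for \emph{every} index set $\F'$ of size $p_{sd}n$; choosing $\F'\subseteq\mathcal{E}$ minimizes $\Vert\bfy_{\F'^c}\Vert_1=|\mathcal{E}\setminus\F'|=\epsilon p_{sd}n$, so Lemma~\ref{lemma.FCPl1bound} gives $\Vert\bfxpseudo\Vert_1 < 2\frac{\C+1}{\C-1}\epsilon p_{sd}n$. Let $S_+ = \{i:\xpseudo_i>1/2\}$ be the positions where the first LP ``leaned the wrong way.'' Because every coordinate of $\bfxpseudo$ lies in $[0,1]$ and those in $S_+$ exceed $1/2$, I get $|S_+|\le 2\Vert\bfxpseudo\Vert_1 < 4\frac{\C+1}{\C-1}\epsilon p_{sd}n$, which is strictly below $p_{sd}n$ precisely because $\epsilon < \frac{\C-1}{4(\C+1)}$.

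Next I would analyze the reweighted cost $\gamma'_i = \gamma_i - \alpha\,\sgn(\xpseudo_i - 1/2)$, fixing the penalty at $\alpha = \frac{\C+1}{\C-1}>1$. For positions with $\xpseudo_i<1/2$ the reweighting adds $+\alpha$, so $\gamma'_i = 1+\alpha$ on correct bits and $\gamma'_i = \alpha-1>0$ on error bits; in both cases $\gamma'_i \ge \alpha-1>0$. Only positions in $S_+$ (and, after a tie-breaking convention for $\xpseudo_i=1/2$, the error positions there) can have $\gamma'_i<0$, and there $\gamma'_i\ge -(1+\alpha)$. Collecting these ``bad'' positions into a set $B$ with $|B|<p_{sd}n$, I enlarge $B$ to an index set $\F$ of size exactly $p_{sd}n$ and invoke $FCP(\F,\C)$, namely $\C\Vert\bfw_{\F}\Vert_1 < \Vert\bfw_{\F^c}\Vert_1$. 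Splitting the objective and using $\gamma'_i\ge\alpha-1$ on $\F^c$, $\gamma'_i\ge-(1+\alpha)$ on $\F$, together with $\w_i\ge0$, yields
\begin{equation*}
\sum_i \gamma'_i \w_i \;\ge\; (\alpha-1)\Vert\bfw_{\F^c}\Vert_1 - (\alpha+1)\Vert\bfw_{\F}\Vert_1 \;>\; \big[(\alpha-1)\C - (\alpha+1)\big]\Vert\bfw_{\F}\Vert_1,
\end{equation*}
where the strict step uses $\alpha-1>0$ and strict $FCP$. The bracket is nonnegative exactly when $\alpha\ge\frac{\C+1}{\C-1}$, so the sum is strictly positive for every nonzero $\bfw$ (when $\Vert\bfw_{\F}\Vert_1=0$, $FCP$ forces $\bfw_{\F^c}\ne\bfzero$ and positivity follows from the first inequality). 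This establishes the dual-witness condition, hence exact recovery.

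The main obstacle I anticipate is the sign bookkeeping of the reweighting rather than any deep step: one must verify that the only coordinates that can turn cost-negative are those in $S_+$ (plus a controlled tie set), so that the small $\ell_1$-norm of $\bfxpseudo$ translates into a bad set that fits inside a size-$p_{sd}n$ window on which $FCP$ applies. Care is also needed to keep the two index sets straight — the window $\F'$ used to bound $\bfxpseudo$ via Lemma~\ref{lemma.FCPl1bound} is chosen inside $\mathcal{E}$, whereas the window $\F$ used in the dual witness is chosen to contain $B$; both are legitimate because $FCP(p_{sd}n,\C)$ holds for \emph{all} windows of that size. Finally I would make explicit that the penalty must satisfy $\alpha\ge\frac{\C+1}{\C-1}$ for the argument to close, which is (up to the factor $4$) the reciprocal of the admissible range of $\epsilon$.
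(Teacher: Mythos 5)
Your proposal is correct and follows essentially the same route as the paper's proof: you bound $\Vert\bfxpseudo\Vert_1$ via Lemma~\ref{lemma.FCPl1bound} applied to a size-$p_{sd}n$ subset of the error set, use the factor-2 Markov argument to show the coordinates leaning past $1/2$ form a set smaller than $p_{sd}n$, and close with the same $FCP$ dual-witness computation with $\alpha$ at (the paper takes it strictly above) the threshold $\frac{\C+1}{\C-1}$. The only differences are cosmetic bookkeeping — you enlarge the bad set to a full size-$p_{sd}n$ window where the paper implicitly uses monotonicity of the $FCP$ on the smaller set $\mcS^c$, and you lump the signs into a single bad set $B$ where the paper writes out the four-way partition $\mcS\cap\F$, $\mcS\cap\F^c$, $\mcS^c\cap\F$, $\mcS^c\cap\F^c$ — so both arguments are the same proof in substance.
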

\begin{proof}
Assume that the all-zeros codeword is transmitted. Suppose the received
vector $\bfy$ has $(1+\epsilon)p_{sd}n$ errors. Let
$\F\subset\{1,2,\dots,n\}$ be the set of these flipped bits.  By
Lemma~\ref{lemma.FCPl1bound},
\begin{equation*}
\Vert\bfx^{(p)}\Vert_1<2\frac{\C+1}{\C-1}\epsilon p_{sd}n.
\end{equation*}
Let $\mcS:=\{i|\xpseudo_i<0.5\}$, then 
\begin{equation}
\label{eq.set_bound}
\left\vert \mcS^c\right\vert < 4\frac{\C+1}{\C-1}\epsilon p_{sd}n.
\end{equation}
Let $\epsilon < \frac{\C-1}{4(\C+1)}$, then $\left\vert \mcS^c\right\vert < p_{sd} n$. Then $\codebook$ has $FCP(\mcS^c,\C)$.
Now rewrite $\bfgamma^{\prime}$ as 
\begin{equation}
\label{eq.reweightedllrassignment}
\gamma_i^\prime = 
\left\{ \begin{array}{rcl}
1+\alpha,&\qquad &i\in \mcS\cap \F^c\\
1-\alpha,&\qquad &i\in \mcS^c\cap \F^c\\
-1-\alpha,&\qquad &i\in \mcS^c\cap \F\\
-1+\alpha,&\qquad &i\in \mcS\cap \F
\end{array}\right..
\end{equation}
Let $\alpha > 1$, then $\alpha - 1<0$. Therefore $\forall \bfw\in\fundcone$,
\begin{align*}
\sum_i\gamma_i^\prime \w_i &= \sum_{i\in \mcS\cap \F^c} (1+\alpha)\w_i + \sum_{i\in \mcS^c\cap \F^c} (1-\alpha)\w_i\\
&\quad+\sum_{i\in \mcS^c\cap \F} (-1-\alpha)\w_i + \sum_{i\in \mcS\cap \F}(-1+\alpha)\w_i \\
&> \sum_{i\in \mcS\cap \F^c} (-1+\alpha)\w_i + \sum_{i\in \mcS\cap \F}(-1+\alpha)\w_i\\
&\quad+\sum_{i\in \mcS^c\cap \F} (-1-\alpha)\w_i + \sum_{i\in \mcS^c\cap \F^c} (-1-\alpha)\w_i\\
&= \sum_{i\in \mcS} (-1+\alpha)\w_i + \sum_{i\in \mcS^c}(-1-\alpha)w_i\\
&= (\alpha-1)\left[ \sum_{i\in \mcS}\w_i - \frac{\alpha+1}{\alpha-1}\sum_{i\in \mcS^c}\w_i \right]\\
&= (\alpha-1)\left[\Vert \w_\mcS\Vert_1 - \frac{\alpha+1}{\alpha-1}\Vert \w_{\mcS^c}\Vert_1\right].
\end{align*}
Since $\codebook$ has $FCP(\mcS^c,\C)$, pick $\alpha > \frac{\C+1}{\C-1}$, then $1<\frac{\alpha+1}{\alpha-1} < \C$ and
\begin{equation*}
\Vert \w_\mcS\Vert_1 - \frac{\alpha+1}{\alpha-1}\Vert \w_{\mcS^c}\Vert_1 > \Vert \w_\mcS\Vert_1 - \C\Vert \w_{\mcS^c}\Vert_1 > 0.
\end{equation*}
Therefore $\sum_i\gamma_i^\prime w_i > 0 $ for all $ \bfw\in\fundcone$. This implies that the all-zeros solution is the minimizer. Then LP problem~(\ref{eq.reweightedlp}) recovers the all-zeros codeword.
\end{proof}
\begin{corollary}
Let $\mathcal{G}$ be the factor graph of a code $\codebook$ of length $n$ and the rate $R = \frac{m}{n}$, and let $\delta>2/3+1/c$. If $\mathcal{G}$ is a bipartite $(\alpha n , \delta c)$ expander graph, then Algorithm~\ref{Algorithm:reweightedLP} succeeds, as long as at most $t(1+\epsilon)n$ bits are flipped by the channel, where $t = \frac{3\delta - 2}{2\delta - 1}\alpha$ and $\epsilon = \frac{1}{16\delta c - 8c - 4}$.
\end{corollary}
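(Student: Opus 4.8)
The plan is to deduce this corollary from Theorem~\ref{thm.reweightguarantee}, whose hypothesis is a fundamental cone property of the form $FCP(p_{sd}n,\C)$ with $\C>1$. The whole task therefore reduces to two pieces: showing that a bipartite $(\alpha n,\delta c)$ expander satisfies $FCP(tn,\C)$ for an explicit constant $\C>1$ with $t=\frac{3\delta-2}{2\delta-1}\alpha$, and then checking that the recovery threshold $\frac{\C-1}{4(\C+1)}$ of Theorem~\ref{thm.reweightguarantee} coincides with the claimed $\epsilon$. I note first that the hypothesis $\delta>2/3+1/c$ and the fraction $t=\frac{3\delta-2}{2\delta-1}\alpha$ are exactly the expansion condition and the error-correction fraction in the expander-based \pLP analysis of Feldman \emph{et al.}~\cite{feldman2007lp}; in particular $\delta>2/3$ guarantees $t>0$, and since $\delta<1$ one has $t<\alpha$, so index sets of size up to $tn$ --- the only sets for which $FCP$ is invoked inside the proof of Theorem~\ref{thm.reweightguarantee} --- lie within the expansion regime $|S|\le\alpha n$.

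The heart of the argument is to upgrade the ``\pLP succeeds'' conclusion of~\cite{feldman2007lp} into the quantitative cone bound feeding Lemma~\ref{lemma.FCPl1bound}. First I would invoke the standard unique-neighbor count: since the variable nodes are $c$-regular and every set $S$ of variable nodes with $|S|\le\alpha n$ has at least $\delta c|S|$ check neighbors, a double-counting of edges shows that $S$ has at least $(2\delta-1)c|S|$ \emph{unique} neighbors. Feeding these unique-neighbor checks into the defining inequalities of the fundamental cone $\fundcone$ (Lemma~26 of~\cite{vontobel2005graph}), one shows that for every index set $\F$ with $|\F|\le tn$ and every nonzero $\bfw\in\fundcone$ the weight carried on $\F$ is strictly dominated by the weight on $\F^c$, with a multiplicative gap governed by the unique-neighbor ratio $(2\delta-1)c$. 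Carrying this bookkeeping through yields $FCP(tn,\C)$ with
\begin{equation*}
\C=\frac{(2\delta-1)c}{(2\delta-1)c-1},
\end{equation*}
which exceeds $1$ precisely when $(2\delta-1)c>1$, a condition implied by $\delta>2/3+1/c$.

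With this $\C$ in hand the rest is arithmetic. Writing $q=(2\delta-1)c$, so that $\C=\frac{q}{q-1}$, I compute $\C-1=\frac{1}{q-1}$ and $\C+1=\frac{2q-1}{q-1}$, whence
\begin{equation*}
\frac{\C-1}{4(\C+1)}=\frac{1}{4(2q-1)}=\frac{1}{8q-4}=\frac{1}{16\delta c-8c-4}=\epsilon.
\end{equation*}
Applying Theorem~\ref{thm.reweightguarantee} with $p_{sd}=t$ and this $\C$ then certifies that Algorithm~\ref{Algorithm:reweightedLP} corrects every error set of size $(1+\epsilon')tn$ for each $\epsilon'<\frac{\C-1}{4(\C+1)}=\epsilon$, which is the assertion of the corollary with $\epsilon$ read as the supremal tolerable excess fraction.

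The main obstacle is the middle step: extracting the \emph{exact} cone constant $\C=\frac{(2\delta-1)c}{(2\delta-1)c-1}$ rather than merely some $\C>1$, since it is this precise value that makes the threshold of Theorem~\ref{thm.reweightguarantee} match the stated $\epsilon$. This amounts to tracking the unique-neighbor expansion carefully through the fundamental-cone inequalities --- essentially re-deriving the dual-witness construction of~\cite{feldman2007lp} in the weighted $FCP$ language of~\cite{khajehnejad2012reweighted}. A secondary, cosmetic point is the boundary case: Theorem~\ref{thm.reweightguarantee} requires the strict inequality $\epsilon<\frac{\C-1}{4(\C+1)}$, whereas the corollary states equality, so strictly speaking the conclusion is that every fraction strictly below the stated $\epsilon$ is correctable.
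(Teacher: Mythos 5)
Your proposal follows essentially the same route as the paper: the paper's proof is a two-line appeal to Theorem 5.1 of \cite{khajehnejad2012reweighted}, which supplies $FCP(tn,\C)$ with $\C=\frac{2\delta-1}{2\delta-1-1/c}$ --- exactly the constant $\frac{(2\delta-1)c}{(2\delta-1)c-1}$ you propose to re-derive from unique-neighbor expansion --- followed by an application of Theorem~\ref{thm.reweightguarantee}. Your explicit arithmetic identifying $\frac{\C-1}{4(\C+1)}$ with $\frac{1}{16\delta c-8c-4}$ (left implicit in the paper) is correct, and your caveat that Theorem~\ref{thm.reweightguarantee} only gives correction for fractions strictly below this threshold is a fair, minor observation that applies equally to the paper's statement.
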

\begin{proof}
By Theorem 5.1 in \cite{khajehnejad2012reweighted}, $FCP(tn,\C)$ holds, where $\C = \frac{2\delta - 1}{2\delta - 1 - 1/c}$. Then by Theorem~\ref{thm.reweightguarantee}, we can correct $(1+\epsilon)tn$ errors.
\end{proof}

\subsection{A special case of two-round \pRLP}
\label{subsec.rlpinf}
We now examine the reweighted LP obtained by letting $\alpha\rightarrow\infty$. The algorithm is summarized in Algorithm~\ref{Algorithm:another_reweightedLP}. We can consider the linear program~\eqref{eq.newrlp} as the LP decoding under a new bit flipping channel, where the output of the channel is determined solely by penalizing pseudocodewords.
\begin{algorithm}
\caption{Two-round \pRLP, $\alpha = +\infty$}
\label{Algorithm:another_reweightedLP}
\begin{algorithmic}[1]
\STATE Solve \pLP.
\IF {The solution $\bfx^{(0)}$ is an integral solution}
    \STATE  Stop. {\bf return} $\bfx^{(0)}$.
\ELSE
    \STATE  Save the pseudocodeword $\bfxpseudo = \bfx^{(0)}$ and proceed.
\ENDIF
\FORALL{ $ i \in \mathcal{I} $}
\STATE Update $\gamma'_i = - \sgn(x^{(p)}_i - 0.5)$.
\ENDFOR
\STATE Solve
\begin{equation}
\label{eq.newrlp}
\begin{split}
\min & \quad\bfgamma'^{T} \bfx \\
\suchthat & \quad\bfP_j \bfx \in \PP_{d_j} .
\end{split}
\end{equation}
\end{algorithmic}
\end{algorithm}

\begin{corollary}
\label{corollary.newreweigthedlp}
Theorem~\ref{thm.reweightguarantee} also holds for Algorithm~\ref{Algorithm:another_reweightedLP}.
\end{corollary}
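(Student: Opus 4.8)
The plan is to follow the proof of Theorem~\ref{thm.reweightguarantee} almost verbatim, observing that its first-round analysis never refers to the value of $\alpha$ used in the second round. Concretely, I would again assume the all-zeros codeword is transmitted, let $\F\subset\{1,\dots,n\}$ be the set of $(1+\epsilon)p_{sd}n$ flipped bits, and apply Lemma~\ref{lemma.FCPl1bound} to the pseudocodeword $\bfxpseudo$ produced by \pLP to obtain $\Vert\bfxpseudo\Vert_1 < 2\frac{\C+1}{\C-1}\epsilon p_{sd}n$. Defining $\mcS:=\{i\mid \xpseudo_i<0.5\}$ exactly as before yields $|\mcS^c| < 4\frac{\C+1}{\C-1}\epsilon p_{sd}n$ (this is \eqref{eq.set_bound}), and the same hypothesis $\epsilon < \frac{\C-1}{4(\C+1)}$ forces $|\mcS^c| < p_{sd}n$, so that $\codebook$ has $FCP(\mcS^c,\C)$. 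Every step here depends only on the first LP and the channel, so it carries over unchanged to Algorithm~\ref{Algorithm:another_reweightedLP}.

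The only genuine difference is the second-round objective. For Algorithm~\ref{Algorithm:another_reweightedLP} the reweighted cost is $\gamma'_i = -\sgn(\xpseudo_i-0.5)$, which equals $+1$ on $\mcS$ and $-1$ on $\mcS^c$ irrespective of $\F$. Using $w_i\geq0$ for all $\bfw$ in the fundamental cone $\fundcone$ (Lemma~26), I would compute directly
\begin{equation*}
\sum_i \gamma'_i w_i = \sum_{i\in\mcS} w_i - \sum_{i\in\mcS^c} w_i = \Vert \bfw_{\mcS}\Vert_1 - \Vert \bfw_{\mcS^c}\Vert_1.
\end{equation*}
Since $\codebook$ has $FCP(\mcS^c,\C)$ with $\C>1$, for every nonzero $\bfw\in\fundcone$ we have $\Vert\bfw_{\mcS}\Vert_1 > \C\Vert\bfw_{\mcS^c}\Vert_1 \geq \Vert\bfw_{\mcS^c}\Vert_1$, hence $\sum_i\gamma'_i w_i > 0$. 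Strict positivity of the objective over all nonzero cone directions shows the all-zeros codeword is the unique minimizer of \eqref{eq.newrlp}, so the decoder recovers it; this establishes the corollary.

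I expect essentially no obstacle: this case is in fact \emph{cleaner} than the finite-$\alpha$ theorem, because the mixed four-way splitting of $\bfgamma'$ in \eqref{eq.reweightedllrassignment} collapses to a pure $\pm1$ weighting and the auxiliary requirement $\alpha>\frac{\C+1}{\C-1}$ evaporates. One can also view it as the limit of the theorem's argument, where the factor $\frac{\alpha+1}{\alpha-1}$ appearing in the lower bound $(\alpha-1)\bigl[\Vert\bfw_{\mcS}\Vert_1 - \frac{\alpha+1}{\alpha-1}\Vert\bfw_{\mcS^c}\Vert_1\bigr]$ tends to $1<\C$ as $\alpha\to\infty$; since scaling the LP objective by the positive constant $1/\alpha$ does not change the minimizer, the $\alpha=+\infty$ algorithm is exactly this limit. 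The only point I would be careful to state explicitly is that the derivation of $FCP(\mcS^c,\C)$ and the constraint on $\epsilon$ are untouched by the change in $\alpha$, so the corollary inherits the identical threshold and error-correction radius from Theorem~\ref{thm.reweightguarantee}.
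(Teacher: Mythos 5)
Your proposal is correct and takes essentially the same route as the paper: the paper's own proof of Corollary~\ref{corollary.newreweigthedlp} likewise notes that the only change from Theorem~\ref{thm.reweightguarantee} is that the four-way assignment \eqref{eq.reweightedllrassignment} collapses to a two-valued weighting ($+\alpha$ on $\mcS$, $-\alpha$ on $\mcS^c$, normalized to $\pm 1$), after which the $FCP(\mcS^c,\C)$ property immediately gives $\sum_i \gamma_i' \w_i > 0$ for all nonzero $\bfw \in \fundcone$. Your explicit computation $\Vert \bfw_{\mcS}\Vert_1 - \Vert \bfw_{\mcS^c}\Vert_1 > (\C-1)\Vert \bfw_{\mcS^c}\Vert_1 \geq 0$ merely fills in the step the paper leaves implicit, and your limiting-$\alpha$ remark is consistent with the theorem's condition $\alpha > \frac{\C+1}{\C-1}$.
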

\begin{proof}
The only difference between this proof and proof for Theorem~\ref{thm.reweightguarantee} is the values for $\bfgamma^\prime$ in \eqref{eq.reweightedllrassignment}. The new values are
\begin{equation*}
\gamma_i^\prime = 
\left\{ \begin{array}{rcl}
\alpha,&\qquad &i\in T\\
-\alpha,&\qquad &i\in T^c
\end{array}\right..
\end{equation*}
Then we can normalize the objective and let $\alpha = \pm1$. Using $FCP$, we immediately deduce that this LP recovers the all-zero codeword.
\end{proof}

Note that Theorem~\ref{thm.reweightguarantee} is conservative since
the proof considers the worst case scenario. Probabilistic analyses
introduced in~\cite{daskalakis2008probabilistic}
and~\cite{arora2012message} cannot be applied directly to this case
because the second LP takes the pseudocodeword that results from the
first LP as part of its input. The probabilistic properties for
pseudocodewords should not be considered to be the same as the
received vectors from a memoryless channel. Thus, probabilistic
analyses for reweighted LP decoding remains open.

\subsection{Numerical results of \pRLP}
\label{sec.numerical.reweightedlp}
In this section, we show numerical results for error performance of \pRLP. In Fig.~\ref{fig.sim_margulis_rlp}, we again simulate the $[2640,1320]$ Margulis code using two-round \pRLP and compare its performance to other decoders. We make the following 
observations: First, two-round \pRLP outperforms LP decoding significantly (by around $0.3\dB$) even though the provable 
improvement presented in the previous section is small. This shows that \pRLP is a good practical decoding algorithm for decoding LDPC codes. Second, there is still a $0.3\dB$ SNR gap between
\pPLP and two-round \pRLP. Since two-round \pRLP solves two LPs, each with complexity similar to \pPLP, it is still desirable to 
apply \pPLP in practice. Interested readers are referred to~\cite{liu2012the} for simulation results for the BSC.
\begin{figure}[h]
\psfrag{&ADMMLP}{\scalebox{.6}{ADMM LP decoding}}
\psfrag{&ADMM-ReweightedLP}{\scalebox{.6}{Two-round \pRLP}}
\psfrag{&BPNonSat}{\scalebox{.6}{Non-saturating BP decoding}}
\psfrag{&&ADMM-L1PenaliedDecoder}{\scalebox{.6}{\pPD, $\ell_1$ penalty }}
\psfrag{&G}{\scalebox{.6}{\hspace{-2mm}$E_b/N_0(\dB)$}}
\psfrag{&H}{\scalebox{.6}{WER}}

\psfrag{1}{\scriptsize{$1$}}
\psfrag{1.5}{\scriptsize{$1.5$}}
\psfrag{2}{\scriptsize{$2$}}
\psfrag{2.5}{\scriptsize{$2.5$}}
\psfrag{3}{\scriptsize{$3$}}

\psfrag{10}{\scriptsize{\hspace{-1mm}$10$}}
\psfrag{-2}{\tiny{\hspace{-1mm}$-2$}}
\psfrag{-4}{\tiny{\hspace{-1mm}$-4$}}
\psfrag{-6}{\tiny{\hspace{-1mm}$-6$}}
\psfrag{-8}{\tiny{\hspace{-1mm}$-8$}}

\psfrag{e0}{\tiny{\hspace{-1mm}$0$}}

    \begin{center}
    \includegraphics[width=3.5in]{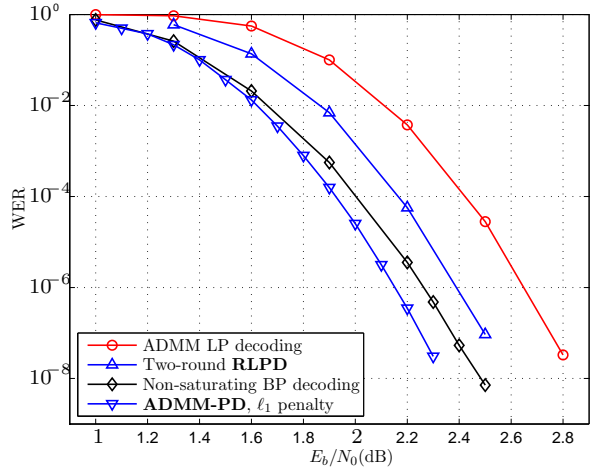}
    \end{center}
    \caption{Word-error-rate (WER) of the
      $[2640,1320]$ ``Margulis'' code plotted as a function of $E_b / N_0$ for the AWGN channel.}
    \label{fig.sim_margulis_rlp}
\end{figure}

We also study the impact of allowing multiple rounds for \pRLP. In Fig.~\ref{fig.sim_margulis_AWGN_multiple_rounds}, we plot the WER performance as a function of the number of rounds allowed for \pRLP. From this experiment, we conclude that by allowing multiple rounds for \pRLP, we can achieve better error performance. However the gain from every additional round of reweighting decreases as the number of rounds increases. But even with 10 rounds, the \pRLP cannot surpass BP decoding or \pPD, making it less favorable in scenarios where provable guarantees are not necessary.
\begin{figure}
\psfrag{&LPDecoder}{\scalebox{.6}{\pLP}}
\psfrag{&MultipleRoundsRLP}{\scalebox{.6}{\pRLP, $\alpha = 0.6$}}
\psfrag{&NonsatBP}{\scalebox{.6}{Non-saturating BP}}
\psfrag{&L1PD}{\scalebox{.6}{\pPLP, $\alpha = 0.6$}}

\psfrag{&G}{\scalebox{.6}{\hspace{-1cm} \# of rounds for \pRLP}}
\psfrag{&H}{\scalebox{.6}{\hspace{-1cm} word-error-rate (WER)}}
    \begin{center}
    \includegraphics[width=3.5in]{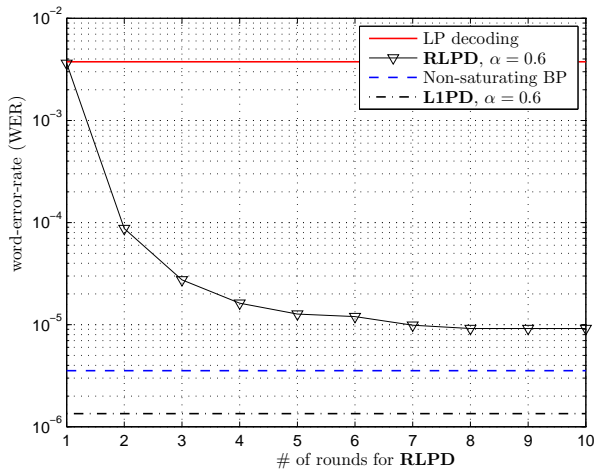}
    \end{center}
    \caption{Effects of multiple-round reweighting. The word-error-rate (WER) of the $[2640,1320]$ ``Margulis'' code for the AWGN channel is plotted as a function of \# of rounds of reweightings. $E_b/N_0 = 2.2\dB$.}
    \label{fig.sim_margulis_AWGN_multiple_rounds}
\end{figure}
\section{Conclusions}
\label{sec.conclusion}
In this paper, we introduce a class of high performing decoders for LDPC codes. The key idea is the
addition of a penalty term to the linear objective of LP
decoding with the intent of suppressing pseudocodewords.

We show that by constructing an objective that includes an 
$\ell_1$ or an $\ell_2$ penalty we are
able to decode LDPC codes in a way that combines the best
characteristics of BP and LP decoders. We summarize the 
advantages of the ADMM penalized decoders. First,
in our simulations, the decoders associated with both $\ell_1$ and $\ell_2$ penalties close the SNR gap
between LP and BP decoding. They either achieve (for the MacKay code)
or outperform (for the Margulis code) BP at low SNRs. Second, the ADMM penalized decoder has similar WER slope 
as LP decoding at high SNRs, steeper than BP decoding which suffers from an error floor. 
Third, the ADMM penalized decoder attains good error performance with no sacrifices in decoding speed. On the contrary, it outperforms both BP and LP decoding in terms of decoding speed in our software serial implementation. All these aspects make the ADMM penalized decoder a strong competitor in practice to both BP and LP decoding.

The objective function of the decoding problem is
non-convex. To analyze this non-convex decoder, we use the concept of 
instantons. We extend the instanton search algorithms in the literature to
the ADMM penalized decoder. Using the instanton search algorithm, we 
show that we can accurately predict the high SNR behavior of the ADMM penalized decoder.
In order to regain provable error correction guarantees similar to the LP decoder, we linearize the penalized objective thereby bringing the problem to the convex realm. When applied to bit flipping channels,
we show that this linearized problem has better theoretical guarantees 
when compared to the LP decoder. In addition, we show through simulation results
that it significantly outperforms LP decoding.

We identify several future directions of this decoder. We show in this paper that
decoding failures are, again, related to trapping sets. Therefore understanding the dynamics
of ADMM on trapping sets can be useful for better
predicting the high SNR performance of the decoder. This is left for future work.
Moreover, even for non-convex programs, it may be possible to prove convergence
under some special conditions. This question is open and is left for future work. 
Since this decoder achieves the best performance compared to both LP and BP decoding, it 
is natural to ask whether it can be efficiently implemented in hardware. This makes for
a third future direction. 
\section*{Appendix}
\section{Proof of theorem~\ref{thm.cw_independent} and corollary~\ref{crly.initialpoints}}
\subsection{Proof of theorem~\ref{thm.cw_independent}}
\label{appendix.cwindep_proof}
We first sketch the proof without proving a key lemma: Lemma~\ref{lemma.mapping}. Then we prove Lemma~\ref{lemma.mapping}.
\begin{proof}
We need to prove $\Pr[\error|0^n] = \Pr[\error|\bfc]$, where $\bfc$ is any non-zero codeword. Let
\begin{equation*}
\mcB(\bfc):=\{\bfy|\text{Decoder fails to recover }\bfc\text{ if }\bfy \text{ is received}\}
\end{equation*}
Then $\Pr[\error|\bfc] = \sum_{y\in \mcB(c)}\Pr[\bfy|\bfc]$. We show in Lemma~\ref{lemma.mapping} that for any codeword $\bfc$, there exists a one-to-one mapping from the received vector $\bfy$ to a vector $\bfy^0$, such that
the following two statements hold:
\begin{enumerate}
\item $\Pr[\bfy|\bfc] = \Pr[\bfy^0|0^n]$,
\item $\bfy \in \mcB(\bfc)$ if and only if $\bfy^0\in \mcB(0^n)$.
\end{enumerate}
Then 
\begin{align*}
\Pr[\error|\bfc] &= \sum_{\bfy\in \mcB(\bfc)}\Pr[\bfy|\bfc]\\
&= \sum_{\bfy^0\in \mcB(0^n)}\Pr[\bfy^0|\bfc] = \Pr[\error|0^n] 
\end{align*}
Lemma~\ref{lemma.mapping} explicitly describes the mapping that satisfies both two statements above and completes the proof.
\end{proof}

\begin{lemma}
\label{lemma.mapping}
Let $\bfc$, $\bfy$  and $\mcB(\bfc)$ be defined the same way as above. Let $\mcZ:\Sigma^n\mapsto\Sigma^n$ be a mapping and $\bfy^0 := \mcZ(\bfy)$ is defined by
\begin{equation*}
y^0_i = 
\begin{cases}
y_i  & \text{if $c_i = 0$,}\\
y_i^\prime &\text{if $c_i = 1$.}
\end{cases}
\end{equation*}
where $y_i^\prime$ is the symmetric symbol of $y_i$ with respect to the symmetric channel\footnote{Following the definition of a symmetric channel in~\cite[p.~178]{richardson2008modern}, $y_i^\prime = -y_i$ in general. For example in the AWGN channel with BPSK modulation, $y_i^\prime = -y_i$. However in the BSC, $y_i^\prime = 0$ if $y_i = 1$ and vice versa.}  $\Pr_{Y|X}$. Then 
\begin{enumerate}
\item $\Pr[\bfy|\bfc] = \Pr[\bfy^0|0^n]$,
\item $\bfy \in \mcB(\bfc)$ if and only if $\bfy^0\in \mcB(0^n)$.
\end{enumerate}
\end{lemma}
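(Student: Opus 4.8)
The plan is to prove the two conclusions separately: statement~1 is a direct consequence of channel symmetry, while statement~2 requires exhibiting a symmetry of the entire ADMM recursion run by Algorithm~\ref{Algorithm:ADMM}. For statement~1, I would factor the two likelihoods as $\Pr[\bfy|\bfc]=\prod_i W(y_i|c_i)$ and $\Pr[\bfy^0|0^n]=\prod_i W(y^0_i|0)$ and check equality coordinate-wise. Where $c_i=0$ we have $y^0_i=y_i$, so the factors coincide trivially; where $c_i=1$ we have $y^0_i=y'_i$, and output symmetry of the channel gives $W(y'_i|0)=W(y_i|1)$, so the factors again agree. Taking the product yields $\Pr[\bfy|\bfc]=\Pr[\bfy^0|0^n]$.

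The substance is statement~2, for which the plan is to run the decoder on $\bfy$ (trying to recover $\bfc$) and on $\bfy^0$ (trying to recover $0^n$) in lockstep and show the two iterate sequences are related by a fixed ``codeword-flip'' symmetry. Define $\sigma_{\bfc}$ on $[0,1]^n$ by $(\sigma_{\bfc}\bfx)_i=x_i$ if $c_i=0$ and $1-x_i$ if $c_i=1$, so that $\sigma_{\bfc}(\bfc)=0^n$. I would claim that throughout the iteration the $\bfy^0$-iterates equal the $\bfy$-iterates after applying $\sigma_{\bfc}$ to the $\bfx$- and $\bfz$-variables and negating the $\bflambda$-variables on the coordinates with $c_i=1$. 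Three facts make this transform consistent with the data: first, the LLR transforms as $\gamma^0_i=(-1)^{c_i}\gamma_i$ (again by output symmetry); second, property~(ii) of $g$ yields $g'(1-x)=-g'(x)$; third, $\Proj_{[0,1]}(1-v)=1-\Proj_{[0,1]}(v)$, and $\sigma_{\bfc}$ preserves distance to $0.5$, so the ``farthest root from $0.5$'' tie-break rule of the $\bfx$-update is equivariant. The initialization ($\bflambda_j=0$, $\bfz_j=\tfrac12\mathbf{1}$) is a fixed point of the transform, so the argument is an induction on the iteration index.

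Then I would verify each of the three updates is equivariant. The $\bfx$-update of \eqref{eq.admm_x_elementwise_update} and the $\bflambda$-update are handled by direct substitution using the three facts above; on the $c_i=1$ coordinates the factor $\sum_j 1=\degi$ is exactly what converts the $\bfy$-value $x_i$ into the $\bfy^0$-value $1-x_i$. I expect the $\bfz$-update to be the crux. Restricted to check $j$, the transform acts on the projection argument $\bfv_j=\bfP_j\bfx+\bflambda_j/\mu$ as the affine flip $S_j$ that complements precisely the coordinates $i\in\Ne_c(j)$ with $c_i=1$ (here the negation of $\bflambda_j$ combines with the complementation of $\bfx$ so that $\bfv_j$ undergoes exactly $S_j$). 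Because $\bfc$ satisfies check $j$, the number of complemented coordinates is even, so $S_j$ maps $\PP_{d_j}$ onto itself and is a Euclidean isometry; hence it commutes with the projection, $\Proj_{\PP_{d_j}}(S_j\bfv_j)=S_j\Proj_{\PP_{d_j}}(\bfv_j)$, giving the equivariance of the $\bfz$-update. Getting the $\bflambda$-sign conventions right and invoking the evenness-of-flips argument that licenses the projection identity are the two points that need the most care.

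Finally I would observe that the stopping test depends only on the norms $\|\bfP_j\bfx^k-\bfz^k_j\|_2$ and $\|\bfz^k_j-\bfz^{k-1}_j\|_2$, which $S_j$ preserves, so both runs halt at the same iteration. The induction then gives $\hat{\bfx}(\bfy^0)=\sigma_{\bfc}(\hat{\bfx}(\bfy))$ for the returned vectors, and since $\sigma_{\bfc}$ is a bijection with $\sigma_{\bfc}(\bfc)=0^n$, the decoder returns $\bfc$ on input $\bfy$ if and only if it returns $0^n$ on input $\bfy^0$. Equivalently, $\bfy\in\mcB(\bfc)$ if and only if $\bfy^0\in\mcB(0^n)$, which is statement~2 and completes the lemma.
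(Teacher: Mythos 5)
Your proposal is correct and follows essentially the same route as the paper: statement~1 coordinate-wise from channel symmetry, and statement~2 by induction showing the ADMM iterates for $\bfy$ and $\bfy^0$ stay related by the codeword flip (the paper's relative-vector map $\mcR_{\bfc}$), with $g'(1-x)=-g'(x)$ handling the $\bfx$-update, the even-flip symmetry of $\PP_{d_j}$ handling the $\bfz$-update (the paper proves the projection commutation by contradiction via Lemma~17 of~\cite{feldman2005using}, which is your isometry argument in disguise), and the flip-invariance of the residual norms forcing both runs to stop at the same iteration. Your sign bookkeeping for $\bflambda$ --- negated only on coordinates with $c_i=1$, unchanged where $c_i=0$ --- is in fact the precise version of the paper's blanket claim $\bflambda_j=-\bflambda_j^{0}$, whose verification in Lemma~\ref{lemma.quiviter} is likewise carried out only at the $c_i=1$ coordinates.
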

\begin{proof}
\begin{itemize}
\item Proof of statement (1):
\begin{align*}
\Pr[\bfy|\bfc] &= \prod_{i}\Pr[y_i|c_i]\\
&=\prod_{i:c_i = 0}\Pr[y_i|0]\prod_{i:c_i = 1}\Pr[y_i|1]\\
&=\prod_{i:c_i = 0}\Pr[y^0_i|0]\prod_{i:c_i = 1}\Pr[y_i^\prime|0]\\
&=\prod_{i}\Pr[y^0_i|0] = \Pr[\bfy^0|0^n]
\end{align*}

\item Proof of statement (2): We first define \textit{relative vector}. The definition here is slightly different from the definition in~\cite{feldman2005using}. 
\begin{definition}
The \textit{relative vector} of a vector $\bfa\in\mathbb{R}^\blocklength$ with respect to codeword $\bfc$ of length $\blocklength$ is denoted by $\bfa^r:=\mcR_{\bfc}(\bfa)$, which is obtained as
\begin{equation*}
\mcR_{\bfc}(\bfa)_i = 
\begin{cases}
a_i  & \text{if $c_i = 0$,}\\
1 - a_i &\text{if $c_i = 1$.}
\end{cases}
\end{equation*}
\end{definition}
One special yet useful case is $\mcR_{\bfc}(\bfc) = 0^n$. 

With a small abuse of notation, we allow operator $\mcR$ take replicas of ADMM formulation $\bfz_j$ as input. Since the entries in $\bfz_j$ are inherently associated with indices of the codeword $\bfc$, $\mcR_{\bfc}(\bfz_j)$ defines the relative vector of $\bfz_j$ with respect to the corresponding sub-vector (defined by the $j$-th check) of $\bfc$.

Now let $\hat{\bfc} := \mathscr{D}(\bfy)$ be the output of the decoder if $\bfy$ is received and $\hat{\bfc}^0 := \mathscr{D}(\bfy^0)$. We show in Lemma~\ref{lemma.equaldecoding} that $\hat{\bfc}^0 = \mcR_{\bfc}(\hat{\bfc})$. Then $\mathscr{D}(\bfy) = \bfc$ if and only if $\mathscr{D}(\bfy^0) =\mcR_{\bfc}(\bfc) = 0^n$, which implies the second statement. 
\end{itemize}
\end{proof}
\begin{lemma}
\label{lemma.quiviter}
In Algorithm~\ref{Algorithm:ADMM}, let  $\bfx^{(k)}$, $\bfz_j^{(k)}$ and $\bflambda_j^{(k)}$ be the vectors after the $k$-th iteration when decoding $\bfy$. Let $\bfx^{0,(k)}$, $\bfz_j^{0,(k)}$ and $\bflambda_j^{0,(k)}$ be the vectors after the $k$-th iteration when decoding $\bfy^0$. If $\bfx^{(k)} = \mcR_{\bfc}(\bfx^{0,(k)})$,  $\bfz_j^{(k)} = \mcR_{\bfc}(\bfz_j^{0,(k)})$ and $\bflambda_j^{(k)} = -\bflambda_j^{0,(k)}$ then $\bfx^{(k+1)} = \mcR_{\bfc}(\bfx^{0,(k+1)})$,  $\bfz_j^{(k+1)} = \mcR_{\bfc}(\bfz_j^{0,(k+1)})$ and $\bflambda_j^{(k+1)} = -\bflambda_j^{0,(k+1)}$.
\end{lemma}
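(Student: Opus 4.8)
The plan is to prove the statement as a single induction step, propagating the three relations through the three ADMM substeps in the order they are executed: first the coordinatewise $\bfx$-update, then the $\bfz$-update (projection onto each $\PP_{d_j}$), then the $\bflambda$-update. Everything rests on three elementary facts that I would establish up front. First, channel symmetry together with the definition $\bfy^0=\mcZ(\bfy)$ gives the per-coordinate LLR identity $\gamma_i=(-1)^{c_i}\gamma^0_i$, i.e.\ the log-likelihood ratio is negated exactly on the coordinates where $c_i=1$. Second, the penalty property $g(x)=g(1-x)$ differentiates to $g'(1-x)=-g'(x)$, so the penalty gradient is odd about $1/2$. Third, the relative-vector map $\mcR_{\bfc}$ acts on the coordinates of a check $j$ by fixing the entries with $c_i=0$ and sending $v_i\mapsto 1-v_i$ on the entries with $c_i=1$; this is an affine Euclidean isometry of $\mathbb{R}^{d_j}$.

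For the $\bfx$-update I would treat the two coordinate classes separately. Writing the stationarity equation $x_i\degi=\sum_j(\bfz_j^{(i)}-\bflambda_j^{(i)}/\mu)-\tfrac1\mu(\gamma_i+g'(x_i))$ for both runs and substituting the induction hypotheses, I would check that on a coordinate with $c_i=1$ the reflected value $1-x_i^{0,(k+1)}$ solves the equation for $\bfy$: the $1-(\cdot)$ flip in $\bfz$, the sign flip in $\bflambda^{(k)}$ and in $\gamma$, and the oddness $g'(1-x)=-g'(x)$ combine so that the constant $\degi=\sum_j 1$ absorbs the reflection, which is straightforward bookkeeping. On a coordinate with $c_i=0$ nothing flips and the two equations coincide. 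Two points need care: the algorithm selects the root \emph{farthest from} $1/2$, but reflection about $1/2$ preserves the distance to $1/2$, so the selected roots correspond; and $\Proj_{[0,1]}$ commutes with $x\mapsto 1-x$ because $[0,1]$ is symmetric about $1/2$. Hence $\bfx^{(k+1)}=\mcR_{\bfc}(\bfx^{0,(k+1)})$.

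The $\bfz$- and $\bflambda$-updates then follow. From $\bfv_j=\bfP_j\bfx+\bflambda_j/\mu$, the just-established $\bfx$-relation, and the $\bflambda^{(k)}$-relation, I would show $\bfv_j^{(k+1)}=\mcR_{\bfc}(\bfv_j^{0,(k+1)})$ on each check. Applying $\Proj_{\PP_{d_j}}$ and invoking the commutation lemma below gives $\bfz_j^{(k+1)}=\mcR_{\bfc}(\bfz_j^{0,(k+1)})$, and substituting the new primal relations into $\bflambda_j\leftarrow\bflambda_j+\mu(\bfP_j\bfx-\bfz_j)$ propagates the dual relation. One subtlety worth flagging: the multiplier of a constraint inherits the sign of that constraint under reflection, so the dual is negated \emph{precisely on the reflected ($c_i=1$) coordinates and left unchanged on the $c_i=0$ coordinates}; that is, the relation that actually closes the induction is the signed relative vector $\bflambda_j^{(i)}=(-1)^{c_i}\bflambda_j^{0,(i)}$, which is exactly what the $c_i=0$ case of the $\bfx$-update needs (a uniform negation of $\bflambda$ would break that case unless $\sum_j\bflambda_j^{(i)}=0$).

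The crux, and the one nonroutine step, is the projection/reflection commutation $\Proj_{\PP_{d_j}}(\mcR_{\bfc}(\bfv))=\mcR_{\bfc}(\Proj_{\PP_{d_j}}(\bfv))$. I would prove it from the fact that $\bfc$ is a codeword: the restriction $\bfP_j\bfc$ has even weight, so $S_j=\{i\in\Ne_c(j):c_i=1\}$ has even cardinality. Flipping an even number of coordinates sends even-weight binary vectors to even-weight binary vectors, hence permutes the vertices of $\PP_{d_j}$; since $\mcR_{\bfc}$ is affine, this gives $\mcR_{\bfc}(\PP_{d_j})=\PP_{d_j}$. Because $\mcR_{\bfc}$ is also an isometry, it commutes with the (unique) Euclidean projection onto this invariant convex set. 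This even-weight observation is the only place where the codeword structure of $\bfc$ enters, and it is where I expect the main work to lie.
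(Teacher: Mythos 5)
Your proof is correct and follows the same route as the paper's: propagate the three relations through the three ADMM substeps in execution order, using $g(x)=g(1-x)$ (hence $g'(1-x)=-g'(x)$) together with $\gamma_i=-\gamma_i^0$ on flipped coordinates for the $\bfx$-update, noting that the farthest-from-$0.5$ root selection and $\Proj_{[0,1]}$ are reflection-invariant, and handling the $\bfz$-update via an isometry-plus-invariance argument. One difference is cosmetic but worth noting: where the paper invokes Lemma~17 of~\cite{feldman2005using} to know that $\mcR_{\bfc}$ maps $\PP_{d_j}$ into itself and then argues by contradiction against the minimality of the projection, you prove the invariance from scratch --- $\bfP_j\bfc$ has even weight because $\bfc$ is a codeword, so reflecting that even-sized coordinate set permutes the even-weight vertices of $\PP_{d_j}$, and since $\mcR_{\bfc}$ is an affine isometry it commutes with the unique Euclidean projection onto the invariant convex set. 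This is the same mechanism as the paper's contradiction argument, just made self-contained and stated as a general commutation fact.

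The substantive point is your signed dual invariant $\bflambda_j^{(i)}=(-1)^{c_i}\bflambda_j^{0,(i)}$, which is a genuine correction to the lemma as stated. The paper's hypothesis $\bflambda_j^{(k)}=-\bflambda_j^{0,(k)}$ negates \emph{every} entry, and its proof declares it ``sufficient to verify indices $i$ where $c_i=1$.'' But on a coordinate with $c_i=0$ the stationarity equations for decoding $\bfy$ and $\bfy^0$ agree only if $\sum_j\lambda_j^{(i)}=\sum_j\lambda_j^{0,(i)}$, which under uniform negation forces those entries to vanish; and even in that degenerate case the conclusion fails at step $k+1$, since the common increment $\mu\left((\bfP_j\bfx^{\new})^{(i)}-\bfz_j^{\new,(i)}\right)$ on unflipped coordinates is generally nonzero mid-convergence. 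The induction closes exactly with your componentwise relation (unchanged where $c_i=0$, negated where $c_i=1$), and the paper's downstream use in Lemma~\ref{lemma.equaldecoding} is unaffected because the base case initializes $\bflambda_j=\bflambda_j^{0}=\bfzero$, where the two forms coincide. You caught a real sign imprecision that the paper's proof glosses over; your version is the one that should be recorded.
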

\begin{proof} 
We drop the iterate $(k)$ for simplicity and denote $\bfx^{\new}$ , $\bfz_j^{\new}$ and $\bflambda^{\new}_j$ to be the updated vector at $(k+1)$-th iteration. 
Let $\bfx = \mcR_{\bfc}(\bfx^{0})$,  $\bfz_j = \mcR_{\bfc}(\bfz_j^{0})$ and $\bflambda_j = -\bflambda_j^{0}$. Also let $\bfgamma^0$ be the log-likelihood ratio for the received vector $\bfy^0$.
At the $\bfx$-update, it is sufficient to verify indices $i$ where $c_i = 1$. From Algorithm~\ref{Algorithm:ADMM}, $x_i^{\new}$ is the root with the maximum distance from $0.5$ of the following equation
\begin{equation}
\label{eq.x_update_original}
x = \frac{1}{\degi}\left(\sum_j \left(\bfz_j^{(i)} - \frac{\lmbpara_j^{(i)}}{\penpara}\right) - \frac{1}{\penpara}(\gamma_i + g^\prime(x))\right).
\end{equation}
By the symmetry property of the penalty function $g$, $1-x_i^{\new}$ is the root of equation
\begin{equation}
\label{eq.x_update_relative}
x = \frac{1}{\degi}\left(\sum_j\left(1 - \bfz_j^{(i)} + \frac{\lmbpara_j^{(i)}}{\penpara}\right) + \frac{1}{\penpara}(\gamma_i - g^\prime(x))\right).
\end{equation}
Since the channel is symmetric, $\bfgamma = - \bfgamma^0$. Then~\eqref{eq.x_update_relative} can be rewritten as 
\begin{equation}
\label{eq.x_update_zero}
x = \frac{1}{\degi}\left(\sum_j\left(\bfz_j^{0,(i)} - \frac{\lmbpara_j^{0,(i)}}{\penpara}\right) - \frac{1}{\penpara}(\gamma_i^{0} + g^\prime(x))\right).
\end{equation}
Suppose there exists another root $t$ for equation~\eqref{eq.x_update_zero} such that $|t - 0.5| > |(1-x_i^{\new}) - 0.5|$ then $1-t$ is a root for equation~\eqref{eq.x_update_original}, contradicting with the fact that $x_i^{\new}$ is the root of equation~\eqref{eq.x_update_original} that has \textbf{the largest} distance from 0.5. Therefore 
\begin{equation*}
x_i^{\new} = 1 - x_i^{\new,0}
\end{equation*}
Let $\bfv_j  = \bfP_j \bfx + \bflambda_j /\mu$ and $\bfv^0_j  = \bfP_j \bfx^0 + \bflambda^0_j /\mu$, then $\bfv^{0,(i)}_j = 1 - (\bfP_j \bfx)^{(i)} - \bflambda^{(i)}_j /\mu = 1 - \bfv^{(i)}_j$. Now $ \bfz^{\new}_j = \Proj_{\PP_d} (\bfv_j) $ and $ \bfz^{0,\new}_j = \Proj_{\PP_d} (\bfv^0_j) $. This means that we need to show $ \bfz^{\new}_j = \mcR_{\bfc}(\bfz^{0,\new}_j)$. In order to show this, we use Lemma 17 in~\cite{feldman2005using}. 

Suppose the the opposite is true, that is $\bfz^{0,\new}_j$ is the projection of $\bfv^0_j$ onto the parity polytope, but $\mcR_{\bfc}(\bfz^{0,\new}_j)$ is not the projection of $\bfv_j$. By Lemma 17 in~\cite{feldman2005using}, $\mcR_{\bfc}(\bfz^{0,\new}_j)$ is inside the parity polytope. Suppose the real projection is $\bfz_j^\prime$, then $\mcR_{\bfc}(\bfz_j^\prime)$ is also in the parity polytope and 
\begin{align*}
\Vert\mcR_{\bfc}(\bfz_j^\prime) -\bfv^0_j \Vert_2 &= \Vert\bfz_j^\prime -\bfv_j \Vert_2 \\
&< \Vert\mcR_{\bfc}(\bfz^{0,\new}_j) -\bfv_j \Vert_2 =\Vert\bfz^{0,\new}_j -\bfv^0_j \Vert_2
\end{align*}

Now $\mcR_{\bfc}(\bfz_j^\prime)$ is closer to $\bfv_j^0$ than $\bfz^{0,\new}_j$ is. This contradicts the assumption that $\bfz^{0,\new}_j$ is the projection.

It remains to verify one more equality:
\begin{align*}
\bflambda^{\new,(i)}_j &=  \bflambda^{(i)}_j + \mu \left( (\bfP_j \bfx^{\new})^{(i)} - \bfz_j^{\new,(i)}\right) \\
&= -\bflambda^{0,(i)}_j + \mu \left(\left(1 -  (\bfP_j \bfx^{0,\new})^{(i)}\right)\right.\\
 &\quad -\left.\left( 1-\bfz_j^{0,\new,(i)}\right)\right)\\
&= -\left(\bflambda^{0,(i)}_j + \mu \left( (\bfP_j \bfx^{0,\new})^{(i)} - \bfz_j^{0,\new,(i)}\right) \right)\\
&= -\bflambda^{0,\new,(i)}_j
\end{align*}
\end{proof}
\begin{lemma}
\label{lemma.equaldecoding}
Let $\hat{\bfc} = \mathscr{D}(\bfy)$ be the output of the decoder if $\bfy$ is received and $\hat{\bfc}^0 = \mathscr{D}(\bfy^0)$. Then $\hat{\bfc}^0 = \mcR_{\bfc}(\hat{\bfc})$.
\end{lemma}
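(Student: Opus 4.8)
The plan is to prove the claim by induction on the ADMM iteration index, using Lemma~\ref{lemma.quiviter} as the inductive step and then arguing that the two decoding runs (on $\bfy$ and on $\bfy^0$) terminate at the same iteration, so that their returned iterates correspond. Concretely, I would establish that for every $k$ the invariant
\[
\bfx^{(k)} = \mcR_{\bfc}(\bfx^{0,(k)}), \quad \bfz_j^{(k)} = \mcR_{\bfc}(\bfz_j^{0,(k)}), \quad \bflambda_j^{(k)} = -\bflambda_j^{0,(k)} \quad \forall j,
\]
holds, where the superscript $0$ denotes the run on $\bfy^0 = \mcZ(\bfy)$.

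For the base case I would inspect the initialization in Algorithm~\ref{Algorithm:ADMM}. Both runs set $\bflambda_j = \bfzero$, so $\bflambda_j^{(0)} = -\bflambda_j^{0,(0)}$ trivially, and both set $\bfz_j$ to the all-$0.5$ vector. The key observation is that $\mcR_{\bfc}$ fixes the all-$0.5$ vector, since $1 - 0.5 = 0.5$ on the coordinates where $c_i = 1$; hence $\bfz_j^{(0)} = \mcR_{\bfc}(\bfz_j^{0,(0)})$. The $\bfx$-part of the invariant at $k=0$ is harmless: $\bfx$ is recomputed before it is ever read, and the $\bfx$-update in the proof of Lemma~\ref{lemma.quiviter} depends only on the current $\bfz_j$ and $\bflambda_j$, so one may either initialize $\bfx^{(0)}$ to the ($\mcR_{\bfc}$-fixed) all-$0.5$ vector or simply ignore it. The inductive step is then exactly Lemma~\ref{lemma.quiviter}, which I would invoke verbatim.

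The step I expect to require the most care is showing that both runs halt after the same number of iterations. Here I would exploit that $\mcR_{\bfc}$ acts coordinatewise either as the identity or as $a \mapsto 1-a$, both of which are isometries of $\mathbb{R}$; therefore $\mcR_{\bfc}$ preserves Euclidean distances. Applying this to the invariant yields
\[
\|\bfP_j \bfx^{(k)} - \bfz_j^{(k)}\|_2 = \|\bfP_j \bfx^{0,(k)} - \bfz_j^{0,(k)}\|_2, \qquad \|\bfz_j^{(k)} - \bfz_j^{(k-1)}\|_2 = \|\bfz_j^{0,(k)} - \bfz_j^{0,(k-1)}\|_2,
\]
so the two quantities checked in the stopping criterion of Algorithm~\ref{Algorithm:ADMM} agree term by term across the two runs. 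Consequently both runs terminate at the same iterate $K$.

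Finally I would translate the invariant into the stated conclusion. At termination the decoder returns $\hat{\bfc} = \bfx^{(K)}$ and $\hat{\bfc}^0 = \bfx^{0,(K)}$, so the invariant gives $\hat{\bfc} = \mcR_{\bfc}(\hat{\bfc}^0)$. Since $\mcR_{\bfc}$ is an involution (indeed $1-(1-a)=a$, so $\mcR_{\bfc}\circ\mcR_{\bfc} = \mathrm{id}$), this is equivalent to $\hat{\bfc}^0 = \mcR_{\bfc}(\hat{\bfc})$, which is the claim.
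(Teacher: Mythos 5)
Your proposal is correct and takes essentially the same route as the paper's proof: induction via Lemma~\ref{lemma.quiviter} from the initialization $\bfz_j = \mcR_{\bfc}(\bfz_j^0) = 0.5$, $\bflambda_j = -\bflambda_j^0 = \bfzero$, followed by the observation that the stopping-criterion quantities coincide for the two runs so both terminate at the same iterate. Your ``isometry'' phrasing is just the paper's explicit coordinatewise computation $\left\vert(\bfP_j\bfx)_i-(\bfz_j)_i\right\vert = \left\vert(\bfP_j\bfx^0)_i-(\bfz_j^0)_i\right\vert$ stated abstractly, and your extra care about the $\bfx$-component at $k=0$ and the involution step are harmless refinements of the same argument.
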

\begin{proof}
We drop the iteration number $k$ for simplicity. Note that for all $j$, $\bflambda_j$ and $\bfz_j$ are initialized such that  $\bfz_j = \mcR_{\bfc}(\bfz_j^0) = 0.5$ and $\bflambda_j = -\bflambda_j^0 = 0$. By induction on Lemma~\ref{lemma.quiviter}, Lemma~\ref{lemma.quiviter} holds for all iterations. It remains to prove that both decoding processes stop at the same iteration. When determining the termination criterion, 
\begin{align*}
 \left\vert\left(\bfP_j \bfx \right)_i - \left(\bfz_j\right)_i\right\vert &=  \left\vert\left(1 - \left(\bfP_j \bfx \right)_i\right) - \left(1 - \left(\bfz_j\right)_i\right)\right\vert\\
&=\left\vert \left(\bfP_j \bfx \right)^r_i - \left(\bfz_j\right)^r_i\right\vert = \left\vert \left(\bfP_j \bfx^0 \right)_i - \left(\bfz^0_j\right)_i\right\vert
\end{align*}
Therefore $$\sum_j { \| \bfP_j \bfx - \bfz_j \|^2_{2} } = \sum_j { \| \bfP_j \bfx^0 - \bfz_j^0 \|^2_{2} }$$ and $$\sum_j { \| \bfz^{k}_j - \bfz^{k-1}_j \|^2_{2} } = \sum_j { \| \bfz^{k,0}_j - \bfz^{k-1,0}_j \|^2_{2} } $$ for every iteration. This implies decoding $\bfy$ and $\bfy^0$ will terminate at the same iterate. The output of the decoder, $\mathscr{D}(\bfy)$ and $\mathscr{D}(\bfy^0)$, will then be relative vectors according to Lemma~\ref{lemma.quiviter}. Therefore $\hat{\bfc}^0 = \mcR_{\bfc}(\hat{\bfc})$.
\end{proof}

\subsection{Proof of corollary~\ref{crly.initialpoints}}
\label{appendix.cwindep_proof_corollary}
\begin{enumerate}
\item Proof for the first case.

Lemma~\ref{lemma.equaldecoding} still holds if the initialized vectors $\bfz_j^{(0)} = \mcR_{\bfc}(\bfz_j^{0,(0)})$ and $\bflambda_j^{(0)} = -\bflambda_j^{0,(0)} = 0$. For any initial vector for decoding $\bfy^0$, there exist a initial vector for decoding $\bfy$ that has the same decoding failure probability. Since $\bfz_j$ is initialized uniformly, the total probability of error is the same.

\item Proof for the second case.

\begin{lemma}
\label{lemma.lprelative}
In LP decoding~\eqref{eq.lpfeldman}, let $\hat{\bfc} = \mathscr{D}(\bfy)$ be the output of the decoder if $\bfy$ is received and $\hat{\bfc}^0 = \mathscr{D}(\bfy^0)$. Then $\hat{\bfc}^0 = \mcR_{\bfc}(\hat{\bfc})$.
\end{lemma}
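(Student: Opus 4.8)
The plan is to prove Lemma~\ref{lemma.lprelative} by the same change-of-variables argument used for the penalized decoder, now specialized to the purely linear objective (i.e., the $\alpha = 0$ instance of \pPD). Concretely, I would show that the relative-vector map $\mcR_{\bfc}$ is a symmetry of the LP \eqref{eq.lpfeldman}: it maps the feasible region $\fundpoly$ bijectively onto itself and changes the linear objective only by an additive constant. Since LP decoding returns the minimizer over $\fundpoly$, these two invariances force the minimizer for the input $\bfy^0$ to be the $\mcR_{\bfc}$-image of the minimizer for $\bfy$, which is exactly $\hat{\bfc}^0 = \mcR_{\bfc}(\hat{\bfc})$.

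First I would establish feasibility invariance: $\bfx \in \fundpoly$ if and only if $\mcR_{\bfc}(\bfx) \in \fundpoly$. Because $\mcR_{\bfc}$ acts coordinate-wise and commutes with each selection operator $\bfP_j$ (in the sense made precise in the appendix, where $\mcR_{\bfc}$ is applied to the sub-vector defined by check $j$), it suffices to check, for every check $j$, that flipping the coordinates of $\bfP_j\bfx$ indexed by $\Ne_c(j) \cap \mathrm{supp}(\bfc)$ maps $\PP_{d_j}$ onto itself. Since $\bfc$ is a codeword, its restriction to check $j$ has even weight, so this is a flip of an even number of coordinates; flipping an even set of coordinates sends even-parity binary vectors to even-parity binary vectors, hence preserves their convex hull $\PP_{d_j}$. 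This is precisely the local symmetry of the parity polytope recorded in Lemma 17 of \cite{feldman2005using} (the same fact invoked in the $\bfz$-update step of Lemma~\ref{lemma.quiviter}). As $\mcR_{\bfc}$ is its own inverse, it is an involutive bijection of $\fundpoly$.

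Next I would verify objective invariance. By channel symmetry the LLRs satisfy $\gamma^0_i = \gamma_i$ when $c_i = 0$ and $\gamma^0_i = -\gamma_i$ when $c_i = 1$. A direct coordinate-wise computation then gives
\begin{equation*}
\bfgamma^{0T}\mcR_{\bfc}(\bfx) = \sum_{i:c_i=0}\gamma_i x_i + \sum_{i:c_i=1}(-\gamma_i)(1-x_i) = \bfgamma^T\bfx - \sum_{i:c_i=1}\gamma_i,
\end{equation*}
so $\bfgamma^{0T}\mcR_{\bfc}(\bfx)$ and $\bfgamma^T\bfx$ differ by a constant independent of $\bfx$. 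Combining this with the feasibility bijection, the substitution $\bfx \mapsto \mcR_{\bfc}(\bfx)$ carries the problem $\min\{\bfgamma^T\bfx : \bfx \in \fundpoly\}$ to $\min\{\bfgamma^{0T}\bfz : \bfz \in \fundpoly\}$ while preserving the ordering of all feasible points. Hence a vector minimizes the $\bfy^0$-problem if and only if its $\mcR_{\bfc}$-image minimizes the $\bfy$-problem, giving $\hat{\bfc}^0 = \mcR_{\bfc}(\hat{\bfc})$.

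The main obstacle is the feasibility step — the claim that $\mcR_{\bfc}$ is an automorphism of $\fundpoly$ — since it is where the algebraic structure of the code enters (even-weight restrictions of codewords) rather than a generic symmetry of linear programs; everything else is bookkeeping. A secondary technical point is uniqueness of the LP minimizer: when \eqref{eq.lpfeldman} admits several optimal vertices, the statement should be read as an equality of optimal \emph{sets} under $\mcR_{\bfc}$, and I would assume the decoder breaks ties by a rule that is itself $\mcR_{\bfc}$-equivariant (for the generic channel outputs of interest the minimizer is unique, so this does not affect the codeword-independence conclusion drawn in Corollary~\ref{crly.initialpoints}).
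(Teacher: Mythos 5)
Your proposal is correct and follows essentially the same route as the paper: the paper's proof rests on exactly your two invariances, citing Lemma 17 of \cite{feldman2005using} for the fact that $\mcR_{\bfc}$ maps the fundamental polytope onto itself and Lemma 18 of \cite{feldman2005using} for the constant shift of the objective (which you instead verify directly via $\bfgamma^{0T}\mcR_{\bfc}(\bfx) = \bfgamma^T\bfx - \sum_{i:c_i=1}\gamma_i$), then concludes by a contradiction argument equivalent to your order-preserving-bijection step. If anything you are slightly more careful: the paper's inference that $\hat{\bfc}^0 \neq \mcR_{\bfc}(\hat{\bfc})$ implies the strict inequality $\sum_i\gamma_i^0\hat{c}^0_{i}<\sum_i\gamma_i^0\hat{c}_{r,i}$ tacitly assumes the LP minimizer is unique, which is precisely the tie-breaking caveat you flag explicitly.
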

\begin{proof} Let $\hat{\bfc}_r := \mcR_{\bfc}(\hat{\bfc})$.
By Lemma 18 in~\cite{feldman2005using}, 
\begin{equation*}
\sum_i\gamma_i\hat{c}_i - \sum_i\gamma_iy_i = \sum_i\gamma_i^0\hat{c}_{r,i} - \sum_i \gamma_i^00.
\end{equation*}
Suppose that the opposite is true, that is $\hat{\bfc}^0 \neq \hat{\bfc}_r$, then
\begin{equation*}
\sum_i\gamma_i^0\hat{c}^0_{i}<\sum_i\gamma_i^0\hat{c}_{r,i} = \sum_i\gamma_i\hat{c}_i - \sum_i\gamma_iy_i.
\end{equation*}
Let $\hat{\bfc}^0_r := \mcR_{\bfc}(\hat{\bfc}^0)$. Then $\hat{\bfc}^0_r$ is feasible by Lemma 17 in~\cite{feldman2005using}. Then
\begin{align*}
\sum_i\gamma_i^0\hat{c}^0_{r,i}-\sum_i\gamma_iy_i &= \sum_i\gamma_i^0\hat{c}^0_{i} - \sum_i \gamma_i^00\\
&<\sum_i\gamma_i\hat{c}_i - \sum_i\gamma_iy_i,
\end{align*}
which implies that 
\begin{align*}
\sum_i\gamma_i^0\hat{c}^0_{r,i}&<\sum_i\gamma_i\hat{c}_i
\end{align*}
This means that $\hat{\bfc}$ is not the minimizer, contradicting the fact that it is the output of the LP decoder.
\end{proof}

With lemma~\ref{lemma.lprelative}, the initialized vectors satisfies $\bfz_j^{(0)} = \mcR_{\bfc}(\bfz_j^{0,(0)})$ and $\bflambda_j^{(0)} = -\bflambda_j^{0,(0)} = 0$. Thus by lemma~\ref{lemma.equaldecoding}, Theorem~\ref{thm.cw_independent} still holds.
\end{enumerate}

\section{Proof of Theorem~\ref{thm.reweightedlpcwindependence}}
\label{appendix.cwindep_proof_reweightedlp}
\begin{proof}
There are two cases: (i) the decoder outputs an integral solution in the first LP and terminates or (ii) the decoder proceeds to the reweighted LP step. For the first case, the decoding error probability is independent of the transmitted codeword by standard LP decoding results~\cite{feldman2005using}. For the second case, we only need to show that $\bfgamma^\prime = -\bfgamma^{\prime,0}$. Then by Lemma~\ref{lemma.lprelative}, we still get relative solutions, and therefore are equivalent. Let $\bfxpseudo_r := \mcR_{\bfc}(\bfxpseudo)$, then
\begin{align*}
\gamma^\prime_i &= \gamma_i + g^\prime(\xpseudo_i)\\
&= -\gamma_i^0 - g^\prime(\xpseudo_{r,i})\\
&= -\gamma_i^{\prime,0}
\end{align*}
where the second equality comes from that $g$ is symmetric at $x = 0.5$, so that $g^\prime(x) = - g^\prime(1-x)$.
\end{proof}

\end{document}